\documentclass[10pt,reqno]{amsart}
%%Basic preamble
\usepackage[utf8]{inputenc}
\usepackage[T1]{fontenc}
\usepackage{lmodern}
\usepackage{graphicx}
\usepackage{pgfplots}
\usepackage{tikz}
\usepackage{graphicx}
\usepackage{xcolor}
\usepackage{amsmath,amssymb,mathtools}
\usepackage{multicol}
\usepackage{hyperref}
\usepackage{cleveref}
\usepackage{dsfont}
\usepackage{lmodern}
\usepackage{mathrsfs}%
\usepackage{anyfontsize}
\usepackage{bbm}

%\usepackage{showlabels}

%%New environments

\newtheorem{definition}{Definition}[section]
\newtheorem{theorem}{Theorem}[section]
\newtheorem{lemma}{Lemma}[section]

\newtheorem{proposition}{Proposition}[section]
\newtheorem{remark}{Remark}[section]
\allowdisplaybreaks
\numberwithin{equation}{section}
\newtheorem{assumption}{Assumption}

\newcommand{\beq}{\begin{equation}}
	\newcommand{\eeq}{\end{equation}}
\newcommand{\be}{\begin{eqnarray}}
	\newcommand{\ee}{\end{eqnarray}}
\newcommand{\noi}{\noindent}
\newcommand{\ba}{\begin{array}}
	\newcommand{\ea}{\end{array}}
\newcommand{\dis}{\displaystyle}

\newcommand{\sgn}{\textbf{sgn}}
\newcommand{\sat}{\operatorname{\mathbf{sat}}}
\newcommand{\eps}{\varepsilon}
\newcommand{\inte}{\operatorname{int}}

\title[Saturated stabilization of a fourth-order nonlinear parabolic equation]{Feedback stabilization of some fourth-order nonlinear parabolic equations with saturated controls}

\author[P. Guzm\'an]{Patricio Guzm\'an}
\address[P. Guzm\'an]{Departamento de Matem\'atica, Universidad Técnica Federico Santa Mar\'ia, Valpara\'iso, Chile.}
\email{patricio.guzmanm@usm.cl}
%%%%%%%%%%%
\author[F. Labra]{Felipe Labra}
\address[F. Labra]{Facultad de Matem\'aticas, Pontificia Universidad Cat\'olica de Chile, Santiago, Chile.}
\email{flabrai@estudiante.uc.cl}
%%%%%%%%%%%%
\author[H. Parada]{Hugo Parada}
\address[H. Parada]{Universit\'e de Lorraine, CNRS, Inria, IECL, Nancy, France.}
\email{hugo.parada@inria.fr}

\thanks{P. Guzm\'an received partial financial support from FONDECYT 11240290 and  H. Parada is currently supported by the Agence Nationale de la Recherche through the QuBiCCS project (ANR-24-CE40-3008).}

%
% French Dimensions
% -----------------
\voffset -1in
\topmargin 17mm
\headheight 4mm
\headsep 10mm
\textheight 230mm
\textwidth 168mm
\footskip 14mm
\hoffset -1in
\evensidemargin 21mm
\oddsidemargin 21mm
\pgfplotsset{compat=1.18}

\begin{document}

\begin{abstract}
In this work, we analyze the internal and boundary stabilization of the Cahn-Hilliard and Kuramoto-Sivashinsky equations under saturated feedback control. We conduct our study through the spectral analysis of the associated linear operator. We identify a finite number of eigenvalues related to the unstable part of the system and then design a stabilization strategy based on modal decomposition, linear matrix inequalities (LMIs), and geometric conditions on the saturation function. Local exponential stabilization in $H^{2}$ is established.
\end{abstract}

\maketitle
%\thispagestyle{empty}
%\pagestyle{empty}
%%%%%%%%%%%%%%%%%%%%%%%%%%%%%%%%%%%%%%%%%%%%%%%%%%%%%%%%%%%%%%%%%%%%%%%%%%%%%%%%

\setlength\parindent{0pt} %indentacion
%%%%%%%%%%%%%%%%%%%%%%%%%
\section{Introduction}
\label{sec: intro}
We address the problem of feedback stabilization for two prototype fourth-order nonlinear parabolic equations: the Cahn–Hilliard (CH) and Kuramoto–Sivashinsky (KS) equations, both considered under saturated control actions. These models arise in different physical contexts but share similar mathematical structures and stabilization challenges. Regarding each model, we briefly mention: The CH equation was introduced in \cite{cahn,cahn-hilliard,cahn-hilliard-2} as a continuum model describing phase separation in an isotropic binary mixture when it is sufficiently cooled. Typical examples include binary alloys and binary solutions. Further details on the model can be found in \cite{elliot,novick,m2019} and the references therein. The KS equation was proposed in \cite{Kuramoto_1975,Kuramoto_1976} as a model for phase turbulence in reaction–diffusion systems, and later in \cite{ms,s} to describe the propagation of plane flames. It also captures the onset of instabilities in several physical and chemical systems; see, for instance, \cite{Chen_1986,Hooper_1996}.
\medskip

Let us now introduce the mathematical model.
Consider $L>0$ as the length of the spatial domain and $\lambda>0$ as the anti-diffusion parameter. Define the nonlinear term
\[\mathcal{N}(y)=\delta y\partial_{x}y-\nu\partial_{x}^{2}(y^{3}),\]
which allows us to recover CH term $\partial_{x}^{2}(y^{3})$ and the KS term $y\partial_{x}y$. The main system studied in this work is therefore given by

\beq\label{eq: KS}
\begin{cases}
\partial_{t}y+\lambda \partial_{x}^{2}y+\partial_{x}^{4}y+\mathcal{N}(y)=\mathcal{F}(y),& t>0, \ x \in (0,L), \\[1ex]
y(0,x)=y_{0}(x), & x \in (0,L),
\end{cases}\eeq
where $\mathcal{F}(y)$ denotes a suitable feedback control law with saturation.
Note that the system involves two nonlinear components $\mathcal{N}$ and $\mathcal{F}$, the latter arising from the saturation effect. The above system is complemented with one of the following three sets of boundary conditions
\begin{subequations}
\label{eq: bc KS}
\begin{align}
\label{eq: clamped}
&y(t,0)=y(t,L)=\partial_x y(t,0)=\partial_x y(t,L)=0,\\[1ex]
\label{eq: hinged}
&y(t,0)=y(t,L)=\partial_x^2 y(t,0)=\partial_x^2 y(t,L)=0,\\[1ex]
\label{eq: CH bc}
&\partial_x y(t,0)=\partial_x y(t,L)=\partial_x^3 y(t,0)=\partial_x^3 y(t,L)=0.
\end{align}
\end{subequations}

In the literature, the first set of boundary conditions \eqref{eq: clamped} is commonly referred to as \emph{clamped}, while \eqref{eq: hinged} is known as \emph{hinged}. Both are standard choices when analyzing KS equation. For CH equation, \eqref{eq: CH bc} is the boundary condition usually adopted, as it ensures mass conservation in the model describing phase separation of an isotropic binary mixture under negligible cooling effects and zero mass flux at the boundaries. An important part of the analysis will be devoted to the \emph{linear} version of \eqref{eq: KS}, which is

\beq\label{eq: LKS}
\begin{cases}
\partial_{t}y+\lambda \partial_{x}^{2}y+\partial_{x}^{4}y=\mathcal{F}(y),& t>0, \ x \in (0,L), \\[1ex]
y(0,x)=y_{0}(x), & x \in (0,L).
\end{cases}\eeq
The main idea relies on the spectral analysis of the associated spatial linear operator. This approach is commonly referred to in the literature as \emph{spectral reduction} or \emph{modal decomposition}. Typically, this framework is applied to Sturm–Liouville operators to stabilize the finite-dimensional unstable subspace. Such a strategy has been successfully employed in several contexts, for instance, for the semilinear one-dimensional heat equation \cite{coron2004global}, the multidimensional heat equation with disturbances \cite{guzman2020rapid}, CH equation \cite{bcgm2017,munteanu2019,cgm2021}, KS equation \cite{guzman2019stabilization,katz2021regional,lhachemi2023local}, and other related cases \cite{coron2006global,lhachemi2020feedback,katz2020constructive}. For the linear KS equation \eqref{eq: LKS} with clamped boundary conditions, \cite{Cerpa_2010} established null controllability in the case $L=1$ if and only if $\lambda \notin \mathcal{N}$, where
\beq\label{eq: critical}\mathcal{N}=\left\{\pi^{2}(k^{2}+\ell^{2}),\, \, k,\ell\in\mathbb{N}, \ \ 1\leq k<\ell, \ \ \text{ $k$ and $l$ have the same parity}\right\}.\eeq
In the same work, when $\lambda \notin \mathcal{N}$, an explicit feedback law was designed to ensure exponential stabilization in the $L^{2}(0,1)$ norm. Exponential stability for the hinged case was later analyzed in \cite{katz2021regional}. Additional controllability results were obtained in \cite{Cerpa_2016}, and null controllability on star-shaped networks was addressed in \cite{cazacu2018null}. From the stabilization viewpoint, several related studies have treated KS equation, including \cite{Christofides,Christofides_2000,liu2001stability,guzman2019stabilization,lhachemi2023local}. Control problems for CH type systems have been extensively investigated in the literature, mainly in the context of optimal and distributed control formulations \cite{Frigeri2016,Colli2018,Colli2023}. However, feedback stabilization of such models has received comparatively little attention.
Among the few contributions in this direction, the work \cite{bcgm2017} studied the local stabilization of a nonisothermal CH system around equilibrium states using a Riccati-based feedback law constructed from the unstable spectral modes of the associated linearized operator.
These results were further extended in \cite{marinoschi2018internal}. We also mention the recent works \cite{guzman2020local,azmi2023saturated}, which address local controllability and trajectory stabilization, respectively.
\medskip

In practical applications, actuator saturation is an unavoidable feature arising from physical or technological limitations. The analysis of such constraints in infinite-dimensional systems has attracted growing attention in recent years. For instance, \cite{Prieur_2016} investigated wave equations with both distributed and boundary saturated feedback laws, while more recent works have developed systematic Lyapunov-based approaches for saturated boundary stabilization of PDEs \cite{lhachemi2022local,lhachemi2022nonlinear}. Similar ideas have also been explored for dispersive models, such as the Korteweg–de Vries equation \cite{marx2017global,paradakdvsat,parada2022stability,parada2023stabilization}. To the best of our knowledge, however, no contribution has yet addressed the stabilization of fourth-order parabolic equations under input saturation. The closest related studies concern linear reaction–diffusion systems \cite{Mironchenko_2020} and the nonlinear Schlögl model \cite{azmi2023saturated}. For a recent and comprehensive review of saturation-based stabilization methods for PDEs, we refer to \cite{lhachemi2023saturated}. In this work, we consider the following saturation map. Let $\ell>0$ be the saturation level. We define the pointwise saturation map,
\begin{equation}
\label{eq: satl}
\sat(s)=\begin{cases}
s, &\text{ if } |s|\leq \ell,\\
\ell\sgn(s), &\text{ if }  |s|\geq \ell.
\end{cases}
\end{equation}
This saturation was considered in \cite{Mironchenko_2020,marx2017global}. The main contribution of this work is the analysis of CH and KS equations under saturated feedback, acting either internally or at the boundary, including stability results in $L^{2}$ and $H^{2}$ under mild controllability assumptions on the finite-dimensional unstable subspace. Furthermore, we establish the existence and uniqueness of solutions for the corresponding closed-loop systems. Most of the analysis is carried out in detail for the case of internal saturation, while the boundary case follows by adapting the same arguments. To the best of our knowledge, this is the first work addressing these equations under saturated control actions. The rest of the paper is organized as follows. In \Cref{sec: pre}, we introduce the notation and some preliminary results. \Cref{sec: caso memoria} is devoted to $L^{2}$ stabilization for the internal feedback case, while \Cref{sec: hg nl} establishes $H^{2}$ stabilization, which allows us to address the fully nonlinear problem. In \Cref{sec: bs}, we briefly explain how the same ideas can be extended to boundary feedback. Finally, some remarks and concluding perspectives are presented in \Cref{sec: conclusion}.

\section{Preliminaries}
\label{sec: pre}
We begin this section by introducing the linear operator associated with the spatial part of the system and recall some of its fundamental spectral properties:

\[\mathcal{A}: y \in D(\mathcal{A})\subset L^{2}(0,L) \longrightarrow -y^{\prime\prime\prime\prime}-\lambda y^{\prime\prime} \in L^{2}(0,L),\]
\[D(\mathcal{A})=\left\{y \in L^{2}(0,L), \, \mathcal{A}y\in L^{2}(0,L) \ \text{ and }  y \text{ satisfies } \eqref{eq: bc KS}\right\}.\]

In what follows, whenever no ambiguity arises, we omit the explicit mention of the boundary conditions. Hence, unless otherwise stated, all results apply to any of the admissible boundary conditions in \eqref{eq: bc KS}. By standard arguments in spectral theory, the operator $\mathcal{A}$ is self-adjoint and has a compact resolvent. Consequently, its spectrum $\sigma(\mathcal{A})$ consists only of real eigenvalues $\{\sigma_{k}\}_{k\in\mathbb{N}}$ satisfying $\sigma_{k}\to -\infty$ as $k\to\infty$. The associated eigenfunctions $\{e_{k}\}_{k\in\mathbb{N}}\subset D(\mathcal{A})$ form an orthonormal basis of $L^{2}(0,L)$.
\begin{remark}
For the \textit{hinged} boundary conditions  \eqref{eq: hinged}, the spectral analysis is straightforward.
Indeed, in this case $\mathcal{A}=-\mathcal{L}^{2}-\lambda\mathcal{L}$, where $\mathcal{L}$ denotes the Laplacian with Dirichlet boundary conditions. Thus, the eigenvalues and eigenfunctions are given by:
\[\sigma_{k}=\left(\frac{k\pi}{L}\right)^{2}\left(\lambda-\left(\frac{k\pi}{L}\right)^{2}\right), \qquad  e_{k}(x)=\sin{\left(\frac{k\pi x}{L}\right)}.\] 
A similar characterization holds when considering the Neumann Laplacian and boundary conditions \eqref{eq: CH bc}.
\end{remark}
Throughout this work, we identify matrices $M_{m\times n}$ with $\mathbb{R}^{m\times n}$ and are denoted by bold capital letters, $\mathbf{A}\in\mathbb{R}^{n\times n}$, $\mathbf{B}\in \mathbb{R}^{n\times m}$. The identity matrix is denoted by
$\mathbf{I}_n\in \mathbb{R}^{n\times n}$. We denote by $\mathbb{S}^n$ the space of $n\times n$ symmetric matrices. If $\mathbf{A}\in\mathbb{S}^n$, then $\mathbf{A}$ is orthonormal diagonalizable, and we can order its eigenvalues $\lambda_1 \geq \lambda_2 \geq ... \geq \lambda_n.$ In particular, we have
\[\det \mathbf{A} = \prod_{k=1}^n \lambda_k, \qquad \operatorname{tr} \mathbf{A} = \sum_{k=1}^n \lambda_k.\]
We consider $\mathbb{S}^n$ endowed with the \textit{Frobenius norm}:
\[\|\mathbf{A}\| = \left(\sum_{k=1}^n \lambda_k^2\right)^{1/2}.\]
For $z\in \mathbb{R}^n$, we consider the Euclidean norm $\left|\,\cdot\,\right|$. In that case, we have the following expressions for $\lambda_1$ and $\lambda_n$:
\[\lambda_1 = \sup_{z\not=0}\frac{z^\top \mathbf{A}z}{\left|z\right|^2}, \qquad \lambda_n = \inf_{z\not=0}\frac{z^\top \mathbf{A}z}{\left|z\right|^2}, \quad \text{ and } \quad \lambda_n \left|z\right|^2\leq z^\top \mathbf{A}z \leq \lambda_1 \left|z\right|^2, \qquad \forall z\in \mathbb{R}^n.\]
A matrix $\mathbf{A}\in \mathbb{S}^n$ is positive semi-definite if $z^\top \mathbf{A} z \geq 0$, $\forall z\in \mathbb{R}^n$. Given $\mathbf{A, B}\in \mathbb{S}^n$, we use the \textit{Loewner partial order} \cite[Section 1.10]{pukelsheim-1993} to write $\mathbf{A} \succeq \mathbf{B} \ \Longleftrightarrow \  \mathbf{A-B}\succeq 0$, which is equivalent to ask the eigenvalues of $\mathbf{A}-\mathbf{B}$ be non-negative. If $\mathbf{A}\in \mathbb{S}^n$ is positive semi-definite, we denote by $\mathbf{A}\succeq 0$. We say that $\mathbf{A}\in \mathbb{S}^n$ is positive definite, if
$z^\top \mathbf{A} z > 0, \ \forall z\in \mathbb{R}^n\setminus\{0\}$. Take $\mathbf{B}\in \mathbb{S}^n$ positive definite, then we have
$\mathbf{A} \succ \mathbf{B} \ \Longleftrightarrow \ \mathbf{A-B}\succ 0$.
We say that $\mathbf{A}\in \mathbb{R}^{n\times n}$ is Hurwitz if all its eigenvalues have negative real part.
\medskip

\noi  We consider the space $L^{2}(0,L)$ of square integrable functions on $(0,L)$, endowed with the inner product $\langle f,g\rangle_{L^{2}(0,L)}=\dis\int_{0}^{L}f(x)g(x)dx$ with associated norm, denoted by $\|\cdot\|_{L^{2}(0,L)}$. For an integer $s\geq1$, the $s$-order Sobolev space is denoted by $H^{s}(0,L)$ and is endowed with its usual norm, denoted by $\|\cdot\|_{H^{s}(0,L)}$. Let $X_{n} \subset L^{2}(0,L)$ be the subspace spanned by $(e_{j})_{j=1}^{n}$ and $\Pi_{n}$ be the orthogonal projection onto $X_{n}$, that is

\[\Pi_{n}y(t,\cdot)=\sum_{j=1}^{n}y_{j}(t,\cdot)e_{j}, \quad y_{j}(t,\cdot)=\langle y(t,\cdot),e_{j}\rangle_{L^{2}(0,L)}. \]

Let $X_{n}^{\perp}$ denotes the orthogonal complement of $X_{n}$ in $L^{2}(0,L)$ and $\iota: \mathbb{R}^{n} \rightarrow X_{n}$ be the isomorphism defined by $\iota(e^{j})=e_{j}$, where $(e^{j})_{j=1}^{n}$ is the canonical basis of $\mathbb{R}^{n}$. We recall that
\[\ell^{2}(\mathbb{N},\mathbb{R})=\left\{ (x_{k})_{k \in \mathbb{N}} \in \mathbb{R}^{\mathbb{N}} \ : \ \sum_{k=1}^{\infty}|x_{k}|^{2} < \infty \right\}. \]
We are going to use the isometric representation of $L^{2}(0,L)$ as $\ell_{2}(\mathbb{N},\mathbb{R})$ obtained by the isomorphism induced by $e_{j}(\cdot)\rightarrow e^{j}$, where $(e^{j})_{j \in \mathbb{N}}$ is the standard basis vectors in $\ell_{2}(\mathbb{N},\mathbb{R})$ and endowed with the standard $\ell_2$ norm. In this sense, we can write $L^{2}(0,L)=X_{n}\oplus X_{n}^{\perp}$, where $\oplus$ is the orthogonal sum of subspaces. We denote $\ell_{2}(\mathbb{N},\mathbb{R})=\mathbb{R}^{n}\oplus \ell_{2,j>n}$, where we identify $\mathbb{R}^{n}$ with the sequences with support in $\{1,\cdots,n\}$ and $\ell_{2,j>n}$ is the set of sequences in $\ell_{2}(\mathbb{N},\mathbb{R})$ which are $0$ in the first $n$ entries.
\medskip

To conclude this section, we state a Gronwall-type inequality that will be useful in the proof of stability for the fully nonlinear system. Let $b,k:[0,\infty)\to\mathbb{R}$ be two continuous functions. Let $p\ge0$ be a constant with $p\neq1$. Let $v(t)$ be a positive differentiable function such that

\begin{equation}
\label{ineq_1}
v'(t)\leq b(t)v(t)+k(t)v(t)^{p},~t\in[0,\infty).
\end{equation}
The following result can be found in \cite[Lemma 4.1]{bs1992}.
\begin{lemma}
\label{ineq} Let $q=1-p$ and consider,
\begin{equation}
\label{ineq_2}
w(t)=v(0)^{q}+q\int_{0}^{t}k(s)\mbox{exp}\left(-q\int_{0}^{s}b(\tau)\,d\tau\right)ds.
\end{equation}
\noindent If $w(t)>0$ for all $t\in[0,\infty)$, then
\begin{equation}
\label{ineq_3}
v(t)\leq\mbox{exp}\left(\int_{0}^{t}b(s)\,ds\right)w(t)^{1/q},~t\in[0,\infty).
\end{equation}
\end{lemma}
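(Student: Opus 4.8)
The plan is to reduce the Bernoulli-type differential inequality \eqref{ineq_1} to a \emph{linear} one via the classical substitution $u(t)=v(t)^{q}=v(t)^{1-p}$. This is licit because $v$ is positive and differentiable, so $u$ is differentiable with $u'=q\,v^{-p}v'$. Multiplying \eqref{ineq_1} by the factor $q\,v^{-p}$ and using $v^{-p}(bv+kv^{p})=bv^{1-p}+k=bu+k$, I would obtain the linear differential inequality
\begin{equation*}
u'(t)\leq q\,b(t)\,u(t)+q\,k(t)\qquad(q>0),
\end{equation*}
while for $q<0$ the multiplication is by the negative quantity $q\,v^{-p}$, which reverses the inequality and yields
\begin{equation*}
u'(t)\geq q\,b(t)\,u(t)+q\,k(t)\qquad(q<0).
\end{equation*}

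Next I would integrate these linear inequalities using the integrating factor $\mu(t)=\exp\bigl(-q\int_{0}^{t}b(\tau)\,d\tau\bigr)>0$. Since $\mu'=-q\,b\,\mu$, one has the exact identity $(\mu u)'=\mu\,(u'-q\,b\,u)$, so the two cases produce $(\mu u)'\leq q\,k\,\mu$ for $q>0$ and $(\mu u)'\geq q\,k\,\mu$ for $q<0$. Integrating from $0$ to $t$, using $\mu(0)=1$ and $u(0)=v(0)^{q}$, and comparing with the definition \eqref{ineq_2}, I would arrive exactly at $\mu(t)\,u(t)\leq w(t)$ when $q>0$ and $\mu(t)\,u(t)\geq w(t)$ when $q<0$; equivalently $u(t)\leq w(t)\,\exp\bigl(q\int_{0}^{t}b\bigr)$ or $u(t)\geq w(t)\,\exp\bigl(q\int_{0}^{t}b\bigr)$, respectively.

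Finally I would invert the substitution by raising to the power $1/q$, and it is here that the hypothesis $w(t)>0$ becomes essential: it guarantees that the right-hand side remains positive, so that the map $x\mapsto x^{1/q}$ may be applied to a positive quantity. For $q>0$ this map is increasing, and the upper bound on $u=v^{q}$ transfers directly to the claimed bound \eqref{ineq_3} on $v$. For $q<0$ the map $x\mapsto x^{1/q}$ is decreasing, so the \emph{lower} bound on $u$ is converted into an \emph{upper} bound on $v$, again giving \eqref{ineq_3}. The only point demanding genuine care — and the main obstacle — is precisely this bookkeeping of inequality directions: in the regime $p>1$ the direction is reversed twice, once when multiplying by $q\,v^{-p}<0$ and once when applying the decreasing power $1/q$, and one must check that these two reversals cancel so that the conclusion coincides in both regimes. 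Continuity of $b$ and $k$ ensures that every integral is well defined and that $w\in C^{1}$, so no further regularity subtleties arise.
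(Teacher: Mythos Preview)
Your argument is correct: the Bernoulli substitution $u=v^{q}$ reduces the inequality to a linear one, and your careful case analysis (the two sign reversals when $q<0$ cancelling out) is exactly right. The paper does not supply its own proof of this lemma but merely cites \cite[Lemma 4.1]{bs1992}, so there is no comparison to make; your self-contained justification is a welcome addition.
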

%%%%%%%%%%%%
\section{Internal stabilization with saturated control}
\label{sec: caso memoria}
In this section, we design an internal feedback law, ensuring exponential stabilization of the linearized system under input saturation. The nonlinear effects will be treated later as perturbations of this linear dynamics in \Cref{sec: hg nl}. Inspired by the framework introduced in \cite{Mironchenko_2020}, we consider the controlled linear system
\beq\label{eq: main lineal}
\begin{cases}
\partial_{t}y+\lambda \partial_{x}^{2}y+\partial_{x}^{4}y=\dis\sum_{k=1}^{m}b_{k}(x)\sat(u_{k}(t)),& t>0, \ x \in (0,L), \\
y(0,x)=y_{0}(x), & x \in (0,L).
\end{cases}\eeq
Here $u: (0,\infty) \rightarrow \mathbb{R}^{m}$ denotes the control vector $u(t)=\begin{pmatrix}
  u_{1}(t),\hdots,u_{m}(t)
\end{pmatrix}^{\top}$,
and the vector function
$\mathbf{b}(x)=\begin{pmatrix}b_{1}(x),\dots,b_{m}(x)
\end{pmatrix}$, where $b_{k} \in L^{2}(0,L)$ describes the localization of the $k-$component of the control $u_{k}\in \mathbb{R}$. The saturation map acts component-wise, i.e.,
\[\sat(u(t))=\begin{pmatrix}
  \sat(u_{1}(t)),\hdots,\sat(u_{m}(t))
\end{pmatrix}^{\top}.\]
Formally, $y(t,\cdot)\in D(\mathcal{A})$ can be written as 
\[y(t,\cdot)=\sum_{j=1}^{\infty}y_{j}(t)e_{j}(\cdot),\]
where $y_{j}(t)$ are the associated Fourier components given by $\langle y(t,\cdot),e_{j}(\cdot)\rangle_{L^{2}(0,L)}$. Similarly, since $b_{k} \in L^{2}(0,L)$ for $k=1,\dots,m$
\[b_{k}(\cdot)=\sum_{j=1}^{\infty}b_{jk}e_{j}(\cdot), \quad b_{jk}=\langle b_{k}(\cdot),e_{j}(\cdot)\rangle_{L^{2}(0,L)}, \quad \forall j\in\mathbb{N}, \, \forall k=1,\dots,m.\]
Now, observe that \eqref{eq: main lineal} can be written as
\begin{equation}
\label{eq: y ope}
\partial_{t}y=\mathcal{A}y+\dis\sum_{k=1}^{m}b_{k}(x)\sat(u_{k}(t)).
\end{equation}
Using that $e_{j}$ are eigenfunctions of $\mathcal{A}$ we obtain the following infinite-dimensional system of ODEs
\begin{equation}
\label{eq: inf ode}
\dot{y}_{j}(t)=\sigma_{j}y_{j}(t)+\mathbf{b}_{j}\sat(u(t)), \qquad \forall j \in \mathbb{N},
\end{equation}
where $\mathbf{b}_{j}=\begin{pmatrix}
b_{j1}\ \cdots \ b_{jm}
\end{pmatrix} \in \mathbb{R}^{1\times m}$. By the asymptotic behaviour of the eigenvalues, we can choose $n \in \mathbb{N}$ as the number of non-negative eigenvalues and select $\eta>0$ such that
\beq\label{eq: neg eig}\sigma_{j}<-\eta<0, \qquad \forall j\geq n+1.\eeq
We then consider the projection of $y$ onto the unstable space. Let $z$, $\mathbf{A}$, and $\mathbf{B}$ be defined as follows

\beq\label{eq: matrices}z:=\begin{pmatrix}y_{1}\\\vdots\\y_{n}\end{pmatrix}, \qquad \mathbf{A}:= \begin{pmatrix}\sigma_{1}& \dots &0\\
\vdots&\ddots&\vdots\\
0&\dots&\sigma_{n}\end{pmatrix}, \qquad \mathbf{B}:=\begin{pmatrix}
b_{11}&\dots &b_{1m}\\
\vdots&\ddots&\vdots\\
b_{n1}&\dots&b_{nm}\end{pmatrix}. \eeq

Then, the dynamics of the first $n$ Fourier modes are governed by

\beq\label{eq: main z}
\dot{z}(t)=\mathbf{A}z(t)+\mathbf{B}\sat(u(t)).\eeq
We make the following assumption
\begin{assumption}
\label{assum: kalman}
The pair $(\mathbf{A},\mathbf{B})$ is stabilizable, i.e. there exists $\mathbf{K} \in \mathbb{R}^{m\times n}$, such that $\mathbf{A}+\mathbf{BK}$ is Hurwitz.
\end{assumption}
This assumption can be verified in many practical cases (see the discussion in \Cref{sec: par A-B}). In \cite[Example 1.1]{tarbouriech2011stability}, can be observed that even if $\mathbf{A}+\mathbf{BK}$ is Hurwitz, the presence of saturation in the feedback system $\dot{z}(t)=\mathbf{A}z(t)+\mathbf{B}\sat(\mathbf{K}z)$ may lead to the appearance of new equilibrium points and even diverging trajectories.
\begin{remark}
Note that even though the unstable part of the system \eqref{eq: main lineal} is finite-dimensional, the control acts on all the Fourier modes of the solution.
\end{remark}

We now aim to show that the exponential stability of \eqref{eq: inf ode} follows from the exponential stabilization of \eqref{eq: main z}. We look for the control $u$ in the form $u=\mathbf{K}z(t)$, and define the matrices $\mathbf{K}_{j}:=\mathbf{K}\iota^{-1}e_{j}$, $j=1,\dots,n$ and $\mathbf{K}:=(\mathbf{K}_{1},\dots,\mathbf{K}_{n})\in \mathbb{R}^{m\times n}$, with this choice, we obtain
\[\begin{aligned}
u(t)&=\mathbf{K}\iota^{-1}(\Pi_{n}y(\cdot,t))\\
&=\sum_{j=1}^{n}y_{j}\mathbf{K}\iota^{-1}e_{j}(\cdot)\\
&=\sum_{j=1}^{n}y_{j}\mathbf{K}_{j}(\cdot)\\
&=\mathbf{K}z(t).
\end{aligned}\]
With this feedback law, system \eqref{eq: inf ode} becomes the following infinite-dimensional ODE in $\ell_2(\mathbb{N},\mathbb{R})$
\begin{equation}
\label{eq: inf ode con}
\dot{y}_{j}(t)=\sigma_{j}y_{j}(t)+\mathbf{b}_{j}\sat(\mathbf{K}z(t)), \qquad \forall j \in \mathbb{N},
\end{equation}
and can be split as the finite-dimensional subsystem (the projection onto the unstable space)
\begin{equation}
\label{eq: z sat}
\dot{z}(t)=\mathbf{A}z(t)+\mathbf{B}\sat(\mathbf{K}z(t)),
\end{equation}
and the infinite-dimensional subsystem (the projection onto the stable space)
\begin{equation}
\label{eq: inf ode con stable}
\dot{y}_{j}(t)=\sigma_{j}y_{j}(t)+\mathbf{b}_{j}\sat(\mathbf{K}z(t)), \qquad  j\geq n+1.
\end{equation}
In this direction, \eqref{eq: inf ode con} is decomposed into a cascade system. The strategy is first stabilize the finite-dimensional part \eqref{eq: main z} and then use this information in the infinite-dimensional part \eqref{eq: inf ode con stable}.
\begin{remark}
System \eqref{eq: inf ode con} can also be written as
\[\dot{y}_{j}(t)=\sigma_{j}y_{j}(t)+\mathbf{b}_{j}\sat(\mathbf{K}\iota^{-1}(\Pi_{n}y)),  \qquad \forall j \in \mathbb{N}.\]
\end{remark}
Before proceeding with the proof of exponential stability, we recall some definitions concerning the region of attraction and local exponential stability.
\begin{definition}[Region of attraction] Assume that $\mathbf{K}$ is chosen such that $0$ is a locally asymptotically stable equilibrium of \eqref{eq: z sat}. We say that $S\subset \mathbb{R}^{n}$ is a region of attraction of $0$ if
\begin{enumerate}
    \item $0 \in \inte(S)$;
    \item for any $z_{0} \in S$ the corresponding solution of \eqref{eq: main z} satisfies $z(t,z_{0}) \rightarrow 0$ as $t \rightarrow \infty$;
    \item $S$ is forward invariant, i.e., for any $z_{0} \in S$ it holds that $z(t,z_{0}) \in S$ for all $t\geq 0$.
\end{enumerate}
By considering the partial order induced by set inclusion, the largest set satisfying the above properties is called the maximal region of attraction
\end{definition}

\begin{definition} System \eqref{eq: z sat} is called a locally exponentially in $0$ with region of attraction $S$, if the following conditions  hold:
\begin{enumerate}
    \item there exist $\varepsilon, M, \alpha > 0$ such that for all initial conditions with $|z_0|\le\varepsilon$, it holds
    \beq\label{eq: stb finsys}|z(t,z_{0})|\leq Me^{-\alpha t}|z_{0}|, \quad \forall t\geq 0.\eeq
    \item $\overline{B_{\eps}(0)}\subset S$ and $S$ is a region of attraction of \eqref{eq: z sat}.
\end{enumerate}
\end{definition}
The above definitions can be analogously extended to system \eqref{eq: inf ode con stable} and our PDE model \eqref{eq: main lineal}. With these tools at hand, we are now in position to state our main stability result of this section: consider the closed-loop system
\beq\label{eq: LKS closed loop}
\begin{cases}
\partial_{t}y+\lambda \partial_{x}^{2}y+\partial_{x}^{4}y=\dis\sum_{k=1}^{m}b_{k}(x)\sat\left(\left(\mathbf{K}\iota^{-1}(\Pi_{n}y(\cdot,t))\right)\right)_{k},& t>0, \ x \in (0,L), \\
y(0,x)=y_{0}(x), & x \in (0,L).
\end{cases}\eeq
\begin{theorem}
\label{th: sta lineal}
Let $\mathbf{K}$ be as in \Cref{assum: kalman}. For all $T>0$ and $y_{0}\in H^{2}(0,L)$, there exists a unique solution $y \in C([0,T];H^{2}(0,L))\cap L^{2}(0,T;H^{4}(0,L))$ of the closed-loop system \eqref{eq: LKS closed loop}. Moreover, this system is locally exponentially stable at $0$ with a region of attraction $\iota(\mathcal{R})\times X_{n}^{\perp}$ in the $L^{2}-$norm, where
\[\mathcal{R}:= \left\{z\in \mathbb{R}^{n} \ : \ z^{\top}\mathbf{P}z \leq 1 \right\}, \quad \text{ for some } \mathbf{P}\in\mathbb{R}^{n\times n}. \]
\end{theorem}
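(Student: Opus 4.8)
The plan is to exploit the cascade structure already exhibited in \eqref{eq: z sat}--\eqref{eq: inf ode con stable}: the finite-dimensional unstable block $z$ evolves autonomously under the feedback, while the infinite stable tail is merely \emph{driven} by $z$. I would therefore organize the argument into three parts --- well-posedness, stability of the finite-dimensional block, and propagation of the decay to the tail --- and then reassemble the $L^{2}$ estimate.

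For well-posedness, I would first observe that the feedback map $y\mapsto\sum_{k=1}^{m}b_{k}(\cdot)\,\sat\bigl((\mathbf{K}\iota^{-1}\Pi_{n}y)_{k}\bigr)$ is globally Lipschitz from $L^{2}(0,L)$ into itself: $\sat$ is $1$-Lipschitz and acts componentwise, $\Pi_{n}$ is an orthogonal projection, $\mathbf{K}\iota^{-1}$ is bounded, and each $b_{k}\in L^{2}(0,L)$. Since $\mathcal{A}$ is self-adjoint, bounded above, and has compact resolvent, it generates an analytic semigroup on $L^{2}(0,L)$; hence \eqref{eq: LKS closed loop} is a semilinear parabolic problem with a globally Lipschitz nonlinearity, and the variation-of-constants formula combined with a Banach fixed-point argument yields a unique global mild solution for every $y_{0}\in L^{2}(0,L)$. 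For $y_{0}\in H^{2}(0,L)$, analyticity together with parabolic maximal $L^{2}$-regularity --- using that $H^{2}(0,L)$ with the boundary conditions of $D(\mathcal{A})$ coincides, up to a spectral shift, with the domain of the square root of $-\mathcal{A}$ --- upgrades the mild solution to the regularity class $y\in C([0,T];H^{2}(0,L))\cap L^{2}(0,T;H^{4}(0,L))$.

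The core of the argument is the stability of the finite-dimensional block \eqref{eq: z sat}, where the LMI/sector machinery of \cite{tarbouriech2011stability} enters. Writing $\sat(\mathbf{K}z)=\mathbf{K}z-\varphi(\mathbf{K}z)$ with $\varphi$ the componentwise deadzone, and introducing an auxiliary matrix $\mathbf{G}$ and a diagonal $\mathbf{T}\succ0$, the generalized sector condition $\varphi(\mathbf{K}z)^{\top}\mathbf{T}\bigl(\varphi(\mathbf{K}z)-\mathbf{G}z\bigr)\le0$ holds on the region where $|((\mathbf{K}-\mathbf{G})z)_{k}|\le\ell$ for all $k=1,\dots,m$. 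Taking the Lyapunov candidate $V(z)=z^{\top}\mathbf{P}z$ with $\mathbf{P}\succ0$ and imposing a Lyapunov inequality for $\mathbf{A}+\mathbf{B}\mathbf{K}$ coupled to the sector term through an $S$-procedure, one obtains $\dot V(z)\le-2\alpha V(z)$ for all $z$ in the sublevel set $\mathcal{R}=\{z:z^{\top}\mathbf{P}z\le1\}$, provided an additional LMI encodes the inclusion of $\mathcal{R}$ in the sector region. Feasibility of this coupled LMI system is guaranteed by \Cref{assum: kalman}. Consequently $\mathcal{R}$ is forward invariant, and integrating the differential inequality gives $|z(t,z_{0})|\le M e^{-\alpha t}|z_{0}|$ for all $z_{0}\in\mathcal{R}$, establishing \eqref{eq: stb finsys}.

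It then remains to propagate this decay to \eqref{eq: inf ode con stable} and assemble the full estimate. Using $|\sat(s)|\le|s|$ componentwise, the feedback obeys $|\sat(\mathbf{K}z(s))|\le\|\mathbf{K}\|\,|z(s)|\le\|\mathbf{K}\|M e^{-\alpha s}|z_{0}|$. Solving each stable mode by Duhamel, squaring, summing over $j\ge n+1$, and using $\sum_{j}|\mathbf{b}_{j}|^{2}=\|\mathbf{b}\|_{L^{2}}^{2}<\infty$ together with $\sigma_{j}<-\eta$ from \eqref{eq: neg eig}, a standard convolution estimate gives exponential decay of the tail $\sum_{j>n}|y_{j}(t)|^{2}$ at rate $\min(\eta,\alpha)$. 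Since $\|y(t)\|_{L^{2}}^{2}=|z(t)|^{2}+\sum_{j>n}|y_{j}(t)|^{2}$, combining the two rates yields local exponential stability in the $L^{2}$-norm with region of attraction $\iota(\mathcal{R})\times X_{n}^{\perp}$, as claimed. The main obstacle is the finite-dimensional step: producing $\mathbf{P}$ (and the auxiliary $\mathbf{G},\mathbf{T}$) that solve the coupled LMIs \emph{and} certify the inclusion $\mathcal{R}\subset$ sector region, so that $\mathcal{R}$ is simultaneously a valid Lyapunov sublevel set and a guaranteed region of attraction; the tail estimate and the final assembly are comparatively routine.
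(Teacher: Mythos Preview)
Your proposal is correct and follows essentially the same route as the paper: the LMI/sector-condition argument for the finite-dimensional block \eqref{eq: z sat} is exactly the content of Proposition~\ref{prop: region fed}, and your Duhamel-plus-summation step for the tail modes is precisely Proposition~\ref{prop: sta Lw}. The only minor difference is in the well-posedness part: the paper argues by a concrete fixed-point in the space $B_{T}=C([0,T];H^{2})\cap L^{2}(0,T;H^{4})$ using the linear estimate of Proposition~\ref{prop: wp LKS f}, whereas you invoke analyticity and maximal regularity more abstractly --- both lead to the same conclusion, and for the linear closed loop (globally Lipschitz feedback) either route is straightforward.
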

The proof of \Cref{th: sta lineal} is divided into two parts: in \Cref{sec: well posedness}, we address the local-in-time well-posedness using fixed-point arguments, \Cref{sec: global wp} is devoted to the global in time well-posedness and in \Cref{sec: stability}, we establish the exponential stability of the closed-loop system.
%%%%%%%%%%%%%%%%%%%%
%%%%%%%%%%%%%%%%%
\subsection{Local well-posedness}
\label{sec: well posedness}
In this section, we study the well-posedness of the closed-loop system \eqref{eq: LKS closed loop}. We start by considering the following linear system with a source term:
\begin{equation}
\label{eq: LKS f}
\begin{cases}
\partial_{t}y+\partial_{x}^{4}y=f,& t\in(0,T), \ x \in (0,L), \\
y(0,x)=y_{0}(x), & x \in (0,L).
\end{cases}
\end{equation}
Let $B_T:= C([0,T];H^{2}(0,L))\cap L^{2}(0,T;H^{4}(0,L))$, endowed with the norm
\[\|f\|_{B_{T}}=\left(\|f\|_{C([0,T];H^{2}(0,L))}^{2}+\|f\|_{L^{2}(0,T;H^{4}(0,L))}^{2}\right)^{1/2}.\]
With some slight modifications of \cite[Proposition 1]{melendez2013lipschitz}, we have
\begin{proposition}
\label{prop: wp LKS f}
Let $T>0$, $y_{0}\in H^{2}(0,L)$ and $f\in L^{2}(0,T;L^{2}(0,L))$, then equation \eqref{eq: LKS f} has a unique (mild) solution $y \in B_{T}$ and there exists $C$ independent of $T$, such that
\beq\label{eq: est f}\|y\|_{B_{T}}\leq C\sqrt{1+T^{2}}\left(\|y_{0}\|_{H^{2}(0,L)}+\|f\|_{L^{2}((0,T);L^{2}(0,L))}\right).\eeq
Moreover, if $y_{0}\in D(\mathcal{A})$ and $f\in C^{1}([0,T];L^{2}(0,L))$, then  $y$ is a strong solution and $y \in C([0,T];D(\mathcal{A}))\cap C^{1}([0,T];L^{2}(0,L))$.
\end{proposition}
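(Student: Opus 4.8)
The plan is to represent \eqref{eq: LKS f} through the analytic semigroup generated by the fourth-order operator $\mathcal{A}_{0}:=-\partial_x^4$ with the boundary conditions \eqref{eq: bc KS}, and to extract both regularity estimates modewise from the spectral decomposition of $\mathcal{A}_{0}$. First I would observe that, exactly as for $\mathcal{A}$, the operator $\mathcal{A}_{0}$ with domain $D(\mathcal{A}_{0})=D(\mathcal{A})=\{y\in H^{4}(0,L):y\text{ satisfies }\eqref{eq: bc KS}\}$ is self-adjoint with compact resolvent, and integration by parts using any of the three boundary conditions yields $\langle\mathcal{A}_{0}y,y\rangle_{L^{2}(0,L)}=-\|\partial_x^{2}y\|_{L^{2}(0,L)}^{2}\le0$. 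Hence $\mathcal{A}_{0}$ is non-positive and generates an analytic semigroup of contractions $\{S(t)\}_{t\ge0}$ on $L^{2}(0,L)$. Let $\{\phi_k\}_{k\in\mathbb{N}}$ be an associated orthonormal eigenbasis with $\mathcal{A}_{0}\phi_k=-\mu_k\phi_k$ and $0\le\mu_1\le\mu_2\le\cdots\to\infty$. Elliptic regularity up to the boundary and interpolation give the norm equivalences $\|y\|_{H^{2}(0,L)}^{2}\simeq\sum_k(1+\mu_k)|y_k|^{2}$ and $\|y\|_{H^{4}(0,L)}^{2}\simeq\sum_k(1+\mu_k^{2})|y_k|^{2}$, where $y_k=\langle y,\phi_k\rangle_{L^{2}(0,L)}$; these are precisely what turn Sobolev bounds into summable spectral series.

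Next I would define the mild solution by the Duhamel formula $y(t)=S(t)y_{0}+\int_{0}^{t}S(t-s)f(s)\,ds$, which at the level of Fourier coefficients reads $y_k(t)=e^{-\mu_k t}y_{0,k}+\int_{0}^{t}e^{-\mu_k(t-s)}f_k(s)\,ds$. The top-order parts of the $B_T$ norm are controlled uniformly in $T$. For $\|\mathcal{A}_{0}y\|_{L^{2}(0,T;L^{2}(0,L))}$, the homogeneous contribution is $\sum_k\mu_k^{2}|y_{0,k}|^{2}\int_{0}^{T}e^{-2\mu_k t}\,dt\le\tfrac12\sum_k\mu_k|y_{0,k}|^{2}\lesssim\|y_{0}\|_{H^{2}(0,L)}^{2}$, while the inhomogeneous contribution is handled by writing $\mu_k y_k^{\mathrm{inh}}$ as the convolution of the kernel $t\mapsto\mu_k e^{-\mu_k t}\mathbbm{1}_{\{t>0\}}$, whose $L^{1}(0,\infty)$-norm equals $1$, with $f_k$; Young's convolution inequality then gives $\int_{0}^{T}\mu_k^{2}|y_k^{\mathrm{inh}}(t)|^{2}\,dt\le\|f_k\|_{L^{2}(0,T)}^{2}$, and summation over $k$ yields the maximal-regularity bound $\lesssim\|f\|_{L^{2}((0,T);L^{2}(0,L))}$ with constant independent of $T$. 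A Cauchy--Schwarz estimate $|y_k^{\mathrm{inh}}(t)|^{2}\le(\int_{0}^{t}e^{-2\mu_k(t-s)}\,ds)\,\|f_k\|_{L^{2}(0,T)}^{2}$ likewise controls the top-order part of $\|(-\mathcal{A}_{0})^{1/2}y(t)\|_{L^{2}(0,L)}$ uniformly in $t$, which is the top-order part of $\|y\|_{C([0,T];H^{2}(0,L))}$.

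It is in the lower-order ($L^{2}$-in-space) parts of the norm that the $T$-dependence appears. For the finitely many non-decaying modes, in particular the zero mode $\mu_k=0$ occurring for \eqref{eq: CH bc}, the Duhamel integral only obeys $\sup_{t\in[0,T]}|y_k^{\mathrm{inh}}(t)|\le\sqrt{T}\,\|f_k\|_{L^{2}(0,T)}$ and $\|y_k^{\mathrm{inh}}\|_{L^{2}(0,T)}\le\tfrac{T}{\sqrt2}\|f_k\|_{L^{2}(0,T)}$, whereas the modes with $\mu_k>0$ contribute $T$-uniformly through $\int_{0}^{t}e^{-2\mu_k(t-s)}\,ds\le\tfrac{1}{2\mu_k}$. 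Collecting all contributions produces $\|y\|_{B_{T}}\le C\sqrt{1+T^{2}}\big(\|y_{0}\|_{H^{2}(0,L)}+\|f\|_{L^{2}((0,T);L^{2}(0,L))}\big)$, with $C$ independent of $T$ and the genuine $T$-growth traced to the low, non-decaying modes. Uniqueness is immediate: the difference of two solutions solves the homogeneous equation with zero data, and $\frac{d}{dt}\|y\|_{L^{2}(0,L)}^{2}=2\langle\mathcal{A}_{0}y,y\rangle_{L^{2}(0,L)}=-2\|\partial_x^{2}y\|_{L^{2}(0,L)}^{2}\le0$ forces $y\equiv0$. Finally, for $y_{0}\in D(\mathcal{A})$ and $f\in C^{1}([0,T];L^{2}(0,L))$, the classical regularity theory of analytic semigroups upgrades the mild solution to a strong solution lying in $C([0,T];D(\mathcal{A}))\cap C^{1}([0,T];L^{2}(0,L))$.

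I expect the main obstacle to be establishing the spectral norm equivalences up to the boundary for the clamped condition \eqref{eq: clamped} and the Neumann-type condition \eqref{eq: CH bc}, whose eigenfunctions are not explicit, so that Sobolev norms can be replaced by the weighted $\ell^{2}$ series of Fourier coefficients; this requires elliptic regularity and interpolation rather than a direct computation. A secondary technical point is tracking the $T$-dependence so that the non-decaying low modes are absorbed cleanly into the prefactor $\sqrt{1+T^{2}}$ with a constant $C$ that degenerates neither as $T\to0$ nor as $T\to\infty$.
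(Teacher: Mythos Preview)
The paper does not supply its own proof of this proposition; it simply states that the result follows ``with some slight modifications of \cite[Proposition 1]{melendez2013lipschitz}''. Your spectral/semigroup argument is correct and effectively fills in what the paper omits: representing the mild solution through the analytic contraction semigroup generated by $\mathcal{A}_{0}=-\partial_x^4$, decomposing in the eigenbasis, and obtaining the top-order $L^{2}(0,T;H^{4})$ bound via Young's convolution inequality (kernel $\mu_k e^{-\mu_k t}$ with unit $L^{1}$-mass) together with the $C([0,T];H^{2})$ bound from Cauchy--Schwarz on the Duhamel integral is the standard maximal-regularity route for self-adjoint parabolic problems. Your tracking of where the factor $\sqrt{1+T^{2}}$ actually enters---through the finitely many low (or zero, for \eqref{eq: CH bc}) modes---is accurate, and your flagging of the nontrivial step, namely the norm equivalences $\|y\|_{H^{2}}^{2}\simeq\sum_k(1+\mu_k)|y_k|^{2}$ and $\|y\|_{H^{4}}^{2}\simeq\sum_k(1+\mu_k^{2})|y_k|^{2}$ for the clamped and Neumann-type boundary conditions, is appropriate: these come from identifying $D((-\mathcal{A}_{0})^{1/2})$ via interpolation and elliptic regularity rather than from explicit eigenfunctions. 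The strong-solution upgrade for $y_{0}\in D(\mathcal{A})$ and $f\in C^{1}$ is a direct quotation of analytic-semigroup regularity theory, exactly as you say.
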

Now, for $\delta,\nu=0,1$, we consider the following system
\beq\label{eq: LKS o KS closed loop}
\begin{cases}
 \partial_{t}y+\lambda \partial_{x}^{2}y+\partial_{x}^{4}y+\delta y\partial_{x}y-\nu\partial_{x}^{2}(y^{3})=\dis\sum_{k=1}^{m}b_{k}(x)\sat\left(\left(\mathbf{K}\iota^{-1}(\Pi_{n}y(\cdot,t))\right)\right)_{k},& t>0, \ x \in (0,L), \\
y(0,x)=y_{0}(x), & x \in (0,L).
\end{cases}\eeq
System \eqref{eq: LKS o KS closed loop} reduces to \eqref{eq: LKS closed loop} when $\delta=\nu=0$ and coincides with the fully nonlinear closed-loop system \eqref{eq: main closed-loop nl} studied in \Cref{sec: nonlinear}. We first show local in time well-posedness of \eqref{eq: LKS o KS closed loop}.
\begin{proposition}
\label{prop: wp nlgKS}
Let $y_{0}\in H^{2}(0,L)$, then there exists $T^*>0$ such that equation \eqref{eq: LKS o KS closed loop} has a unique (mild) solution $y \in B_{T^*}$.
\end{proposition}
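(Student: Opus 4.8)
The plan is to recast \eqref{eq: LKS o KS closed loop} as a fixed-point problem for the biharmonic heat semigroup and to apply \Cref{prop: wp LKS f}. Moving every term other than $\partial_t y+\partial_x^4 y$ to the right-hand side, I define for $\tilde y\in B_T$ the source
\[
f(\tilde y):=-\lambda\partial_x^2\tilde y-\delta\,\tilde y\partial_x\tilde y+\nu\,\partial_x^2(\tilde y^3)+\sum_{k=1}^m b_k(\cdot)\,\sat\!\big((\mathbf{K}\iota^{-1}\Pi_n\tilde y)_k\big),
\]
and let $\Gamma(\tilde y)\in B_T$ be the unique solution of \eqref{eq: LKS f} with this source and data $y_0$ provided by \Cref{prop: wp LKS f}. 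A fixed point of $\Gamma$ in $B_T$ is precisely a mild solution of \eqref{eq: LKS o KS closed loop}, so it suffices to show that, for a suitable radius $R$ and time $T^*$, the map $\Gamma$ sends the closed ball $\overline{B}_R(0)\subset B_{T^*}$ into itself and is a contraction there.

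First I would estimate $f(\tilde y)$ in $L^2(0,T;L^2(0,L))$, the norm required by \Cref{prop: wp LKS f}. The key point is that each term is controlled by the $C([0,T];H^2)$ part of $\|\tilde y\|_{B_T}$ and then integrated in time, which produces a gain of $\sqrt{T}$. Using the one-dimensional embedding $H^2(0,L)\hookrightarrow C^1([0,L])$ and the fact that $H^2(0,L)$ is a Banach algebra, the linear term satisfies $\|\partial_x^2\tilde y\|_{L^2}\le\|\tilde y\|_{H^2}$, the Kuramoto–Sivashinsky term satisfies $\|\tilde y\partial_x\tilde y\|_{L^2}\le\|\tilde y\|_{L^\infty}\|\partial_x\tilde y\|_{L^2}\le C\|\tilde y\|_{H^2}^2$, and, after expanding $\partial_x^2(\tilde y^3)=6\tilde y(\partial_x\tilde y)^2+3\tilde y^2\partial_x^2\tilde y$, the Cahn–Hilliard term satisfies $\|\partial_x^2(\tilde y^3)\|_{L^2}\le C\|\tilde y\|_{H^2}^3$. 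The saturated feedback is bounded by $\ell\sum_k\|b_k\|_{L^2}$ independently of $\tilde y$. Integrating over $t\in(0,T)$ yields $\|f(\tilde y)\|_{L^2(0,T;L^2)}\le C\sqrt{T}\big(1+\|\tilde y\|_{B_T}+\|\tilde y\|_{B_T}^2+\|\tilde y\|_{B_T}^3\big)$. Combined with \eqref{eq: est f}, choosing $R:=2C\|y_0\|_{H^2}+1$ and then $T^*\le1$ small enough that the $\sqrt{T}$ contributions are $\le R/2$ gives $\Gamma(\overline{B}_R(0))\subset\overline{B}_R(0)$.

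For the contraction estimate I would exploit the linearity of \eqref{eq: LKS f}: for $\tilde y_1,\tilde y_2\in\overline{B}_R(0)$ the difference $\Gamma(\tilde y_1)-\Gamma(\tilde y_2)$ solves \eqref{eq: LKS f} with zero data and source $f(\tilde y_1)-f(\tilde y_2)$, so \eqref{eq: est f} reduces matters to estimating this difference in $L^2(0,T;L^2)$. Writing $\tilde y_1\partial_x\tilde y_1-\tilde y_2\partial_x\tilde y_2=\tilde y_1\partial_x(\tilde y_1-\tilde y_2)+(\tilde y_1-\tilde y_2)\partial_x\tilde y_2$ and $\tilde y_1^3-\tilde y_2^3=(\tilde y_1-\tilde y_2)(\tilde y_1^2+\tilde y_1\tilde y_2+\tilde y_2^2)$, together with the $H^2$ algebra property, each term is bounded by $C(R)\|\tilde y_1-\tilde y_2\|_{H^2}$; for the feedback one uses that $\sat$ is globally $1$-Lipschitz and that $\mathbf{K}\iota^{-1}\Pi_n$ is a bounded operator on $L^2$. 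Integrating in time again yields a factor $\sqrt{T}$, so $\|\Gamma(\tilde y_1)-\Gamma(\tilde y_2)\|_{B_T}\le C(R)\sqrt{T}\,\|\tilde y_1-\tilde y_2\|_{B_T}$, and shrinking $T^*$ further makes $C(R)\sqrt{T^*}<1$. The Banach fixed-point theorem then provides a unique fixed point in $\overline{B}_R(0)$; uniqueness in all of $B_{T^*}$ follows by a standard continuation argument, since any mild solution is continuous in $H^2$ with value $y_0$ at $t=0$ and hence lies in $\overline{B}_R(0)$ on a short time interval. The main obstacle is the Cahn–Hilliard term $\nu\partial_x^2(\tilde y^3)$, the highest-order nonlinearity: closing the estimates hinges on the fact that in one dimension $H^2$ is an algebra continuously embedded in $C^1$, so that two derivatives of the cubic still land in $L^2$ when the input is only $H^2$—precisely the regularity furnished by \Cref{prop: wp LKS f}.
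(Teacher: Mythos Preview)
Your proposal is correct and follows essentially the same fixed-point strategy as the paper: both recast the equation as $\partial_t y+\partial_x^4 y=$ (source), invoke \Cref{prop: wp LKS f}, bound the source in $L^2(0,T;L^2)$ using the $H^2\hookrightarrow C^1$ embedding and the algebra property (obtaining the crucial $\sqrt{T}$ factor), and close by Banach's fixed-point theorem on a small ball for short time. The only cosmetic difference is that the paper bounds the saturated feedback by $|\sat(\mathbf{K}z)|\le\|\mathbf{K}\|\,|z|\le C\|\tilde y\|_{L^2}$ rather than by the constant $\ell$, which lets them take $R$ proportional to $\|y_0\|_{H^2}$ without the ``$+1$'' you added; either bound works.
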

\begin{proof}
Throughout this part, $C$ will denote a positive constant, which may vary from line to line, but does not depend on $T$. Let $\theta\in(0,T]$ (to be chosen later). For the moment, assume that $y_{0}\in D(\mathcal{A})$ and $f\in C^{1}([0,T];L^{2}(0,L))$. Let $w \in B_{\theta}$ and consider the nonlinear operators
\[\begin{aligned}
&\mathcal{F}(w)=\dis\sum_{k=1}^{m}b_{k}(x)\sat\left(\left(\mathbf{K}\iota^{-1}(\Pi_{n}w(\cdot,t))\right)\right)_{k}, \quad \mathcal{G}(w)=-\lambda\partial_{x}^{2}w-\delta w\partial_{x}w+\nu\partial_{x}^{2}(y^{3}),\\
&\mathcal{H}(w)=\mathcal{F}(w)+\mathcal{G}(w).
\end{aligned}\]
Define the map $\Phi:B_{\theta} \rightarrow B_{\theta}$ by $\Phi(w)=y$, where $y$ is the unique solution of
\[\begin{cases}
\partial_{t}y=\mathcal{A}_{0}y+\mathcal{H}(w),& t\in(0,\theta), \\
y(0,\cdot)=y_{0}, &
\end{cases}\]
where $\mathcal{A}_{0}y=-y^{\prime\prime\prime\prime}$ and $D(\mathcal{A}_0)=D(\mathcal{A})$. Note that $y$ is well-defined. Indeed, if $w\in B_{\theta}$, using \eqref{eq: F est} and \eqref{eq: G est} we obtain $\mathcal{H}(w)\in L^{2}(0,\theta;L^{2}(0,L))$ and 
\beq\label{eq: H est}
\|\mathcal{H}(w)\|_{L^{2}(0,\theta;L^{2}(0,L))}\leq C\sqrt{\theta}\left(\|w\|_{B_{\theta}}+\delta\|w\|_{B_{\theta}}^{2}+\nu\|w\|_{B_{\theta}}^{3}\right)
\eeq
Using, \eqref{eq: est f} and \eqref{eq: H est}, we get
\[\begin{aligned}
\|\Phi(w)\|_{B_{T}}&\leq C\sqrt{1+\theta^{2}}\left(\|y_{0}\|_{H^{2}(0,L)}+\|\mathcal{H}(w)\|_{L^{2}((0,\theta);L^{2}(0,L))}\right) \\
&\leq C\sqrt{1+\theta^{2}}\|y_{0}\|_{H^{2}(0,L)}+C\sqrt{1+\theta^{2}}\sqrt{\theta}\left(\|w\|_{B_{\theta}}+\delta\|w\|_{B_{\theta}}^{2}+\nu\|w\|_{B_{\theta}}^{3}\right),
\end{aligned}\]
Now, we take $R=4C\|y_{0}\|_{H^{2}(0,L)}$ and consider $\Phi$ restricted to the ball $\mathcal{B}_{\theta,R}:=\{w \in B_{\theta}, \ \|w\|_{B_{\theta}}\leq R\}$. If $\theta$ is small enough such that
\[\begin{cases}
 \sqrt{1+\theta^{2}}<2,\\
C\sqrt{\theta}(1+\delta R+\nu R^{2})<\dfrac{1}{2},
\end{cases}\]
we get $\|\Phi(w)\|_{B_{T}}\leq R$, i.e., $\Phi$ maps $\mathcal{B}_{\theta,R}$ into itself. Now, for $w_{1}$, $w_{2} \in \mathcal{B}_{\theta,R}$, define $\Delta=\Phi(w_{1})-\Phi(w_{2})$, then $\Delta$ satisfies
\beq\label{eq: Delta}\begin{cases}
\partial_{t}\Delta=\mathcal{A}_{0}\Delta+\mathcal{H}(w_{1})-\mathcal{H}(w_{2}),& t\in(0,\theta), \\
\Delta(0,\cdot)=0. &
\end{cases}\eeq
Observe that
\[\begin{aligned}
\mathcal{H}(w_{1})-\mathcal{H}(w_{2})=&\mathcal{F}(w_{1})-\mathcal{F}(w_{2})-\lambda\partial_{x}^{2}\Delta-\delta\Delta\partial_{x}w_{2}-\delta w_{1}\partial_{x}\Delta+3\nu(w_{1})^{2}\partial_{x}^{2}\Delta\\
&+3\nu\Delta\partial_{x}^{2}w_{2}(w_{1}+w_{2})+6\nu\Delta(\partial_{x}w_{1})^{2}+6\nu\partial_{x}\Delta\partial_{x}(w_{1}+w_{2}),
\end{aligned}\]
therefore, using that the saturation map is Lipschitz continuous and similar to the previous estimates
\[\begin{aligned}
\|\mathcal{H}(w_{1})-\mathcal{H}(w_{2})\|_{L^{2}(0,\theta;L^{2}(0,L))}
\leq&C\sqrt{\theta}\left[\|\Delta\|_{B_{\theta}}+\delta\|\Delta\|_{B_{\theta}}\left(\|w_{1}\|_{B_{\theta}}+\|w_{2}\|_{B_{\theta}}\right)+\nu\|\Delta\|_{B_{\theta}}\left(\|w_{1}\|_{B_{\theta}}+\|w_{2}\|_{B_{\theta}}\right)^{2}\right].
\end{aligned}\]
Using this estimate and \eqref{eq: est f} for system \eqref{eq: Delta} we see
\[\begin{aligned}
\|\Phi(w_{1})-\Phi(w_{2})\|_{B_{\theta}}&\leq C\sqrt{1+\theta^{2}}\sqrt{\theta}\left\|\left[1+\delta\left(\|w_{1}\|_{B_{\theta}}+\|w_{2}\|_{B_{\theta}}\right)+\nu\left(\|w_{1}\|_{B_{\theta}}+\|w_{2}\|_{B_{\theta}}\right)^{2}\right]\Delta\right\|_{B_{\theta}}\\
&\leq  2C\sqrt{\theta}(1+2R+4R^{2})\|w_{1}-w_{2}\|_{B_{\theta}}.
\end{aligned}\]
By choosing $\theta$ small enough such that $2C\sqrt{\theta}(1+2R+4R^{2})<1$, we get that $\Phi$ is a contraction. Banach fixed-point theorem ensures the existence of unique solution $y\in B_{\theta}$ of \eqref{eq: LKS o KS closed loop}.   
\end{proof}
\begin{remark}
\label{rem: less regular}
A similar analysis can be done in a less regular framework, namely $y_{0}\in L^{2}(0,L)$ and $y \in C([0,T];L^{2}(0,L))\cap L^{2}(0,T;H^{2}(0,L))$. However, since our focus is on the stability of the fully nonlinear system, we need solutions in $y \in C([0,T];H^{2}(0,L))$.
\end{remark}
%%%%%%%%%%%%%%%%%%%
\subsection{A priori estimates}
\label{sec: global wp}
In this part, we use some a priori and stability estimates that will be used to extend the solutions of the closed-loop system for all positive time. Let $T>0$, we multiply \eqref{eq: LKS o KS closed loop} by $y$ and integrating over the spatial domain
\beq\label{eq: a priori nl L2}
\frac{1}{2}\frac{d}{dt}\left(\int_{0}^{L}|y|^{2}dx\right)+\int_{0}^{L}|\partial_{x}^{2}y|^{2}dx=-\lambda\int_{0}^{L}y\partial_{x}^{2}ydx+\int_{0}^{L}\mathcal{F}(y)ydx-\delta\int_{0}^{L}y^{2}\partial_{x}ydx+\nu\int_{0}^{L}\partial_{x}^{2}(y^{3})ydx.\eeq
Observe that, due to the boundary conditions, \eqref{eq: bc KS} the last term is non-positive,
\[\begin{aligned}
\int_{0}^{L}\partial_{x}^{2}(y^{3})ydx&=6\int_{0}^{L}(\partial_{x}y)^{2}y^{2}dx+3\int_{0}^{L}\partial_{x}^{2}yy^{3}dx\\
&=6\int_{0}^{L}(\partial_{x}y)^{2}y^{2}dx-9\int_{0}^{L}(\partial_{x}y)^{2}y^{2}dx\\
&\leq 0.
\end{aligned}\]
For the reminding nonlinear term, we perform the computation under boundary conditions \eqref{eq: CH bc}. In the other cases, the analysis is simpler because
\[\int_{0}^{L}y^{2}\partial_{x}ydx=\int_{0}^{L}\partial_{x}(y)^{3}dx=0.\]
First and second terms of \eqref{eq: a priori nl L2} are estimated as in \Cref{sec: well posedness}. For the third one, since $\partial_{x}y(t,\cdot) \in H_{0}^{1}(0,L)$, we obtain
\[\begin{aligned}
\delta\int_{0}^{L}y^{2}\partial_{x}ydx&\leq \|\partial_{x}y\|_{L^{\infty}(0,L)}\delta\int_{0}^{L}y^{2}dx\\
&\leq C \|\partial_{x}^{2}y\|_{L^{2}(0,L)}\delta\int_{0}^{L}y^{2}dx\\
&\leq \dfrac{1}{3}\|\partial_{x}^{2}y\|_{L^{2}(0,L)}^{2}+C\delta^{2}\left(\int_{0}^{L}y^{2}dx\right)^{2}.
\end{aligned}\]
Gathering all the estimates and multiplying by $2$, we observe
\beq\label{eq: dL2 nl}
\frac{d}{dt}\left(\int_{0}^{L}|y|^{2}dx\right)+\dfrac{2}{3}\int_{0}^{L}|\partial_{x}^{2}y|^{2}dx\leq C\int_{0}^{L}|y|^{2}dx+C\delta^{2}\left(\int_{0}^{L}y^{2}dx\right)^{2}.\eeq

Now multiply \eqref{eq: LKS o KS closed loop} by $\partial_{x}^{4}y$ to obtain
\beq\label{eq: a priori nl}
\frac{1}{2}\frac{d}{dt}\left(\int_{0}^{L}|\partial_{x}^{2}y|^{2}dx\right)+\int_{0}^{L}|\partial_{x}^{4}y|^{2}dx=-\lambda\int_{0}^{L}\partial_{x}^{2}y\partial_{x}^{4}ydx+\int_{0}^{L}\mathcal{F}(y)\partial_{x}^{4}ydx-\delta\int_{0}^{L}y\partial_{x}y\partial_{x}^{4}ydx+\nu\int_{0}^{L}\partial_{x}^{2}(y^{3})\partial_{x}^{4}ydx.\eeq

We now estimate right-hand side terms of the previous equality. For the first one
\[-\lambda\int_{0}^{L}y\partial_{x}^{2}\partial_{x}^{4}ydx\leq \dfrac{1}{5}\int_{0}^{L}|\partial_{x}^{4}y|^{2}dx+\dfrac{5\lambda^{2}}{2}\int_{0}^{L}|\partial_{x}^{2}y|^{2}dx.\]
For the second one
\[\begin{aligned}
 \int_{0}^{L}\mathcal{F}(y)\partial_{x}^{4}ydx&\leq \dfrac{1}{5}\int_{0}^{L}|\partial_{x}^{4}y|^{2}dx+\dfrac{5}{2}\int_{0}^{L}|\mathcal{F}(y)|^{2}dx\\
 &\leq \dfrac{1}{5}\int_{0}^{L}|\partial_{x}^{4}y|^{2}dx+C\int_{0}^{L}|y|^{2}dx\\
 &\leq \dfrac{1}{5}\int_{0}^{L}|\partial_{x}^{4}y|^{2}dx+C\|y\|^{2}_{H^{2}(0,L)}.
\end{aligned}\]
The third one, by the boundary conditions, either $y(t,\cdot)\in H_{0}^{1}(0,L)$ or $\partial_{x}y(t,\cdot)\in H_{0}^{1}(0,L)$
\[\begin{aligned}
 -\delta\int_{0}^{L}y\partial_{x}y\partial_{x}^{4}ydx&\leq \dfrac{1}{5}\int_{0}^{L}|\partial_{x}^{4}y|^{2}dx+\dfrac{5\delta^{2}}{2}\int_{0}^{L}|y\partial_{x}y|^{2}dx\\
 &\leq \dfrac{1}{5}\int_{0}^{L}|\partial_{x}^{4}y|^{2}dx+C\|y\|^{4}_{H^{2}(0,L)}.
\end{aligned}\]
For the last one,
\[\begin{aligned}
 \nu\int_{0}^{L}\partial_{x}^{2}(y^{3})\partial_{x}^{4}ydx&\leq \dfrac{1}{5}\int_{0}^{L}|\partial_{x}^{4}y|^{2}dx+\dfrac{5\nu^{2}}{2}\int_{0}^{L}|\partial_{x}^{2}(y^{3})|^{2}dx\\
 &\leq \dfrac{1}{5}\int_{0}^{L}|\partial_{x}^{4}y|^{2}dx+C\|y\|^{6}_{H^{2}(0,L)}.
\end{aligned}\]

Combining these estimates, and multiplying by $2$ we observe
\beq\label{eq: dH2 nl}
\frac{d}{dt}\left(\int_{0}^{L}|\partial_{x}^{2}y|^{2}dx\right)+\dfrac{2}{5}\int_{0}^{L}|\partial_{x}^{2}y|^{2}dx\leq C\|y\|^{2}_{H^{2}(0,L)}\left(1+\|y\|^{2}_{H^{2}(0,L)}+\|y\|^{4}_{H^{2}(0,L)}\right).\eeq
By adding \eqref{eq: dL2 nl} and \eqref{eq: dH2 nl}
\beq\label{eq: dH2 full nl}
\frac{d}{dt}\|y\|^{2}_{H^{2}(0,L)}+\int_{0}^{L}|\partial_{x}^{2}y|^{2}dx\leq C\|y\|^{2}_{H^{2}(0,L)}\left(1+\|y\|^{2}_{H^{2}(0,L)}+\|y\|^{4}_{H^{2}(0,L)}\right).\eeq
The differential inequality  \eqref{eq: dH2 full nl} combined with $H^2$ exponential stability will allow us to the solutions extended globally in time, yielding well-posedness for all $T>0$.

%%%%%%%%%%%%%%%
\subsection{Exponential stability}
\label{sec: stability}
We now address the stability part of \Cref{th: sta lineal}. We start with the following proposition that ensures that thanks to the cascade structure, it suffices to stabilize the finite-dimensional subsystem \eqref{eq: main z} to conclude the stability in the state space $\ell_{2}(\mathbb{N},\mathbb{R})$ for \eqref{eq: inf ode con} and $L^{2}(0,L)$ for \eqref{eq: LKS closed loop}.
\begin{proposition}
\label{prop: sta Lw}
Let $\mathbf{K}$ be as in \Cref{assum: kalman} and $S\subset \mathbb{R}^{n}$ the region of attraction of \eqref{eq: z sat} such that the system is locally exponentially stable in $0$. Then:
\begin{enumerate}
\item System \eqref{eq: inf ode con} (resp. \eqref{eq: LKS closed loop}) with the feedback law $u(t)=\mathbf{K}z(t)$ is locally exponentially stable in $0$ with region of attraction $S\times \ell_{2,j>N}$ (resp. $\iota(S)\times X_{N}^{\perp}$).

\item In addition, for any closed and bounded set $G \subset \inte(\iota(S)\times X_{N}^{\perp})$, there exist two positive constants $C$, $\alpha$ such that for any initial condition $y_{0}\in G$, the solution $y$ of \eqref{eq: LKS closed loop} satisfies
\beq\label{eq: decay L2 main}\|y(t,\cdot)\|_{L^{2}(0,L)} \leq Me^{-\alpha t}\|y_{0}\|_{L^{2}(0,L)}.\eeq
\end{enumerate}
\end{proposition}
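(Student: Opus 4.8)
The plan is to exploit the one-way cascade structure of \eqref{eq: inf ode con}. Working in the isometric $\ell_2(\mathbb{N},\mathbb{R})$ representation, the system splits into the finite-dimensional block \eqref{eq: z sat} for $z=(y_1,\dots,y_n)^\top$ and the stable tail \eqref{eq: inf ode con stable} for $\zeta:=(y_j)_{j>n}\in\ell_{2,j>n}$, and the coupling is one-directional: $z$ drives $\zeta$ through the forcing $\sat(\mathbf{K}z(t))$, but $\zeta$ does not feed back into the $z$-dynamics. I would therefore (i) invoke the hypothesis to control $z$, (ii) propagate the resulting decay to $\zeta$ by variation of constants, (iii) add the two contributions to obtain $\ell_2$-decay of $(z,\zeta)$, and (iv) transfer everything to $L^{2}(0,L)$ and to \eqref{eq: LKS closed loop} through the isometry $e_j\mapsto e^{j}$ together with the identification $\iota(S)\times X_n^\perp\cong S\times\ell_{2,j>n}$. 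Throughout I work with the Fourier coefficients of the solution furnished by \Cref{prop: wp nlgKS}, which solve \eqref{eq: inf ode con}.

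For step (i), the hypothesis supplies $\varepsilon,M_0,\alpha_0>0$ with $|z(t,z_0)|\le M_0 e^{-\alpha_0 t}|z_0|$ whenever $|z_0|\le\varepsilon$ (cf. \eqref{eq: stb finsys}), together with $z(t,z_0)\in S$ and $z(t,z_0)\to0$ for every $z_0\in S$. Since $\sat$ is globally $1$-Lipschitz and $\sat(0)=0$, the forcing obeys $|\sat(\mathbf{K}z(t))|\le\|\mathbf{K}\|\,|z(t)|$, so it inherits the exponential decay for small data and, for arbitrary $z_0\in S$, tends to $0$. For step (ii), the tail generator $\mathcal{A}_s=\operatorname{diag}(\sigma_j)_{j>n}$ satisfies $\|e^{t\mathcal{A}_s}\|\le e^{-\eta t}$ by \eqref{eq: neg eig}, and the input operator $v\mapsto(\mathbf{b}_j v)_{j>n}$, $v\in\mathbb{R}^m$, is bounded because $\sum_{j>n}|\mathbf{b}_j|^2\le\sum_{k=1}^{m}\|b_k\|_{L^{2}(0,L)}^{2}<\infty$. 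Duhamel's formula then yields
\[
\|\zeta(t)\|_{\ell_2}\le e^{-\eta t}\|\zeta(0)\|_{\ell_2}+C\int_0^t e^{-\eta(t-s)}\,|\sat(\mathbf{K}z(s))|\,ds ,
\]
and inserting the bound from step (i) reduces matters to the elementary convolution estimate $\int_0^t e^{-\eta(t-s)}e^{-\alpha_0 s}\,ds\le C\,e^{-\min(\alpha_0,\eta)t}$, the borderline case $\alpha_0=\eta$ producing a factor $t$ that is absorbed by lowering the rate to any $\alpha<\eta$.

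Combining steps (i) and (ii) gives $|z(t)|^2+\|\zeta(t)\|_{\ell_2}^2\le M^2 e^{-2\alpha t}\big(|z_0|^2+\|\zeta(0)\|_{\ell_2}^2\big)$ for small data, which is local exponential stability of \eqref{eq: inf ode con} at $0$. For the region of attraction, any $(z_0,\zeta_0)\in S\times\ell_{2,j>n}$ has $z(t)\to0$ by hypothesis, hence the forcing tends to $0$ and the exponentially stable tail dynamics force $\zeta(t)\to0$; forward invariance is inherited from that of $S$, since the tail fiber $\ell_{2,j>n}$ is the whole stable space and is automatically invariant. Transporting through the isometry proves part (1) for \eqref{eq: LKS closed loop} with region of attraction $\iota(S)\times X_n^\perp$.

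For part (2) I would upgrade the estimates to ones that are uniform over a closed bounded $G\subset\inte(\iota(S)\times X_n^\perp)$. The decisive point — and the main obstacle — is to produce a single constant $M$ valid up to $\partial S$: in infinite dimensions closedness and boundedness of $G$ do \emph{not} prevent the finite-dimensional projection $\{\iota^{-1}\Pi_n y:\ y\in G\}$ from accumulating at $\partial S$ (one may keep the unstable coordinate near $\partial S$ while letting the $X_n^\perp$-component escape along orthogonal directions, yielding a closed bounded $G$ whose projection has closure meeting $\partial S$), so a naive compactness-plus-continuous-dependence argument is unavailable. I would instead use the sublevel-set description $S=\mathcal{R}=\{z^\top\mathbf{P}z\le1\}$ furnished by the finite-dimensional analysis behind \Cref{th: sta lineal}: the quadratic Lyapunov inequality $\tfrac{d}{dt}\big(z(t)^\top\mathbf{P}z(t)\big)\le-2\alpha_0\,z(t)^\top\mathbf{P}z(t)$ holds throughout $\mathcal{R}$, whence $z(t)^\top\mathbf{P}z(t)\le e^{-2\alpha_0 t}z_0^\top\mathbf{P}z_0$ and therefore $|z(t)|\le\sqrt{\lambda_{\max}(\mathbf{P})/\lambda_{\min}(\mathbf{P})}\;e^{-\alpha_0 t}|z_0|$ with a constant independent of $z_0\in S$. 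Re-running the Duhamel estimate of step (ii) with these uniform constants, and using $|z_0|\le\|y_0\|_{L^{2}(0,L)}$, $\|\zeta_0\|_{\ell_2}\le\|y_0\|_{L^{2}(0,L)}$ together with the boundedness of $G$, yields \eqref{eq: decay L2 main} with $M,\alpha$ depending only on $\mathbf{P}$, $\mathbf{K}$, $\eta$ and $\alpha_0$, hence uniform over $G$.
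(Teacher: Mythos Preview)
Your approach is essentially the paper's: exploit the cascade structure, bound the saturated forcing by $\|\mathbf{K}\|\,|z(t)|$, apply Duhamel/variation of constants to the stable tail, combine the two pieces, and transfer to $L^{2}(0,L)$ via the isometry. The paper carries out step (ii) mode-by-mode (estimating each $|y_j(t)|$ for $j\ge n+1$ and then summing squares, using $\sum_{j>n}|\mathbf{b}_j|^2\le\|\mathbf{b}\|^2_{(L^2)^m}$), whereas you work directly with the tail semigroup on $\ell_{2,j>n}$; these are equivalent. The convolution bound and the handling of the borderline case $\alpha_0=\eta$ are also parallel (the paper simply perturbs $\eta$ slightly using discreteness of the spectrum).

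One point where you are actually more careful than the paper: in part (2) the paper writes ``take $G'\subset\operatorname{int}(S)$ bounded and closed'' and asserts uniform exponential decay of $z$ on $G'$, implicitly treating $G'$ as the finite-dimensional projection of $G$. You correctly observe that for a closed bounded $G\subset\operatorname{int}(\iota(S)\times X_n^\perp)$ in infinite dimensions, the projection $\iota^{-1}\Pi_n(G)$ need not have closure contained in $\operatorname{int}(S)$, so the standard ``compact subsets of the basin reach the small ball in bounded time'' argument is not directly available. Your fix---invoking the explicit Lyapunov structure $S=\mathcal{R}=\{z^\top\mathbf{P}z\le1\}$ from \Cref{prop: region fed}, where $\dot V_1\le-\alpha|z|^2$ holds on all of $\mathcal{R}$ and hence yields a uniform decay constant $\sqrt{\lambda_{\max}(\mathbf{P})/\lambda_{\min}(\mathbf{P})}$ independent of $z_0\in\mathcal{R}$---is exactly what is needed and is consistent with how the proposition is ultimately applied in the paper (with $S=\mathcal{R}$).
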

\begin{proof}
%This part is based on \cite[Proposition 1]{Mironchenko_2020}. 
Let $\mathbf{K}$ as in \Cref{assum: kalman} and $S\subset \mathbb{R}^{n}$ the region of attraction of \eqref{eq: main z} such that the system is locally exponentially stable in $0$. Take $G^{\prime} \subset \inte(S)$ bounded and closed. Then, there exist $M,a>0$ such that for all $z_{0} \in G^{\prime}$ the solution of \eqref{eq: main z} satisfies

\beq\label{eq: exp ode sat}
|z(t)|\leq Me^{-at}|z_{0}|, \qquad t\geq 0.
\eeq
Applying the feedback law $u(t)=\mathbf{K}z(t)$ and Duhamel's formulae for \eqref{eq: inf ode con stable}
\[y_{j}(t)=e^{\sigma_{j}t}y_{j}(0)+\mathbf{b}_{j}\int_{0}^{t}e^{\sigma_{j}(t-s)}\sat(\mathbf{K}z(s))ds, \qquad \forall j\geq n+1.\]
Observe that, using the definition of the saturation map \eqref{eq: satl} we get $|\sat(\mathbf{K}z)|\leq |\mathbf{K}z|\leq \|\mathbf{K}\||z|$. Therefore, for all $j\geq n+1$, using \eqref{eq: neg eig} and \eqref{eq: exp ode sat} we get
\[\begin{aligned}
|y_{j}(t)|&\leq e^{-\eta t}|y_{j}(0)|+M|\mathbf{b}_{j}|\int_{0}^{t}e^{-\eta(t-s)}e^{-as}\|\mathbf{K}\||z_{0}|ds,\\
&=e^{-\eta t}|y_{j}(0)|+M|\mathbf{b}_{j}|\|\mathbf{K}\||z_{0}|\left(\dfrac{e^{-at}-e^{-\eta t}}{\eta-a}\right),
\end{aligned}\]
where the last term is positive due to the monotonicity of the exponential function. Therefore,
\[
|y_{j}(t)|\leq 2e^{-2\eta t}|y_{j}(0)|^{2}+2(M\|\mathbf{K}\|)^{2}|\mathbf{b}_{j}|^{2}|z_{0}|^{2}\left(\dfrac{e^{-at}-e^{-\eta t}}{\eta-a}\right)^{2}.\]
Without loos of generality, assume $\eta>a$,
\[\left(\dfrac{e^{-at}-e^{-\eta t}}{\eta-a}\right)^{2}\leq \dfrac{e^{-2at}}{(\eta-a)^{2}}.\]
Summing over, $j\geq n+1$, we obtain the following estimate
\[\sum_{j\geq n+1}|y_{j}(t)|\leq 2e^{-2at}\sum_{j\geq n+1}|y_{j}(0)|^{2}+2\left(\dfrac{M\|\mathbf{K}\|}{(\eta-a)}\right)^{2}\left(\sum_{j\geq n+1}|\mathbf{b}_{j}|^{2}\right)e^{-2at}|z_{0}|^{2}.\]
By definition and Bessel inequality
\[\sum_{j\geq n+1}|\mathbf{b}_{j}|^{2}=\sum_{j\geq n+1}\sum_{k=1}^{m}|b_{jk}|^{2}\leq \|\mathbf{b}\|^{2}_{(L^{2}(0,L))^{m}}.\]
Finally, recalling \eqref{eq: exp ode sat} 
\[|z(t)|^{2}=\sum_{j=1}^{n}|y_{j}(t)|^{2}, \quad \text{ and } \quad |z_{0}|^{2}=\sum_{j=1}^{n}|y_{j}(0)|^{2},\]
we deduce
\[\sum_{j\in \mathbb{N}}|y_{j}(t)|\leq Ce^{-2at}\sum_{\in \mathbb{N}}|y_{j}(0)|, \qquad C=\max\left\{2,M^{2},2\left(\dfrac{M\|\mathbf{K}\|\|\mathbf{b}\|^{2}_{(L^{2}(0,L))^{m}}}{(\eta-a)}\right)^{2}\right\}.\]
This corresponds to exponential decay in the state space $\ell_{2}(\mathbb{N},\mathbb{R})$ for initial data in $G^{\prime}\times\ell_{2, j>n}$. To conclude the proof, it is enough to take $G^{\prime} \subset \inte(S)$ such that $0 \in \inte(G^{\prime})$ and thus $0 \in \inte(G^{\prime})\times\ell_{2, j>n}$. For the second part, using the isomorphism $\iota:\mathbb{R}^{n}\mapsto X_{n}$, the isometric isomorphism between $\ell_{2, j>n}$ and $X_{n}^{\perp}$ and Parseval inequality we obtain the local exponential decay in $0 \in L^{2}(0,L)$  with region of attraction $\iota(G^{\prime})\times X_{n}^{\perp} \subset L^{2}(0,L)$.
\end{proof}

\begin{remark}
The previous proof is analogues if we have now $\eta<a$ considering $e^{-at}<e^{-\eta t}$. In that case, we get
\[\sum_{j\in \mathbb{N}}|y_{j}(t)|\leq Ce^{-2\eta t}\sum_{j\in \mathbb{N}}|y_{j}(0)|.\]
If $\eta=a$, since the spectrum of $\mathcal{A}$ is discrete, we can always consider $\Tilde{\eta}$ close to $\eta$ such that $\Tilde{\eta}\neq a$, and \eqref{eq: neg eig} still holds with $\Tilde{\eta}$.
\end{remark}

The main hypothesis in the proof of the previous proposition is that \eqref{eq: z sat} is locally exponentially stable. As we already mentioned, even if $\mathbf{A}+\mathbf{B}\mathbf{K}$ is Hurwitz, \eqref{eq: z sat} could not be globally stable. To show that \eqref{eq: z sat} is locally exponentially stable, we use the theory of LMIs and appropriate geometric properties of the saturation. This strategy was used in \cite{Mironchenko_2020}, since our system is different from the one analyzed in \cite{Mironchenko_2020}, the matrices $\mathbf{A}$, $\mathbf{B}$ and $\mathbf{K}$ are not the same too, but they have the same structure. The spirit of the proof is the following. It is well known that to ensure the exponential stability of a linear system $\dot{z}=\mathbf{J}z$ it is enough to find a positively defined matrix $\mathbf{P}$ satisfying the Lyapunov inequality,
\[\mathbf{J}^{\top}\mathbf{P}+\mathbf{P}\mathbf{J}\prec -\mathbf{I}_{n},\]
which corresponds to a system of $\frac{n(n-1)}{2}$ equations that can be solved analytically and numerically. The action of the saturation map makes the task more intricate yet still allows us to guarantee local stability.
\medskip

For the saturation map $\sat$, we introduce the \textit{deadzone} nonlinearity $\phi:\mathbb{R}^{m}\mapsto \mathbb{R}^{m}$ by
\[\phi(u)=\sat(u)-u, \qquad \forall u \in \mathbb{R}^{m},\]
we also use the following generalized sector condition \cite[Lemma 1.6]{tarbouriech2011stability}.
\begin{lemma}
\label{lem: gen sector}
If for some $\mathbf{C} \in \mathbb{R}^{m\times n}$, $z\in \mathbb{R}^{n}$ and $j\in\{1,\dots,m\}$ it holds that $|((\mathbf{K}-\mathbf{C})z)_{j}|\leq \ell$, then
\[\phi_{j}(\mathbf{K}z)(\phi_{j}(\mathbf{K}z)+(\mathbf{C}z)_{j})\leq 0.\]
\end{lemma}
As was noticed in \cite[Appendix C.5]{tarbouriech2011stability}, the above inequality remains true when we multiply by positive scalars. In particular, for $\mathbf{D}\in \mathbb{R}^{m\times m}$ a diagonal, positive defined matrix, we  have
\[\phi_{j}(\mathbf{K}z)\mathbf{D}(\phi_{j}(\mathbf{K}z)+(\mathbf{C})z)_{j})\leq 0.\]
The motivation behind using the generalized sector condition is that allows to construct a Lyapunov functional that will help us to study exponential decay thanks to the appearance of an augmented matrix involving LMIs.

\noi Another useful tool is the Schur complement \cite[Appendix C.5]{tarbouriech2011stability}.

\begin{lemma}(Schur's lemma) \label{Lema de Schur}
Let $\mathbf{A}\in \mathbb{R}^{n\times n}$, $\mathbf{B}\in \mathbb{R}^{m\times n}$, $\mathbf{C}\in \mathbb{R}^{m\times m}$ and consider
$$\mathbf{M}:=\begin{pmatrix}
    \mathbf{A} & \mathbf{B}^\top\\
    \mathbf{B} & \mathbf{C}
\end{pmatrix}.$$
If $\mathbf{C}$ is positive definite, then $\mathbf{M}$ is positive definite if and only if its Schur complement $\mathbf{M\setminus C:= A-B^\top C^{-1}B}$ is positive semi-definite.
\end{lemma}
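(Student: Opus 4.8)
The plan is to prove the equivalence by completing the square in the quadratic form associated with $\mathbf{M}$, which is precisely the algebraic meaning of the Schur complement. Splitting a generic vector as $\binom{x}{y}$ with $x\in\mathbb{R}^{n}$ and $y\in\mathbb{R}^{m}$, I would first expand
\[
\begin{pmatrix}x\\y\end{pmatrix}^{\top}\mathbf{M}\begin{pmatrix}x\\y\end{pmatrix}=x^{\top}\mathbf{A}x+2x^{\top}\mathbf{B}^{\top}y+y^{\top}\mathbf{C}y.
\]
Since $\mathbf{C}\succ 0$ is invertible, completing the square in $y$ gives
\[
\begin{pmatrix}x\\y\end{pmatrix}^{\top}\mathbf{M}\begin{pmatrix}x\\y\end{pmatrix}=x^{\top}\bigl(\mathbf{A}-\mathbf{B}^{\top}\mathbf{C}^{-1}\mathbf{B}\bigr)x+\bigl(y+\mathbf{C}^{-1}\mathbf{B}x\bigr)^{\top}\mathbf{C}\bigl(y+\mathbf{C}^{-1}\mathbf{B}x\bigr).
\]
At the matrix level this is the congruence $\mathbf{M}=\mathbf{L}^{\top}\mathbf{D}\mathbf{L}$, with the invertible block-triangular matrix $\mathbf{L}=\left(\begin{smallmatrix}\mathbf{I}_{n}&0\\ \mathbf{C}^{-1}\mathbf{B}&\mathbf{I}_{m}\end{smallmatrix}\right)$ (of determinant $1$) and $\mathbf{D}=\operatorname{diag}(\mathbf{A}-\mathbf{B}^{\top}\mathbf{C}^{-1}\mathbf{B},\,\mathbf{C})$; I would emphasize this form, since congruence by an invertible matrix preserves inertia.

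Both implications follow directly from the completed-square identity. For necessity, assuming $\mathbf{M}\succ 0$, I would fix an arbitrary $x\neq 0$ and set $y=-\mathbf{C}^{-1}\mathbf{B}x$; this annihilates the (nonnegative) second term, the vector $\binom{x}{y}$ is nonzero, and positivity of the form forces $x^{\top}(\mathbf{A}-\mathbf{B}^{\top}\mathbf{C}^{-1}\mathbf{B})x>0$, whence the Schur complement is positive definite. For sufficiency, assuming the Schur complement is positive definite together with $\mathbf{C}\succ 0$, the right-hand side is a sum of two nonnegative quadratic terms that vanishes only when $x=0$ and, consequently, $y+\mathbf{C}^{-1}\mathbf{B}x=y=0$; thus the form is strictly positive off the origin and $\mathbf{M}\succ 0$.

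The only delicate point is the strictness bookkeeping. The sharp statement is that $\mathbf{M}\succ 0$ if and only if $\mathbf{A}-\mathbf{B}^{\top}\mathbf{C}^{-1}\mathbf{B}$ is positive \emph{definite}, not merely positive semi-definite; with only semi-definiteness of the Schur complement the same argument yields $\mathbf{M}\succeq 0$. The congruence viewpoint makes this transparent: by Sylvester's law of inertia $\mathbf{M}$ and $\mathbf{D}$ share the same signature, so $\mathbf{M}\succ 0\iff \mathbf{D}\succ 0\iff \mathbf{A}-\mathbf{B}^{\top}\mathbf{C}^{-1}\mathbf{B}\succ 0$ (using $\mathbf{C}\succ 0$). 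Beyond keeping this definite-versus-semidefinite distinction straight and verifying the elementary block identity $\mathbf{L}^{\top}\mathbf{D}\mathbf{L}=\mathbf{M}$, I do not expect any genuine obstacle.
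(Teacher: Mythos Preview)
The paper does not supply a proof of this lemma; it is stated with a citation to \cite[Appendix C.5]{tarbouriech2011stability} and used as a black box. Your completing-the-square argument (equivalently, the congruence $\mathbf{M}=\mathbf{L}^{\top}\operatorname{diag}(\mathbf{M}\setminus\mathbf{C},\mathbf{C})\mathbf{L}$) is the standard and correct proof, and your observation about strictness is apt: as written, the lemma's ``if and only if'' is inaccurate, since $\mathbf{M}\succ 0$ forces $\mathbf{M}\setminus\mathbf{C}\succ 0$, not merely $\succeq 0$. In the paper's applications (the feasibility argument for $\mathbf{M}_1$ and the ellipsoid inclusion via $\mathbf{M}_2$) only the direction ``$\mathbf{M}\setminus\mathbf{C}\succeq 0\Rightarrow \mathbf{M}\succeq 0$'' and its strict analogue are actually invoked, so the imprecision in the statement does not affect the subsequent results.
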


The next proposition shows the local exponential decay of \eqref{eq: z sat} and gives us an explicit region of attraction \cite[Proposition 2]{Mironchenko_2020}.
\begin{proposition}
\label{prop: region fed}
\sloppy Let $\mathbf{K}$ be as in \Cref{assum: kalman}. Let $\mathbf{P}\in\mathbb{R}^{n\times n}$ be symmetric and positive definite, and $\mathbf{D}\in\mathbb{R}^{n\times n}$ be diagonal and positive definite and $\mathbf{C}\in\mathbb{R}^{m\times n}$ such that

\beq\label{eq: M1} M_{1}:= \begin{pmatrix}
(\mathbf{A}+\mathbf{BK})^{\top}\mathbf{P}+\mathbf{P}(\mathbf{A}+\mathbf{BK}) & \mathbf{P}\mathbf{B}-(\mathbf{D}\mathbf{C})^{\top}\\
(\mathbf{P}\mathbf{B})^{\top}-\mathbf{D}\mathbf{C} & -2\mathbf{D}
\end{pmatrix} \prec 0,\eeq
and
\beq\label{eq: M2} M_{2}:= \begin{bmatrix}
\mathbf{P} & (\mathbf{K}-\mathbf{C})^{\top}\\
\mathbf{K}-\mathbf{C} & \ell^{2}\mathbf{I}_{m}
\end{bmatrix} \succeq 0.\eeq
Then \eqref{eq: z sat} is locally exponentially stable in $0$ with region of attraction $\mathcal{R}$ given by
\beq\label{eq: R} \mathcal{R}:= \left\{z\in \mathbb{R}^n \ : \ z^{\top}\mathbf{P}z \leq 1 \right\}. \eeq

Moreover, in $\mathcal{R}$, the function $V_{1}(z):=z^{\top}\mathbf{P}z$, $z \in \mathbb{R}^{n}$, decreases exponentially fast to $0$ along the solutions to \eqref{eq: z sat}, i.e., there is a constant $\alpha>0$ so that

\beq\label{eq: V1} \dot{V_{1}}(z)\leq -\alpha|z|^{2}, \quad z \in \mathcal{R}.\eeq
\end{proposition}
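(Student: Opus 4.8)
The plan is to use the quadratic Lyapunov function $V_1(z)=z^\top\mathbf{P}z$ together with the deadzone decomposition $\sat(\mathbf{K}z)=\mathbf{K}z+\phi(\mathbf{K}z)$, so that \eqref{eq: z sat} reads $\dot z=(\mathbf{A}+\mathbf{BK})z+\mathbf{B}\phi(\mathbf{K}z)$. Differentiating along trajectories gives
\[\dot V_1(z)=z^\top\big[(\mathbf{A}+\mathbf{BK})^\top\mathbf{P}+\mathbf{P}(\mathbf{A}+\mathbf{BK})\big]z+2z^\top\mathbf{P}\mathbf{B}\,\phi(\mathbf{K}z),\]
a negative-looking quadratic form in $z$ plus an indefinite cross term in $\phi(\mathbf{K}z)$ that must be absorbed using the sector condition.

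The crucial step, and the main obstacle, is to certify that the generalized sector condition of \Cref{lem: gen sector} is valid on the whole ellipsoid $\mathcal{R}$ from \eqref{eq: R}. For this I would apply \Cref{Lema de Schur} to $M_2\succeq0$: since $\ell^2\mathbf{I}_m\succ0$, the condition \eqref{eq: M2} is equivalent to $\mathbf{P}\succeq\ell^{-2}(\mathbf{K}-\mathbf{C})^\top(\mathbf{K}-\mathbf{C})$. Hence, for every $z\in\mathcal{R}$,
\[\ell^{-2}\,|(\mathbf{K}-\mathbf{C})z|^2=\ell^{-2}z^\top(\mathbf{K}-\mathbf{C})^\top(\mathbf{K}-\mathbf{C})z\le z^\top\mathbf{P}z\le1,\]
so that $|((\mathbf{K}-\mathbf{C})z)_j|\le|(\mathbf{K}-\mathbf{C})z|\le\ell$ for each $j$, which is exactly the hypothesis of \Cref{lem: gen sector}. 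Multiplying the resulting scalar inequalities by the positive diagonal entries of $\mathbf{D}$ and summing yields the aggregated slack inequality $\phi(\mathbf{K}z)^\top\mathbf{D}\big(\phi(\mathbf{K}z)+\mathbf{C}z\big)\le0$ for all $z\in\mathcal{R}$.

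With this inequality in hand, I introduce the augmented vector $\xi=\bigl(z^\top,\ \phi(\mathbf{K}z)^\top\bigr)^\top\in\mathbb{R}^{n+m}$. A direct expansion shows that $\xi^\top M_1\xi=\dot V_1(z)-2\phi(\mathbf{K}z)^\top\mathbf{D}\big(\phi(\mathbf{K}z)+\mathbf{C}z\big)$ with $M_1$ the matrix in \eqref{eq: M1}, and since the subtracted term is nonnegative on $\mathcal{R}$ we obtain $\dot V_1(z)\le\xi^\top M_1\xi$. As $M_1\prec0$, setting $\alpha:=-\lambda_{\max}(M_1)>0$ gives $\xi^\top M_1\xi\le-\alpha|\xi|^2\le-\alpha|z|^2$, which is precisely \eqref{eq: V1}.

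Finally I would convert \eqref{eq: V1} into the stated conclusion. Because $\dot V_1<0$ on $\mathcal{R}\setminus\{0\}$, in particular on the boundary $\{z^\top\mathbf{P}z=1\}$, the set $\mathcal{R}$ is forward invariant, so any trajectory issuing from $\mathcal{R}$ stays in $\mathcal{R}$ and \eqref{eq: V1} holds for all $t\ge0$. Using $\lambda_{\min}(\mathbf{P})|z|^2\le V_1(z)\le\lambda_{\max}(\mathbf{P})|z|^2$, inequality \eqref{eq: V1} becomes $\dot V_1\le-\gamma V_1$ with $\gamma=\alpha/\lambda_{\max}(\mathbf{P})$, whence $V_1(z(t))\le e^{-\gamma t}V_1(z_0)$ by Grönwall, and therefore
\[|z(t)|\le\sqrt{\tfrac{\lambda_{\max}(\mathbf{P})}{\lambda_{\min}(\mathbf{P})}}\;e^{-\gamma t/2}|z_0|.\]
Since $\mathbf{P}\succ0$, the ellipsoid $\mathcal{R}$ contains $0$ in its interior together with a ball $\overline{B_\varepsilon(0)}$, so $\mathcal{R}$ fulfills all the requirements of a region of attraction and \eqref{eq: z sat} is locally exponentially stable at $0$. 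The only genuinely delicate point is the geometric matching in the second paragraph—relating the ellipsoid defined by $\mathbf{P}$ to the constraint $|((\mathbf{K}-\mathbf{C})z)_j|\le\ell$ under which the sector inequality applies—while the remainder is a standard S-procedure-type Lyapunov computation.
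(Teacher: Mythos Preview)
Your proposal is correct and follows essentially the same route as the paper: deadzone decomposition of $\sat$, the generalized sector condition from \Cref{lem: gen sector} combined with the Schur complement of $M_2$ to certify it on $\mathcal{R}$, and the augmented quadratic form $\xi^\top M_1\xi$ to conclude \eqref{eq: V1}. The only notable differences are cosmetic: you establish the ellipsoid inclusion $\mathcal{R}\subset\{|(\mathbf{K}-\mathbf{C})z|\le\ell\}$ first and then invoke the sector condition, whereas the paper proceeds in the opposite order; and the paper additionally argues feasibility of the LMIs \eqref{eq: M1}--\eqref{eq: M2} (which is not strictly required by the statement since $\mathbf{P},\mathbf{D},\mathbf{C}$ are assumed given), while you add an explicit forward-invariance step for $\mathcal{R}$ that the paper leaves implicit.
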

\begin{proof}
%It is enough to follow \cite[Proposition 2]{Mironchenko_2020} where also the feasibility of \eqref{eq: M1} and \eqref{eq: M2} was shown. \comme{A ver si vale la pena poner esa demostracion}
Let us start by proving that the LMIs \eqref{eq: M1} and \eqref{eq: M2} are feasible, i.e., that if we take $\mathbf{K}$ as in \Cref{assum: kalman}, then it is possible to find $\mathbf{P}\in \mathbb{R}^{n\times n}$ symmetric positive definite, $\mathbf{D}\in \mathbb{R}^{m\times m}$ diagonal, positive definite and  $\mathbf{C}\in \mathbb{R}^{m\times n}$ such that \eqref{eq: M1} and \eqref{eq: M2} hold. As $\mathbf{A+BK}$ is Hurwitz, by the Lyapunov inequality \cite[equation (1.2)]{boyd-1994}, there exists a symmetric positive definite matrix $\mathbf{P}\in \mathbb{R}^{n\times n}$ such that
$$\left(\mathbf{A+BK}\right)^\top \mathbf{P} + \mathbf{P}\left(\mathbf{A+BK}\right) \prec 0.$$

To ensure the existence of such a $\mathbf{C}$, take $\mathbf{C}=\mathbf{0}\in \mathbb{R}^{m\times n}$ and observe that $\ell^2\mathbf{I}_m \succ 0$. Consider $\mathbf{P}\in \mathbb{R}^{n\times n}$ with $\|\mathbf{P}\|$ sufficiently large such that $$\mathbf{P} - \mathbf{K}^\top \frac{1}{\ell^2}\mathbf{I}_m \mathbf{K} \succeq 0.$$

\noi and $\mathbf{D}\in \mathbb{R}^{m\times m}$ with $\|\mathbf{D}\|$ large enough. It follows from \eqref{Lema de Schur} that \eqref{eq: M1} and \eqref{eq: M2} are feasible.  As \eqref{eq: M1} and  \eqref{eq: M2}  are feasible, we can consider $\mathbf{P}, \mathbf{D}, \mathbf{C}$, not necessarily the same as before satisfying \eqref{eq: M1} and  \eqref{eq: M2}.  Then,
\begin{align*}
\dot{V}_1(z) &= \dot{z}^\top \mathbf{P}z+z^\top \mathbf{P}\dot{z}\\
&= (\mathbf{A}z+\mathbf{B}\sat{\mathbf{K}z)}^\top \mathbf{P}z + z^\top \mathbf{P} (\mathbf{A}z+\mathbf{B}\sat{\mathbf{K}z)},
\end{align*}
introducing the \emph{deadzone} nonlinearity, we get
\[\sat{\mathbf{K}z} = \phi(\mathbf{K}z)+\mathbf{K}z,\]
thus
\[\dot{V}_1(z) = z^\top \left[(\mathbf{A+BK})^\top \mathbf{P} + \mathbf{P}(\mathbf{A+BK})\right]z + 2z^\top \mathbf{\mathbf{P}B}\phi(\mathbf{K}z),\]

\noi where we have used $z^\top \mathbf{PB}\phi(\mathbf{K}z) = \phi(\mathbf{K}z)^\top \mathbf{(PB)^\top}z$. Suppose that $|((\mathbf{K-C})z)_j| \leq \ell$ for all  $j\in \{1,...,m\}$ by \Cref{lem: gen sector} we obtain
\[0 \leq -2\phi(\mathbf{K}z)^\top \mathbf{D}(\phi(\mathbf{K}z)+\mathbf{C}z)),\]
therefore
\[\begin{aligned}
    \dot{V}_1(z) &=z^\top \left[(\mathbf{A+BK})^\top \mathbf{P} + \mathbf{P}(\mathbf{A+BK})\right]z + 2z^\top \mathbf{\mathbf{P}B}\phi(\mathbf{K}z)\\
    &\leq z^\top \left[(\mathbf{A+BK})^\top \mathbf{P} + \mathbf{P}(\mathbf{A+BK})\right]z + 2z^\top \mathbf{PB}\phi(\mathbf{K}z)-2\phi(\mathbf{K}z)^\top \mathbf{D}(\phi(\mathbf{K}z)+\mathbf{C}z))\\
    &= \begin{pmatrix}
        z\\ \phi(\mathbf{K}z)
    \end{pmatrix}^\top \mathbf{M}_1 \begin{pmatrix}
        z\\ \phi(\mathbf{K}z)
    \end{pmatrix}.
\end{aligned}\]

\noi As $\mathbf{M}_1$ is negative definite, we can take $-\alpha<0$ as the largest eigenvalue of $\mathbf{M}_1$, it follows that $\mathbf{M}_1 \preceq -\alpha \mathbf{I}_{n+m}$, from where we get

\begin{equation}\label{decaimientoexponencialv1}
    \dot{V}_1(z) \leq -\alpha |z|^2.
\end{equation}

Let $\beta_{\text{max}}>0$ be the largest eigenvalue of $\mathbf{P}\in \mathbb{R}^{n\times n}$,
\[-\alpha \left|z\right|^2\leq -\frac{\alpha}{\beta_{\text{max}}}z^\top \mathbf{P}z = -\frac{\alpha}{\beta_{\text{max}}}V_1(z).\]

We conclude
$$\dot{V}_1(z) \leq -\frac{\alpha}{\beta_{max}}V_1(z).$$

%%%%%%%%%%%%%%%%%%%%%%%%%%%%%%%%%%%%%%%%%%%%
Finally, we must prove that $|((\mathbf{K-C})z)_j|\leq \ell$ for all $j\in \{1,...,m\}$. In that direction, we force to the attraction region to be included in the generalized sector

\[\mathcal{R}=\{z\in \mathbb{R}^n \ : \ z^\top \mathbf{P}z\leq 1\} \subseteq \{z\in \mathbb{R}^n \ : \ |(\mathbf{K-C})z|\leq \ell\}.\]

This can be rewritten as

\beq\label{eq: set inclusion}\{z\in \mathbb{R}^n \ : \ z^\top \mathbf{P}z\leq 1\} \subseteq \{z\in \mathbb{R}^n \ : \ z^\top(\mathbf{K-C})^\top \frac{1}{\ell^2}\mathbf{I}_m(\mathbf{K-C})z\leq 1\}.\eeq

\noi As both sets are ellipsoids centered at $0\in \mathbb{R}^n$,  \cite[Proposition 1]{elipsoide} \eqref{eq: set inclusion} is equivalent to the following matrix inequality
\beq\label{eq: set ine}\mathbf{M}_2\setminus \ell^2 \mathbf{I}_m= \mathbf{P} - (\mathbf{K-C})^\top \frac{1}{\ell^2}\mathbf{I}_m(\mathbf{K-C}) \succeq 0,\eeq

where $\mathbf{M}_2\setminus \ell^2 \mathbf{I}_m$ denotes the Schur's complement. As ${\ell^2}\mathbf{I}_m$ is positive definite, by \Cref{Lema de Schur}, \eqref{eq: set ine} is equivalent to ask $\mathbf{M}_2$ satisfies \eqref{eq: M2}, i.e.,
\begin{equation*}
    \mathbf{M}_2=\begin{pmatrix}
    \mathbf{P} & (\mathbf{K}-\mathbf{C})^\top\\
    \mathbf{K}-\mathbf{C} & \ell^2 \mathbf{I}_m
\end{pmatrix}\succeq 0.
\end{equation*}
\end{proof}

We can now easily deduce our stabilization result for the system \eqref{eq: main lineal}. Let $\mathbf{K}$ as in \Cref{assum: kalman}, then taking  $\mathbf{P}$,  $\mathbf{D}$ and $\mathbf{C}$ as in \Cref{prop: region fed}, then \eqref{eq: z sat} is locally exponentially stable in $\mathcal{R}:= \left\{z\in \mathbb{R}^n \ : \ z^{\top}\mathbf{P}z \leq 1 \right\}$. By \Cref{prop: sta Lw} \eqref{eq: LKS closed loop} is locally exponentially stable in $0\in L^{2}(0,L)$ with region of attraction $\iota(\mathcal{R})\times X_{n}^{\perp}$. This completes the proof of \Cref{th: sta lineal}. $\square$
%%%%%%%%%%%%%%%%%%%%
%%%%%%%%%%%%%%%%%%%%%%%%
\subsection{On the controllability \texorpdfstring{\Cref{assum: kalman}}{Assumption~\ref{assum: kalman}}}
\label{sec: par A-B}
%%%%%
In the context of control involving Sturm-Liouville operators, it is standard to work under \Cref{assum: kalman}, i.e., the pair $(\mathbf{A, B})$ is stabilizable (controllable). In this part, we wish to go a bit further in this assumption in the specific case of the spatial operator associated to the linear KS equation with clamped boundary conditions \eqref{eq: clamped}.

\begin{enumerate}
    \item Case $m=1$. With a single spatial localization $b$, we have
    \[\mathbf{B}=\begin{pmatrix}
        b_1\\
        \vdots \\
        b_n
    \end{pmatrix}.\]
    \noi Therefore, the Kalman condition for the pair $(\mathbf{A, B})$ is seen as
    \begin{align*}
        \operatorname{det}\left(\mathbf{B}, \mathbf{AB}, \cdots, \mathbf{A}^{n-1}\mathbf{B}\right) &= \operatorname{det}\begin{pmatrix}
            b_1 & \sigma_1 \, b_1 & \cdots & \sigma_1^{n-1}\,b_1\\
            \vdots & \vdots & \ddots & \vdots\\
            b_n & \sigma_n \, b_n & \cdots & \sigma_n^{n-1}\,b_n
        \end{pmatrix}\\
        &=\displaystyle{\prod_{j=1}^{n} b_j} \operatorname{det}\begin{pmatrix}
            1 & \sigma_1  & \cdots & \sigma_1^{n-1}\\
            \vdots & \vdots & \ddots & \vdots\\
            1 & \sigma_n  & \cdots & \sigma_n^{n-1}
        \end{pmatrix}\\
        &= \displaystyle{\prod_{j=1}^{n} b_j} \text{Vdm}_{\sigma_{1},\cdots,\sigma_{n}}.
    \end{align*}
    \noi where $\text{Vdm}_{\sigma_{1},\dots,\sigma_{n}}=\prod_{1 \leq i < k \leq n} (\sigma_k-\sigma_i)$ corresponds to the Vandermonde determinant. Hence, controllability of the pair $(\mathbf{A, B})$ holds if $b_{j}\neq 0$ for all $j\in \{1,..., n\}$ and when the eigenvalues are simple. This last condition holds provided that $\lambda\notin\mathcal{N}$.

    \item Case $m\geq 2$ and simple eigenvalues. Using the Popov-Belevitch-Hautus test to study the block associated to each eigenvalue. In this case the matrix $\mathbf{B}$ has the form
    \[\mathbf{B}=\begin{pmatrix}
        b_{11} & \cdots & b_{1m}\\
        \vdots & \ddots & \vdots\\
        b_{n1} & \cdots & b_{nm}
    \end{pmatrix},\]
    For each eigenvalue, we aim to obtain
    $$\operatorname{Rank}\begin{bmatrix}
        \mathbf{A}-\sigma_j \mathbf{I}_n & \vrule &  \mathbf{B}
    \end{bmatrix} = n, \qquad \forall j\in \{1,...,n\}.$$
    Thus, the pair $(\mathbf{A, B})$ is controllable if 
    $$\forall j\in \{1, \ldots , n\}, \ \exists k\in \{1,\ldots , m\}:\ b_{jk} \not=0.$$

    \item Case $m\geq 1$ with repeated eigenvalues. When $m=1$, the controllability may fail when the Kalman matrix
    \[\mathbf{K}_{\mathbf{A,B}} = [\mathbf{B} \,|\, \mathbf{AB} \,|\, \cdots \,|\, \mathbf{A}^{n-1}\mathbf{B}] \in \mathbb{R}^{n\times nm},\]
    satisfies rank$(\mathbf{K})<n$. If we consider the localizations of the controller as design functions, a possible approach is to introduce additional control localizations, i.e., to consider (in the unsaturated case)
    \[ \partial_{t}y = \mathcal{A}y + \sum_{k=1}^{m+N_0}b_ku_k.\]
    In that framework, the Kalman matrix takes the form
    \[\tilde{\mathbf{K}}_{\mathbf{A,B}} = [\mathbf{B} \,|\, \mathbf{AB} \,|\, \cdots \,|\, \mathbf{A}^{n-1}\mathbf{B}] \in \mathbb{R}^{n\times n(m+N_{0})},\]
    where $N_0\in \mathbb{N}$ is the number of additional localizations. Since the column operations do not change the rank of the matrix, we can assume that
    \[\tilde{\mathbf{K}}_{\mathbf{A,B}} = [\mathbf{K}_{\mathbf{A,B}} \ |\ \mathbf{N}].\]
     We need an appropriated number of additional localization to complement the rank of $\mathbf{K}_{\mathbf{A,B}}$. As shown in \cite{lhachemi-2023vp}, one possible choice is
    \[N_0 = \max\{d_1, \ldots , d_n\},\]
   where $d_j$ is the algebraic multiplicity of $\sigma_j$.

    \begin{remark}
        The previous condition is clearly not sharp. For instance take,

        \[\mathbf{A} = \begin{bmatrix}
            2 & 0 & 0\\
            0 & 2 & 0\\
            0 & 0 & 1
        \end{bmatrix}, \qquad \mathbf{B} = \begin{bmatrix}
            2 \\
            3\\
            4
        \end{bmatrix},\]
        where we have $\sigma_1 = \sigma_2 = 2$ and $\operatorname{rank}(\mathbf{K}_{\mathbf{A,B}})=2.$
        By the proposed strategy, we take $N_0=2$ and the matrix
        \[\mathbf{B_1} = \begin{bmatrix}
             2 & 1 & 1\\
            3 & 1 & 1\\
            4 & 1 & 1
        \end{bmatrix} \qquad \Longrightarrow \qquad \operatorname{rank}(\mathbf{K}_{\mathbf{A,B_{1}}})=3,\]

        i,e., the pair $(\mathbf{A, B})$ is controllable. But for instance, we can also take
        \[\mathbf{B_2} = \begin{bmatrix}
             2 & 1 \\
            3 & 1\\
            4 & 1
        \end{bmatrix}\qquad \Longrightarrow \qquad \operatorname{rank}(\mathbf{K}_{\mathbf{A,B_{2}}})=3.\]
        Repeating the same idea with

        \[\mathbf{B_3} = \begin{bmatrix}
             2 & 1 \\
            3 & 0\\
            4 & 0
        \end{bmatrix}, \qquad \mathbf{B_4} = \begin{bmatrix}
             2 & 0 \\
            3 & 0\\
            4 & 1
        \end{bmatrix},\]
        we get  $\operatorname{rank}(\mathbf{K}_{\mathbf{A,B_{3}}})=3$ and $\operatorname{rank}(\mathbf{K}_{\mathbf{A,B_{4}}})=2$.
    \end{remark}
\end{enumerate}

To conclude this part, we study a very common situation in control theory for PDEs. Can we take a localization of the form, $b(x)=\mathds{1}_{\omega}$ where $\omega \subset (0,L)$ is a non-empty open set? We just mention that in the case $m=1$ and simple eigenvalues, this is not always possible with this approach. Following  \cite[Lemma 2.1]{Cerpa_2010}, we know that for $\sigma>0$ the eigenfunctions of $\mathcal{A}$ are of the form
\[e(x) = A\cos\left(\alpha\left( x-\frac{L}{2}\right)\right) + B\sin\left(\alpha\left(x-\frac{L}{2}\right)\right) + C \cos\left(\beta\left(x-\frac{L}{2}\right)\right) + D\sin\left(\beta\left(x-\frac{L}{2}\right)\right),\]
where $\alpha ,\beta \in \mathbb{R}$ are given by
\[\alpha = \sqrt{\frac{-\lambda - \sqrt{\lambda^2-4\sigma}}{2}}, \qquad \beta = \sqrt{\frac{-\lambda + \sqrt{\lambda^2-4\sigma}}{2}}.\]
By imposing the boundary conditions, we focus on the case
\[\beta \cos(\beta/2)\sin(\alpha/2)=\alpha\cos(\alpha/2)\sin(\beta/2), \quad \sin(\alpha/2)\not=0.\]
Thus, we get
\[e_j(x) = C\left[-\frac{\sin(\beta/2)}{\sin(\alpha/2)}\sin\left(\alpha\left(x-\frac{L}{2}\right)\right) + \sin\left(\beta\left(x-\frac{L}{2}\right)\right)\right].\]
The localization coefficients are then given by $b_j(x) = \langle \mathds{1}_\omega, e_j\rangle_{L^2(0,L)} = \int_\omega e_j(x)dx$. If we take $\omega = \left(\frac{L}{4}, \frac{3L}{4}\right)\subset (0,L)$, and observing the symmetry of $e_{j}$ with respect to $x=\frac{L}{2}$ we get
\[b_j(x) =\int_{L/4}^{3L/4} e_j(x)dx = 0, \qquad \det(\mathbf{K}_{\mathbf{A,B}})=0.\]
This shows that, even when the eigenvalues are simple, the choice of $\omega\subset (0,L)$ depends on the structure of the eigenfunctions.

%%%%%%%%%%%%%%%%%%%%ù%%
\section{\texorpdfstring{$H^{2}$ and fully nonlinear Stability}{H2 and fully nonlinear stability}}
\label{sec: hg nl}
In the previous sections, we dealt with the exponential stability in the $L^{2}-$norm. To obtain a stability result for the fully nonlinear system (which includes the nonlinearity $\mathcal{N}(y)$), we now prove exponential stability in the $H^{2}-$norm. Motivated by \cite{guzman2019stabilization}, we consider the following Lyapunov functional:
\beq\label{eq: Lya V2}V_{2}(t)=\dfrac{M}{2}V_{1}(t)-\langle y,\mathcal{A}y\rangle_{L^{2}(0,L)}=\dfrac{M}{2}z^{\top}\mathbf{P}z-\langle y,\mathcal{A}y\rangle_{L^{2}(0,L)},\eeq
where $z=\iota(\Pi_{n}y)$ corresponds to the finite dimensional component of $y$ solution of \eqref{eq: main lineal} and $M>0$ big enough to fix later.
\begin{remark}
The term $-\langle y,\mathcal{A}y\rangle_{L^{2}(0,L)}$ may seem unusual in the structure of the Lyapunov functional $V_{2}$. This Lyapunov functional has a frequency-based; that is, it separates low and high frequency behaviour. For instance if, $\mathbf{P}=\mathbf{I}$ and $y=\dis\sum_{j=1}^{\infty}y_{j}e_{j}$, we get
\[V_{2}(t)=\dfrac{M}{2}\sum_{j=1}^{n}\left(\dfrac{M}{2}-\sigma_{j}\right)y_{j}^{2}+\sum_{j=1}^{\infty}\left(-\sigma_{j}\right)y_{j}^{2}.\]
\end{remark}

The goal of this section is to establish the following result:
\begin{theorem}
\label{th: sta H2}
Let $\mathbf{K}$, $\mathbf{P}$ be as in \Cref{prop: region fed}. For any $y_{0}\in H^{2}(0,L)$ such that $\iota(\Pi_{n}y_{0})\subset S=\mathcal{R}$, where $\mathcal{R}$ is the region of attraction of the finite dimensional system. The unique solution of the closed-loop system \eqref{eq: LKS closed loop} satisfies
\[\|y(t,\cdot)\|_{H^{2}(0,L)}\leq C e^{-at}\|y_{0}\|_{H^{2}(0,L)}, \qquad t>0.\]
\end{theorem}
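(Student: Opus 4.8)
The plan is to run a Lyapunov argument with the functional $V_2$ of \eqref{eq: Lya V2}: I will show that $V_2$ is equivalent to $\|y\|_{H^2(0,L)}^2$ modulo an exponentially small $L^2$ correction, prove that $V_2$ decays exponentially along trajectories, and then combine this with the $L^2$ decay already available from \Cref{prop: sta Lw}. All computations are carried out for strong solutions; the general case $y_0\in H^2(0,L)$ follows by density using the regularity $y\in C([0,T];H^2(0,L))\cap L^2(0,T;H^4(0,L))$ and a standard Lions–Magenes differentiation lemma. To get the two-sided bound, note that the boundary conditions \eqref{eq: bc KS} make all boundary terms vanish, so integration by parts gives $-\langle y,\mathcal Ay\rangle_{L^2(0,L)}=\|\partial_x^2 y\|_{L^2(0,L)}^2-\lambda\|\partial_x y\|_{L^2(0,L)}^2$. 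Using $z^\top\mathbf Pz\le\beta_{\max}|z|^2\le\beta_{\max}\|y\|_{L^2(0,L)}^2$ for the upper bound, and the interpolation inequality $\lambda\|\partial_x y\|_{L^2(0,L)}^2\le\tfrac12\|\partial_x^2y\|_{L^2(0,L)}^2+C\|y\|_{L^2(0,L)}^2$ together with $z^\top\mathbf Pz\ge0$ for the lower bound, I obtain, for $M$ large,
\[ c_1\|y\|_{H^2(0,L)}^2-C_2\|y\|_{L^2(0,L)}^2\le V_2\le C_3\|y\|_{H^2(0,L)}^2. \]

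Next I differentiate $V_2$. The hypothesis $\iota(\Pi_n y_0)\in\mathcal R$ and the forward invariance of $\mathcal R$ (\Cref{prop: region fed}) guarantee $z(t)\in\mathcal R$ for all $t\ge0$, so \eqref{eq: V1} gives $\tfrac{M}{2}\dot V_1\le-\tfrac{M\alpha}{2}|z|^2$. For the second term, using $\partial_t y=\mathcal Ay+\mathcal F(y)$,
\[ -\frac{d}{dt}\langle y,\mathcal Ay\rangle_{L^2(0,L)}=-2\|\mathcal Ay\|_{L^2(0,L)}^2-2\langle\mathcal F(y),\mathcal Ay\rangle_{L^2(0,L)}\le-\|\mathcal Ay\|_{L^2(0,L)}^2+\|\mathcal F(y)\|_{L^2(0,L)}^2. \]
The decisive point is that $\mathcal F(y)$ depends only on $z=\iota(\Pi_n y)$, and since the saturation is bounded by its argument, $\|\mathcal F(y)\|_{L^2(0,L)}^2\le C|z|^2$. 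Collecting terms yields $\dot V_2\le\big(C-\tfrac{M\alpha}{2}\big)|z|^2-\|\mathcal Ay\|_{L^2(0,L)}^2$, and choosing $M$ large enough makes the bracket $\le-\gamma<0$.

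To close the estimate I use the spectrum. Writing $-\langle y,\mathcal Ay\rangle=L_0+H$ with $L_0=\sum_{j=1}^n(-\sigma_j)y_j^2$ and $H=\sum_{j>n}(-\sigma_j)y_j^2\ge0$, the gap \eqref{eq: neg eig} gives $\sigma_j^2\ge\eta(-\sigma_j)$ for $j>n$, hence $\|\mathcal Ay\|_{L^2(0,L)}^2\ge\eta H$; combined with $V_2\le C_1|z|^2+H$ this gives $\dot V_2\le-\gamma|z|^2-\|\mathcal Ay\|_{L^2(0,L)}^2\le-cV_2$ for $c=\min(\gamma/C_1,\eta)$. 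Since $M$ large also makes $V_2\ge0$, Grönwall yields $V_2(t)\le V_2(0)e^{-ct}$. Finally, from the lower bound on $V_2$, the bound $V_2(0)\le C_3\|y_0\|_{H^2(0,L)}^2$, and the $L^2$ decay $\|y(t)\|_{L^2(0,L)}\le C_0 e^{-\mu t}\|y_0\|_{L^2(0,L)}$ of \Cref{prop: sta Lw}, I get $\|y(t)\|_{H^2(0,L)}^2\le Ce^{-2at}\|y_0\|_{H^2(0,L)}^2$ with $2a=\min(c,2\mu)$, which is the claim.

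The main obstacle is that $-\langle y,\mathcal Ay\rangle$ is sign-indefinite, because the unstable modes $\sigma_1,\dots,\sigma_n\ge0$ contribute negatively; the scheme works only because the low-frequency block is controlled through the $V_1$-term, and $M$ large is chosen to simultaneously (i) render $V_2$ positive and (ii) make the $|z|^2$-drift negative, while the high-frequency block is absorbed via the spectral gap through $\|\mathcal Ay\|_{L^2(0,L)}^2\ge\eta H$. A secondary, purely technical, difficulty is justifying the differentiation of $V_2$ for merely $H^2$ data, which requires a density or regularization argument.
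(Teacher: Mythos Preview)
Your proof is correct and follows the same Lyapunov strategy as the paper: the same functional $V_2$, the same differential inequality $\dot V_2\le -cV_2$ obtained by absorbing $\|\mathcal F(y)\|_{L^2}^2\le C|z|^2$ into the $-\tfrac{M\alpha}{2}|z|^2$ term for $M$ large, and the same spectral-gap estimate $\|\mathcal Ay\|_{L^2}^2\ge\eta H$. The only difference is how you handle the lower bound for $V_2$: the paper proves directly, via a spectral expansion of $\|\partial_x^2 y\|_{L^2}^2$, that $V_2\gtrsim|z|^2+\|\partial_x^2y\|_{L^2}^2$ (\Cref{lem: step 1}), so it never needs to invoke the $L^2$ decay of \Cref{prop: sta Lw}; your weaker bound $V_2\ge c_1\|y\|_{H^2}^2-C_2\|y\|_{L^2}^2$ is equally valid once you feed that $L^2$ result in at the end.
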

\begin{proof}
We start by noting that, due to the boundary conditions \eqref{eq: bc KS}, it is enough to show the exponential decay of $\|\partial_{x}^{2}y(t,\cdot)\|_{L^{2}(0,L)}$. Indeed, we have
\beq\label{eq: equiv norm}\begin{aligned}
\|\partial_{x}y(t,\cdot)\|_{L^{2}(0,L)}^{2}&=\int_{0}^{L}\partial_{x}y(t,\cdot)\partial_{x}y(t,\cdot)dx\\
&=-\int_{0}^{L}y(t,\cdot)\partial_{x}^{2}y(t,\cdot)dx\\
&\leq\dfrac{1}{2}\left( \|y(t,\cdot)\|_{L^{2}(0,L)}^{2}+\|\partial_{x}^{2}y(t,\cdot)\|_{L^{2}(0,L)}^{2}\right).
\end{aligned}\eeq

\begin{lemma}
\label{lem: step 1}
There exist $C_{1},C_{2}>0$ such that
\begin{align}
\label{eq: step 1}
\dfrac{C_{1}}{2}|z|^{2}+\dfrac{C_{1}}{2C_{2}}\|\partial_{x}^{2}y(t,\cdot)\|_{L^{2}(0,L)}\leq V_{2}(t).
\end{align}
\end{lemma}
\begin{proof}
Let $\beta_{\min},\beta_{\max}>0$ be the minimum and maximum eigenvalues of the positive defined and symmetric matrix $\mathbf{P}$. Therefore, we have $V_{1}(t)\geq \beta_{\min}|z|^{2}$. On the other hand, since $\sigma_{1}\geq \sigma_{j}$ for all $j=1,\dots,n$ and that $z$ is solution of finite dimensional \eqref{eq: z sat}
\[\begin{aligned}
-\dfrac{1}{2}\sum_{j=1}^{\infty}\sigma_{j}y_{j}(t)^{2}&=-\dfrac{1}{2}\sum_{j=1}^{n}\sigma_{j}y_{j}(t)^{2}-\dfrac{1}{2}\sum_{j=n+1}^{\infty}\sigma_{j}y_{j}(t)^{2}\\
&\geq -\dfrac{\sigma_{1}}{2}|z|^{2}-\dfrac{1}{2}\sum_{j=n+1}^{\infty}\sigma_{j}y_{j}(t)^{2}.
\end{aligned}\]
Therefore,
\beq\label{eq: V_2 >}\begin{aligned}
V_{2}(t)&=-\dfrac{1}{2}\sum_{j=1}^{\infty}\sigma_{j}y_{j}(t)^{2}+\dfrac{M}{2}V_{1}(t)\\
&\geq \dfrac{(M\beta_{\min}-\sigma_{1})}{2}|z|^{2}-\dfrac{1}{2}\sum_{j=n+1}^{\infty}\sigma_{j}y_{j}(t)^{2}\\
&\geq C_{1}\left(|z|^{2}-\sum_{j=n+1}^{\infty}\sigma_{j}y_{j}(t)^{2}\right), \qquad C_{1}=\min\left\{\dfrac{(M\beta_{\min}-\sigma_1)}{2},\dfrac{1}{2}\right\}.
\end{aligned}\eeq
 Similarly to before, and recalling that $-e^{\prime\prime\prime\prime}_{j}-\lambda e^{\prime\prime}_{j}=\sigma_{j}e_{j}$, for all $j \in \mathbb{N}$
\[\begin{aligned}
\|\partial_{x}^{2}y(t,\cdot)\|_{L^{2}(0,L)}^{2}&=\sum_{(j,k)\in\mathbb{N}^{2}}y_{j}(t)y_{k}(t)\langle e_{j}^{\prime\prime},e_{k}^{\prime\prime}\rangle_{L^{2}(0,L)}\\
&=\sum_{(j,k)\in\mathbb{N}^{2}}y_{j}(t)y_{k}(t)\langle e_{j},e_{k}^{\prime\prime\prime\prime}\rangle_{L^{2}(0,L)}\\
&=\sum_{(j,k)\in\mathbb{N}^{2}}y_{j}(t)y_{k}(t)\langle e_{j},\lambda e^{\prime\prime}_{k}-\sigma_{k}e_{k}\rangle_{L^{2}(0,L)}\\
&=-\lambda\langle y(t,\cdot),\partial_{x}^{2}y(t,\cdot)\rangle_{L^{2}(0,L)}-\sum_{k=1}^{\infty}\sigma_{j}y^{2}_{j}(t)\\
&\leq\dfrac{1}{2}\left( \lambda^{2}\|y(t,\cdot)\|_{L^{2}(0,L)}^{2}+\|\partial_{x}^{2}y(t,\cdot)\|_{L^{2}(0,L)}^{2}\right)-\sum_{k=1}^{\infty}\sigma_{j}y^{2}_{j}(t).
\end{aligned}\]
Next, we find
\[\begin{aligned}
\|\partial_{x}^{2}y(t,\cdot)\|_{L^{2}(0,L)}^{2}&\leq\lambda^{2}\|y(t,\cdot)\|_{L^{2}(0,L)}^{2}-2\sum_{k=1}^{\infty}\sigma_{j}y^{2}_{j}(t)\\
&\leq \lambda^{2}\sum_{k=1}^{\infty}y^{2}_{j}(t)-2\sum_{k=n+1}^{\infty}\sigma_{j}y^{2}_{j}(t)\\
&=\lambda^{2}\sum_{k=1}^{n}y^{2}_{j}(t)+\sum_{k=n+1}^{\infty}(\lambda^{2}-2\sigma_{j})y^{2}_{j}(t)\\
&=\lambda^{2}|z|^{2}+\sum_{k=n+1}^{\infty}(\lambda^{2}-2\sigma_{j})y^{2}_{j}(t).
\end{aligned}\]

\noi Now, for $j\geq n+1$ we have $\sigma_j <0$ and $0< -\sigma_{n+1}\leq -\sigma_{j}$, thus

\begin{equation*}
    0<\lambda^2-2\sigma_j = -\sigma_j\Bigg(2-\frac{\lambda^2}{\sigma_j}\Bigg) \leq -\sigma_j \Bigg(2-\frac{\lambda^2}{\sigma_{n+1}}\Bigg).
\end{equation*}
Therefore, we get the following estimate
\begin{equation*}\label{H2 superior}
    \|\partial_{x}^{2}y(t,\cdot)\|_{L^2(0,L)}^2 \leq C_2\Bigg(|z|^2 - \sum_{j=n+1}^\infty \sigma_j y^{2}_j(t)\Bigg),
\end{equation*}

where $C_2 = \max\left\{\lambda^2, 2-\frac{\lambda^2}{\sigma_{n+1}}\right\}>0$. Similarly

\begin{equation}\label{estimacioninferior}
    -\sum_{j=n+1}^\infty \sigma_j y^{2}_j(t) \geq \frac{\|\partial_{x}^{2}y(t,\cdot)\|_{L^2(0,L)}^2}{C_2} -\left|z\right|^2.
\end{equation}

\noi Finally, from the previous computations we obtain

\[\begin{aligned}
V_2(t) &\geq C_1 |z|^2 -C_1 \sum_{j=n+1}^\infty \sigma_jy^{2}_j(t)\\
&\geq C_1 |z|^2 -\frac{C_1}{2} \sum_{j=n+1}^\infty \sigma_jy^{2}_j(t)\\
&\geq  C_1 |z|^2 +\frac{C_1\|\partial_{x}^{2}y\|_{L^2(0,L)}^2}{2C_2} - \frac{C_1\left|z\right|^2}{2}\\
&=\frac{C_1}{2}\left|z\right|^2 + \frac{C_1}{2C_2}\|\partial_{x}^{2}y\|_{L^2(0,L)},
\end{aligned}\]
from where we deduce  \eqref{eq: step 1}.
\end{proof}

%%%%
\begin{lemma}
\label{lem: step 2}
There exists $a>0$ such that
$$V_2(t) \leq e^{-a t}\,V_2(0), \qquad \forall t\geq 0.$$
\end{lemma}

\begin{proof}
From \Cref{prop: region fed}, we know that the finite dimensional system \eqref{eq: z sat}
satisfies \eqref{eq: V1}. On the other hand, as $\mathcal{A}$ is self-adjoint we have
\[\dfrac{d}{dt}\langle\mathcal{A}y, y\rangle_{L^2(0,L)}=2\langle\mathcal{A}y, \partial_{t}y\rangle_{L^2(0,L)}.\]
Using \eqref{eq: y ope} with control $u=\mathbf{K}z$ we see

\begin{equation}
\label{eq: dt A}
\begin{aligned}
 -\frac{d}{dt}\langle\mathcal{A}y,y\rangle_{L^2(0,L)} &= -2\left\langle\mathcal{A}y, \mathcal{A}y + \sum_{k=1}^m b_{k}\cdot \sat{\left(\mathbf{K}z\right)_k}\right\rangle_{L^2(0,L)}\\
    &= -2\|\mathcal{A}y\|_{L^2(0,L)}^2 -2 \left\langle \mathcal{A}y, \sum_{k=1}^m b_{k}\cdot \sat{(\mathbf{K}z)_k}\right\rangle_{L^2(0,L)}.
\end{aligned}
\end{equation}

The second term in the above inequality can be estimated as follows
\[\begin{aligned}
    -2\left\langle \mathcal{A}y, \sum_{k=1}^m b_{k}\cdot \sat{(\mathbf{K}z)_k}\right\rangle_{L^2(0,L)} &\leq \|\mathcal{A}y\|^2_{L^2(0,L)} + \left|\mathbf{b}_j|^2\cdot |\sat{
    \mathbf{K}z}\right|^2\\
    &\leq \|\mathcal{A}y\|^2_{L^2(0,L)} + \|\mathbf{K}\|^2\sum_{k=1}^m \|b_k\|^2_{L^2(0,L)}|z|^2.
\end{aligned}\]
Using this information in \eqref{eq: dt A}

\begin{align}
\begin{split}
   -\frac{d}{dt}\langle\mathcal{A}y,y\rangle_{L^2(0,L)} &\leq -\|\mathcal{A}y\|^2_{L^2(0,L)}  + \|\mathbf{K}\|^2\sum_{k=1}^m \|b_k\|^2_{L^2(0,L)}|z|^2.\\
    &=-\sum_{j=1}^\infty \sigma_j^2y^{2}_j(t) + \|\mathbf{K}\|^2\sum_{k=1}^m ||b_k||^2_{L^2(0,L)}|z|^2.\label{-d/dt Aw w}
\end{split}
\end{align}

Observe that the eigenvalues $\{\sigma_j\}_{j\in\mathbb{N}}$ satisfy the following

\begin{equation*}
    \begin{cases}
        \sigma_1 > \sigma_j \geq 0, \quad & j\leq n\\
        0>\sigma_{n+1} \geq \sigma_j, \quad &j\geq n+1.
    \end{cases}
\end{equation*}

Moreover, if $j\geq n+1$ we have

\begin{equation}\label{Norma sigma cuadrado}
    -\sigma_j \leq \frac{\sigma_j^2}{-\sigma_{n+1}}, \qquad \forall j\geq n+1,
\end{equation}
and therefore, taking
\begin{equation}\label{Condicion M2}
    M\geq -\frac{1}{\sigma_{n+1}} ,
\end{equation}

\noi we see from \eqref{Condicion M2} and \eqref{Norma sigma cuadrado} that
\[\begin{aligned}
M\|\mathcal{A}y\|_{L^2(0,L)}^2 &=M \sum_{j=1}^\infty \sigma_j^2y^{2}_j(t)\\
&\geq M\sum_{j=n+1}^\infty \sigma_j^2y^{2}_j(t)\\
&\geq -\frac{1}{\sigma_{n+1}} \sum_{j=n+1}^\infty \sigma_j^2y^{2}_j(t)\\
&\geq -\sum_{j=n+1}^\infty \sigma_jy^{2}_j(t).
\end{aligned}\]
From the last line we deduce
\begin{equation}\label{-Aw^2}
    -\|\mathcal{A}y\|_{L^2(0,L)}^2 \leq \frac{1}{M} \sum_{j=n+1}^\infty \sigma_jy^{2}_j(t) \leq \frac{1}{M} \sum_{j=1}^\infty \sigma_jy^{2}_j(t).
\end{equation}

\noi Gathering \eqref{-Aw^2}, \eqref{-d/dt Aw w} and \eqref{eq: V1}

\begin{equation}
\label{Cotaa V2}
 \begin{aligned}
    \dot{V}_2(t) &= -\frac{1}{2}\frac{d}{dt}\langle\mathcal{A}y, y\rangle_{L^2(0,L)} + \frac{M}{2}\dot{V}_1(t)\\
    &\leq -\frac{1}{2} \sum_{j=1}^\infty \sigma^2_jy^{2}_j(t) + \frac{\left(\|\mathbf{K}\|^2\displaystyle\sum_{k=1}^m \|b_k\|^2_{L^2(0,L)}\right)}{2}\left|z\right|^2 - \frac{\alpha M}{2}\left| z\right|^2\\
    &=-\frac{1}{2}\|\mathcal{A}y\|^2_{L^2(0,L)} + \frac{\left(\|\mathbf{K}\|^2\displaystyle\sum_{k=1}^m \|b_k\|^2_{L^2(0,L)} -\alpha M\right)}{2}\left|z\right|^2\\
    &\leq \frac{1}{2M} \sum_{j=1}^\infty \sigma_jy^{2}_j(t)+  \frac{\left(\|\mathbf{K}\|^2\displaystyle\sum_{k=1}^m \|b_k\|^2_{L^2(0,L)}-\alpha M\right)}{2}|z|^2.
\end{aligned}
\end{equation}

As $\alpha>0$, we can take
\begin{equation}\label{C3}
C_3 > \frac{1}{2\alpha},
\end{equation}

and $M$ such that
\begin{equation}\label{Cota C3}
    \|\mathbf{K}\|^2\displaystyle\sum_{k=1}^m \|b_k\|^2_{L^2(0,L)}-\alpha M < -\frac{M}{2C_3},
\end{equation}
to obtain
\begin{equation}\label{Cota V2}
    \dot{V}_2(t) \leq \frac{1}{2M} \sum_{j=1}^\infty \sigma_jy^{2}_j(t) -\frac{M}{4C_3}\left| z\right|^2.
\end{equation}
\begin{remark}
From \eqref{C3} and estimate \eqref{Cota C3}, we deduce the inequality
\[0 <\frac{\|\mathbf{K}\|^2\displaystyle\sum_{k=1}^m \|b_k\|^2_{L^2(0,L)}}{\left(\alpha - \frac{1}{2C_3}\right)} < M,\]
that ensure that the value of $C_3>0$ is feasible and do not affect the sign of $M$.
\end{remark}

On the other hand, by definition of $\beta_{\text{max}}$
\begin{equation*}
-\left| z\right|^2 \leq -\frac{1}{\beta_{\text{max}}} {z^\top \mathbf{P}z},
\end{equation*}

using this in \eqref{Cota V2}, we get

 \begin{align*}
     \dot{V}_2(t) &\leq \frac{1}{2M} \sum_{j=1}^\infty \sigma_jy^{2}_j(t) -\frac{M}{4C_3}\left| z\right|^2\\
     &\leq \frac{1}{2M} \sum_{j=1}^\infty \sigma_jy^{2}_j(t) -\frac{1}{2C_3\beta_{\text{max}}}\frac{M}{2}\left| z\right|^2.
 \end{align*}

Finally, if $M\geq 2C_3\beta_{max}$, we obtain

 \beq\label{cota V2 2}\begin{aligned}
     \dot{V}_2(t) &\leq \frac{1}{2M} \sum_{j=1}^\infty \sigma_jy^{2}_j(t) -\frac{1}{2C_3\beta_{\text{max}}}\frac{M}{2}\left| z\right|^2\\
     &\leq \frac{1}{2C_3\beta_{\text{max}}}\left(\frac{1}{2}\sum_{j=1}^\infty \sigma_jy^{2}_j(t) - \frac{M}{2}z^\top \mathbf{P}z\right)\\
     &=-\frac{1}{2C_3\beta_{\text{max}}} \left(-\frac{1}{2}\langle\mathcal{A}y,y\rangle_{L^2(0,L)} + \frac{M}{2}V_1(t)\right)\\
     &=-\frac{1}{2C_3\beta_{\text{max}}} V_2(t).
 \end{aligned}\eeq

which finishes the proof of \Cref{lem: step 2} with
$a = \frac{1}{2C_3\beta_{\text{max}}}.$
\end{proof}
We conclude the proof of the exponential stability of $V_{2}$ by invoking the following lemma,
\begin{lemma}\label{lem: step 3}
There exists $C_4>0$ such that
    \[V_2(0) \leq C_4\|y_0\|^2_{H^2(0,L)}.\]
\end{lemma}
\begin{proof}
Using the expression of $\mathcal{A}$ and its boundary conditions \eqref{eq: bc KS} we observe that
\begin{align*}
    V_2(t) &= -\frac{1}{2}\langle\mathcal{A}y,y\rangle_{L^2(0,L)} + \frac{M}{2}V_1(t)\\
    &\leq \frac{1}{2}\|\partial_{x}^{2}y(t,\cdot)\|^2_{L^2(0,L)} -\frac{\lambda}{2}\|\partial_{x}y(t,\cdot)\|^2_{L^2(0,L)} + \frac{M\beta_{\text{max}}}{2}\left|z\right|^2\\
    &\leq \frac{1}{2}\|\partial_{x}^{2}y(t,\cdot)\|^2_{L^2(0,L)} + \frac{M\beta_{\text{max}}}{2}\|y(t,\cdot)\|^2_{L^2(0,L)}\\
    &\leq C_{4}\|y(t,\cdot)\|^2_{H^2(0,L)}.
\end{align*}
We conclude by taking $t=0$.
\end{proof}
Finally, \Cref{th: sta H2} comes immediately from \Cref{lem: step 1,lem: step 2,lem: step 3}.
\end{proof}
%%%%%%%%%%%%%%%%%%%%%%%%%%%%%
\subsection{Nonlinear case}
\label{sec: nonlinear}
In this section, we study the stabilization of the nonlinear system \eqref{eq: KS} under saturated feedback law. More specifically, we show the local exponential stability of the following system

\beq\label{eq: main closed-loop nl}
\begin{cases}
\partial_{t}y+\lambda \partial_{x}^{2}y+\partial_{x}^{4}y+\mathcal{N}(y)=\dis\sum_{k=1}^{m}b_{k}(x)\sat\left(\left(\mathbf{K}\iota^{-1}(\Pi_{n}y(\cdot,t))\right)\right)_{k},& t>0, \ x \in (0,L), \\
y(0,x)=y_{0}(x), & x \in (0,L).
\end{cases}\eeq
Similarly to the linear case  expanded every solution $y(t,\cdot) \in D(\mathcal{A})$ of \eqref{eq: main closed-loop nl} as a series on the eigenfunctions, $e_{j}(\cdot)$ we get that \eqref{eq: main closed-loop nl} is equivalent to the next infinity dimensional control system

\beq\label{eq: infinite node}
\dot{y}_{j}(t)=\sigma_{j}y_{j}(t)y+\mathbf{b}_{j}\sat\left(\left(\mathbf{K}\iota^{-1}(\Pi_{n}y(\cdot,t))\right)\right)+f_{j}(t), \quad j \in \mathbb{N},
\eeq

where $f_{j}(t)=\langle -\mathcal{N}(y),e_{j} \rangle_{L^{2}(0,L)}$. Recalling that  $n \in \mathbb{N}$ is the number of unstable eigenvalues of $\mathcal{A}$, we define

\beq\label{eq: F cont} \mathbf{F}(t)=\begin{pmatrix}
f_{1}(t)\\
\vdots \\
f_{n}(t)
\end{pmatrix},\eeq
we can construct the next finite dimensional system:
\beq\label{eq: node sat} \dot{z}(t)=\mathbf{A}z(t)+\mathbf{B}\sat(\mathbf{K}z(t))+\mathbf{F}(t). \eeq

Now we state our main result.

\begin{theorem}
\label{th: sta y nl}Consider the system \eqref{eq: main closed-loop nl} and let $\mathbf{K}$, $\mathbf{P}$ as in \Cref{prop: region fed}, then there exists $\eps>0$ such that for all $y_{0} \in D(\mathcal{A})$ satisfying $\|y_{0}\|_{H^{2}(0,L)}\leq \eps$, the unique solution $y \in C([0,T];H^{2}(0,L))\cap L^{2}(0,T,H^{4}(0,L))$ for all $T>0$ satisfy

\beq\label{eq: stb H2}
\|y(t,\cdot)\|_{H^{2}(0,L)}\leq Ce^{-at}\|y_{0}\|_{H^{2}(0,L)},\eeq

for $C>0$.
\end{theorem}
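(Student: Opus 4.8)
The plan is to reuse the Lyapunov functional $V_2$ from \eqref{eq: Lya V2} and to treat the nonlinearity $\mathcal{N}(y)$ as a perturbation of the linear closed-loop dynamics analyzed in \Cref{th: sta H2}. First I would record that the two-sided bound relating $V_2$ to $\|y\|_{H^2(0,L)}^2$, established in \Cref{lem: step 1} and \Cref{lem: step 3}, is purely algebraic in the Fourier coefficients of $y$ and therefore remains valid for the solution of \eqref{eq: main closed-loop nl}; combined with \eqref{eq: equiv norm} it yields constants $c_1,c_2>0$ with $c_1\|y\|_{H^2(0,L)}^2\le V_2(t)\le c_2\|y\|_{H^2(0,L)}^2$. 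Taking $y_0\in D(\mathcal{A})$ supplies the strong regularity needed to differentiate $V_2$ rigorously and to justify the energy identities.

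Next I would differentiate $V_2$ along \eqref{eq: main closed-loop nl}. Since $\partial_t y=\mathcal{A}y-\mathcal{N}(y)+\sum_k b_k\sat((\mathbf{K}z)_k)$ and $z$ solves the forced finite-dimensional system \eqref{eq: node sat}, the computation of \Cref{lem: step 2} reproduces verbatim except for two extra contributions: the forcing term $M\,z^\top\mathbf{P}\mathbf{F}$ arising from $\tfrac{M}{2}\dot V_1$, and a term proportional to $\langle\mathcal{A}y,\mathcal{N}(y)\rangle_{L^2(0,L)}$ arising from $-\tfrac{d}{dt}\langle y,\mathcal{A}y\rangle$. The purely linear part is bounded by $-a\,V_2$ exactly as in \Cref{lem: step 2}, provided $z(t)$ remains in the region of attraction $\mathcal{R}$ of \eqref{eq: R} so that \eqref{eq: V1} applies. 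The key estimates are higher-order bounds on the nonlinearity: using $H^2(0,L)\hookrightarrow C^1([0,L])$ together with $\partial_x^2(y^3)=6y(\partial_x y)^2+3y^2\partial_x^2 y$ one gets $\|\mathcal{N}(y)\|_{L^2(0,L)}\le C(\|y\|_{H^2(0,L)}^2+\|y\|_{H^2(0,L)}^3)$. Since $|\mathbf{F}|\le\|\mathcal{N}(y)\|_{L^2(0,L)}$ and $|z|\le\|y\|_{H^2(0,L)}$, the forcing term is $O(\|y\|_{H^2(0,L)}^3+\|y\|_{H^2(0,L)}^4)$. For the dangerous top-order term I would use Young's inequality, $2\langle\mathcal{A}y,\mathcal{N}(y)\rangle\le\epsilon\|\mathcal{A}y\|_{L^2(0,L)}^2+\epsilon^{-1}\|\mathcal{N}(y)\|_{L^2(0,L)}^2$, reserving a small fraction $\epsilon$ of the dissipation $-\|\mathcal{A}y\|_{L^2(0,L)}^2$ already present in \Cref{lem: step 2}, which only modifies the rate $a$ by a fixed amount. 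Collecting everything and invoking the norm equivalence produces the scalar differential inequality $\dot V_2\le -a\,V_2+C\,V_2^{3/2}$, valid whenever $V_2\le 1$.

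The main obstacle is to guarantee that $z(t)$ never leaves $\mathcal{R}$, since the local LMI-based decay \eqref{eq: V1} holds only there; I would settle this by a continuity/bootstrap argument. From \Cref{lem: step 1} one has $z^\top\mathbf{P}z\le\beta_{\text{max}}|z|^2\le(2\beta_{\text{max}}/C_1)V_2$, so smallness of $V_2$ forces $z\in\inte(\mathcal{R})$. Setting $T^\ast=\sup\{\tau\ge0:\ z(t)^\top\mathbf{P}z(t)\le1\ \forall t\in[0,\tau]\}$, the inequality $\dot V_2\le -a\,V_2+C\,V_2^{3/2}$ holds on $[0,T^\ast)$; applying the nonlinear Gronwall estimate \Cref{ineq} with $p=3/2$, $b\equiv-a$, $k\equiv C$ shows that for $V_2(0)$ small the auxiliary function $w$ stays positive and $V_2(t)\le e^{-at}\big(V_2(0)^{-1/2}-C/a\big)^{-2}\le C'e^{-at}V_2(0)$. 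Hence $V_2$ cannot grow, $z^\top\mathbf{P}z$ stays below $1$, and by continuity $T^\ast=\infty$. Choosing $\varepsilon$ small enough that $\|y_0\|_{H^2(0,L)}\le\varepsilon$ guarantees simultaneously $V_2(0)\le1$ and $(2\beta_{\text{max}}/C_1)C'V_2(0)<1$, which closes the loop. Finally, the uniform bound on $\|y(t)\|_{H^2(0,L)}$ thus obtained, combined with the local well-posedness of \Cref{prop: wp nlgKS} (whose existence time depends only on the $H^2$-norm) and the a priori estimate \eqref{eq: dH2 full nl}, extends the solution to all $T>0$; translating $V_2(t)\le C'e^{-at}V_2(0)$ back through the norm equivalence and taking square roots yields \eqref{eq: stb H2} with rate $a/2$.
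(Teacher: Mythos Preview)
Your proposal is correct and follows essentially the same route as the paper: differentiate the Lyapunov functional $V_2$ of \eqref{eq: Lya V2} along the nonlinear flow, recycle the linear estimates from \Cref{lem: step 1}--\Cref{lem: step 3}, absorb $\langle\mathcal{A}y,\mathcal{N}(y)\rangle$ via Young's inequality into the spare dissipation $-\|\mathcal{A}y\|_{L^2}^2$, and close with the nonlinear Gronwall inequality of \Cref{ineq}.

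The only genuine difference is the exponent in the scalar inequality. The paper applies Young to $\mathbf{F}^\top\mathbf{P}z$ as well, splitting it into an $|z|^2$ piece (absorbed by the $-\alpha|z|^2$ coming from \eqref{eq: V1}) and an $|\mathbf{F}|^2$ piece; since it proves the pointwise bound $\|\mathcal{N}(y)\|_{L^2(0,L)}\le C_5V_2$, this yields $\dot V_2\le -AV_2+BV_2^{\,2}$ and \Cref{ineq} is invoked with $p=2$. You instead estimate $\mathbf{F}^\top\mathbf{P}z$ directly by $|z|\,|\mathbf{F}|\lesssim V_2^{3/2}$ and land on $p=3/2$. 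Both are legitimate; your version has the marginal advantage of covering the Cahn--Hilliard cubic term $\partial_x^2(y^3)$ explicitly (the paper's displayed estimate \eqref{eq: f_j est} treats only $y\partial_x y$), while the paper's version gives the slightly cleaner smallness condition $V_2(0)\le A/(2B)$. Your explicit bootstrap showing $z(t)\in\mathcal{R}$ for all $t$ is also a bit more careful than the paper, which handles this point implicitly.
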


\begin{proof}
By \Cref{sec: well posedness}, we know that \eqref{eq: main closed-loop nl} correspond to \eqref{eq: LKS o KS closed loop} with $\delta=1$. Then, well-posedness part is a consequence of \Cref{prop: wp nlgKS} and \eqref{eq: dH2 full nl} and the stability. The proof of the stability part of \Cref{th: sta y nl} is based on the Lyapunov function \eqref{eq: Lya V2}. Let $T>0$, we calculate the derivative of $V_{2}$ along the trajectories of \eqref{eq: infinite node}

\[\dot{V}_{2}(t)=\dfrac{M}{2}\dfrac{d}{dt}\left(z^{\top}\mathbf{P}z\right)-\dfrac{d}{dt}\langle y,\mathcal{A}y\rangle_{L^{2}(0,L)}\]

We will analyze these two terms. First using \eqref{eq: node sat}

\[\dfrac{d}{dt}\left(z^{\top}\mathbf{P}z\right)=(\mathbf{A}z+\mathbf{B}\sat(\mathbf{K}z)+\mathbf{F})^{\top}\mathbf{P}z+z^{\top}\mathbf{P}(\mathbf{A}z+\mathbf{B}\sat(\mathbf{K}z)+\mathbf{F}).\]

By Proposition \eqref{prop: region fed} we know that in the set $\mathcal{R}$

\[(\mathbf{A}z+\mathbf{B}\sat(\mathbf{K}z))^{\top}\mathbf{P}z\\
+z^{\top}\mathbf{P}(\mathbf{A}z+\mathbf{B}\sat(\mathbf{K}z))\leq -2\alpha|z|^{2}.\]

Thus, we get
\beq\label{eq: z der nl}\dfrac{d}{dt}\left(z^{\top}\mathbf{P}z\right)\leq-\alpha|z|^{2}+2\mathbf{F}^{\top}\mathbf{P}z.\eeq

Then, using Young's inequality for $\eps_{1}>0$ to define later

\beq\label{eq: z nl est}\left|\mathbf{F}^{\top}\mathbf{P}z\right| \leq \|\mathbf{P}\||\mathbf{F}||z| \leq\dfrac{\|\mathbf{P}\|}{2}\left(\dfrac{1}{\eps_{1}}|\mathbf{F}|^{2}+\eps_{1}|z|^{2}\right).\eeq
Using the orthonormality of $e_{j}$, we observe
\[\begin{aligned}
|f_{j}(t)|&=\left|\int_{0}^{L}f(t,x)e_{j}(x)dx\right|\\
&\leq \|f(t,\cdot)\|_{L^{2}(0,L)}\\
&=\left\|y\partial_{x}y\right\|_{L^{2}(0,L)}.
\end{aligned}\]
By the continuous injection from $H^{1}(0,L)$ to $L^{\infty}(0,L)$ and \eqref{eq: equiv norm}
\[\begin{aligned}
\left\|y\partial_{x}y\right\|_{L^{2}(0,L)}&\leq\|y\|_{L^{\infty}(0,L)}\|\partial_{x}y\|_{L^{2}(0,L)}\\
&\leq \|y\|^{2}_{H^{1}(0,L)}\\
&= C\left(\|\partial_{x}y\|^{2}_{L^{2}(0,L)}+\|y\|^{2}_{L^{2}(0,L)}\right)\\
&\leq \dfrac{C}{2}\left(\|\partial_{x}^{2}y\|^{2}_{L^{2}(0,L)}+3\|y\|^{2}_{L^{2}(0,L)}\right)\\
&=\dfrac{C}{2}\left(\|\partial_{x}^{2}y\|^{2}_{L^{2}(0,L)}+3|z|^{2}+3\sum_{j=n+1}^{\infty}y^{2}_{j}(t)\right),
\end{aligned}\]
the term $\displaystyle\sum_{j=n+1}^{\infty}y^{2}_{j}(t)$ can be estimated as \eqref{-Aw^2} and using \eqref{eq: V_2 >} obtaining
\[\begin{aligned}
\sum_{j=n+1}^{\infty}y^{2}_{j}(t)&=\sum_{j=n+1}^{\infty}\dfrac{-\sigma_{j}}{-\sigma_{j}}y^{2}_{j}(t)\\
&\leq\sum_{j=n+1}^{\infty}\dfrac{-\sigma_{j}}{-\sigma_{n+1}}y^{2}_{j}(t)   \\
&\leq \dfrac{1}{-\sigma_{n+1}C_{1}}V_{2}-\dfrac{1}{-\sigma_{n+1}}|z|^{2}
\end{aligned}\]
Finally, joining these previous estimates with \eqref{eq: step 1} we get the existence of $C_{5}>0$ such that
\beq\label{eq: f_j est}
|f_{j}(t)|\leq C_{5}V_{2}(t).
\eeq
\begin{remark}
If we consider boundary conditions such that $e_{j} \in H_{0}^{1}(0,1)$, the term $f_{j}$ can be alternatively  estimated as follows
\[\begin{aligned}
f_{j}(t)&=\int_{0}^{L}y\partial_{x}ye_{j}dx\\
&=\dfrac{1}{2}\int_{0}^{L}(\partial_{x}|y|^{2})e_{j}dx\\
&=-\dfrac{1}{2}\int_{0}^{L}|y|^{2}e^{\prime}_{j}dx.
\end{aligned}\]
Thus, $|f_{j}(t)|\leq \|e_{j}\|_{W^{1,\infty}(0,L)}\|y(t,\cdot)\|_{L^{2}(0,L)}^{2}$.
\end{remark}
Using \eqref{eq: f_j est}, \eqref{eq: z der nl} and \eqref{eq: z nl est} we obtain

\beq\label{eq: Z der}\begin{aligned}\dfrac{d}{dt}\left(z^{\top}\mathbf{P}z\right)&\leq-\alpha|z|^{2}+\dfrac{\|\mathbf{P}\|C_{5}^{2}}{2\eps_{1}}V^{2}_{2}+\dfrac{\|\mathbf{P}\|\eps_{1}}{2}|z|^{2}\\
&\leq-\dfrac{\alpha}{2}|z|^{2}+\dfrac{\|\mathbf{P}\|^{2}C_{5}^{2}}{2\alpha}V^{2}_{2},\end{aligned}\eeq
where, we have taken $\eps_{1}=\frac{\alpha}{\|\mathbf{P}\|}$. On the other hand, similar to \eqref{eq: dt A}, we have that

\[\begin{aligned}
-\dfrac{1}{2}\dfrac{d}{dt}\langle y,\mathcal{A}y\rangle_{L^{2}(0,L)}=-\|\mathcal{A}y\|_{L^2(0,L)}^2 -\left\langle \mathcal{A}y, \sum_{k=1}^m b_{k}\cdot \sat{(\mathbf{K}z)_k}\right\rangle_{L^2(0,L)}-\langle f,\mathcal{A}y\rangle_{L^{2}(0,L)},
\end{aligned}\]

Using the same ideas as for the first term and with \eqref{Cota V2}, we get from Cauchy inequality

\beq\label{eq: A der}\begin{aligned}
-\dfrac{1}{2}\dfrac{d}{dt}\langle y,\mathcal{A}y\rangle_{L^{2}(0,L)}&\leq-\dfrac{1}{2}\|\mathcal{A}y\|_{L^2(0,L)}^2+\frac{\left(\|\mathbf{K}\|^2\displaystyle\sum_{k=1}^m \|b_k\|^2_{L^2(0,L)}\right)}{2}\left|z\right|^2+\dfrac{C_{5}^{2}}{2}V^{2}_{2}\\
&\leq \dfrac{1}{4M}\sum_{j=1}^{\infty}\sigma_{j}y^{2}_{j}(t)+\frac{\left(\|\mathbf{K}\|^2\displaystyle\sum_{k=1}^m \|b_k\|^2_{L^2(0,L)}\right)}{2}\left|z\right|^2+\dfrac{C_{5}^{2}}{2}V^{2}_{2}.
\end{aligned}\eeq
Combining \eqref{eq: Z der} and \eqref{eq: A der} we derive

\[\begin{aligned}\dot{V_{2}}(t) \leq \dfrac{1}{4M}\sum_{j=1}^{\infty}\sigma_{j}y^{2}_{j}(t)+\frac{\left(2\|\mathbf{K}\|^2\displaystyle\sum_{k=1}^m \|b_k\|^2_{L^2(0,L)}-M\alpha\right)}{4}\left|z\right|^2+\dfrac{C_{5}^{2}}{2\alpha}\left(\|\mathbf{P}\|^{2}+\alpha\right)V^{2}_{2}.\end{aligned}\]

Then, if we take $M$ big enough, and following the \eqref{Cota C3}, \eqref{Cota V2} and \eqref{cota V2 2}, we can conclude the existence of $A$, $B>0$ such that

\[\dot{V_{2}}(t) \leq -AV_{2}(t)+BV^{2}_{2}(t), \text{ for all } t\in [0,T).\]

We can argue that if $V_{2}(0)\leq A/(2B)$, with $V_{2}(0)\neq0$, then $V_{2}(t)\leq(A/B)e^{-At}$ for all $t\in[0,\infty)$. To that end, we  apply \Cref{ineq} with $V_{2}(t)=v(t)$, $b(t)=-A$, $k(t)=B$ and $p=2$. Then, in virtue of \eqref{ineq_2} we get
\[w(t)\geq V_{2}(0)^{-1}-(B/A)\left(1-e^{-At}\right)\geq V(0)^{-1}-B/A.\]
Accordingly, if we choose $V_{2}(0)>0$ so that $V_{2}(0)^{-1}-B/A\geq B/A$, which happens when $V_{2}(0)\leq A/(2B)$, it follows that $w(t)\geq B/A>0$ for all $t\in[0,\infty)$. Therefore, from \eqref{ineq_3} we deduce the required inequality:

\beq\label{eq: V decay}
V_{2}(t)\leq(A/B)e^{-At}, \quad t\in[0,T).\eeq

To finish the proof of the stabilization, note that by \Cref{lem: step 3}

\[V_{2}(0)\leq C_{4}\|y_{0}\|^{2}_{H^{2}(0,L)}.\]

Finally taking $\eps>0$ small enough such that if $\|y_{0}\|^{2}_{H^{2}(0,L)}\leq \eps$ then $z(0) \in \mathcal{R}$ and $V_{2}(0)\leq A/(2B)$ we conclude \eqref{eq: stb H2}. This stability estimate gives us if $\|y_{0}\|^{2}_{H^{2}(0,L)}\leq \varepsilon$, then $\|y(t,\cdot)\|^{2}_{L^{2}(0,L)}\leq C\|y_{0}\|^{2}_{H^{2}(0,L)}$. Thus, from \eqref{eq: dH2 full nl} we get the global in time well-posedness in $C([0,T];H^{2}(0,L))\cap L^{2}(0,T;H^{4}(0,L))$.

%%%%%%%%%%%%%%%%%%%%%%%%%%%%%%%%

%\[\frac{d}{dt}\left(\int_{0}^{L}|y|^{2}dx\right)+\dfrac{2}{3}\int_{0}^{L}|\partial_{x}^{2}y|^{2}dx\leq C\left(1+\|y_{0}\|^{2}_{H^{2}(0,L)}\right)\int_{0}^{L}|y|^{2}dx,\]
%by Gronwall's lemma we deduce
%\[\|y(t,\cdot)\|^{2}_{L^{2}(0,L)}\leq \|y_{0}\|^{2}_{L^{2}(0,L)}\exp\left(Ct\left(1+\|y_{0}\|^{2}_{H^{2}(0,L)}\right)\right),\]
%from where
%\[\|y\|^{2}_{C([0,T];L^{2}(0,L))}\leq \|y_{0}\|^{2}_{L^{2}(0,L)}\exp\left(CT\left(1+\|y_{0}\|^{2}_{H^{2}(0,L)}\right)\right),\]
%and
%\[\|y\|^{2}_{L^{2}(0,T;L^{2}(0,L))}\leq \dfrac{1}{C\left(1+\|y_{0}\|^{2}_{H^{2}(0,L)}\right)}\|y_{0}\|^{2}_{L^{2}(0,L)}\left[\exp\left(CT\left(1+\|y_{0}\|^{2}_{H^{2}(0,L)}\right)\right)-1\right].\]
%Integrating \eqref{eq: a priori2} for $t\in(0,T)$ and the above $L^{2}(0,T;L^{2}(0,L))$ estimate allow us to get
%\[\begin{aligned}
%\int_{0}^{T}\int_{0}^{L}|\partial_{x}^{2}y|^{2}dxdt&\leq C\left(1+\|y_{0}\|^{2}_{H^{2}(0,L)}\right)\int_{0}^{T}\int_{0}^{L}|y|^{2}dx\\
%&\leq C\|y_{0}\|^{2}_{L^{2}(0,L)}\left[\exp\left(CT\left(1+\|y_{0}\|^{2}_{H^{2}(0,L)}\right)\right)-1\right].
%\end{aligned}\]
%Thus, we can extend the unique solution $y$ to all positive time, that ensure the global well-posedness.

\end{proof}
%%%%

%%%%%%%%%%%%%%%%%%%%%%
\section{Boundary feedback stabilization}
\label{sec: bs}
In this section, we briefly outline how the sames ideas extend to boundary feedback. The analysis is similar to the internal case, with only minor modifications concerning the boundary operator. Again, the main result of this section holds for the three class of boundary conditions. For simplicity, we write the proof in the case \textit{clamped} case and KS nonlinearity. We consider the following system
\begin{equation}
\label{eq: KS bdr}
\begin{cases}
\partial_{t}y+\lambda \partial_{x}^{2}y+\partial_{x}^{4}y+y\partial_{x}y=0,& t>0, \ x \in (0,L), \\
y(t,0)=y(t,L)=\partial_{x}y(t,L)=0,    &t>0, \\
\partial_{x}y(t,0)=u(t), &  t>0,\\ 
y(0,x)=y_{0}(x), & x \in (0,L), \\
\dot{u}(t)=\sat(h(t)),    &t>0.
\end{cases}
\end{equation}
We also consider the linearized system subject to boundary saturation
\begin{equation}
\label{eq: LKS bdr}
\begin{cases}
\partial_{t}y+\lambda \partial_{x}^{2}y+\partial_{x}^{4}y=0,& t>0, \ x \in (0,L), \\
y(t,0)=y(t,L)=\partial_{x}y(t,L)=0,    &t>0, \\
\partial_{x}y(t,0)=u(t),  & t>0,\\ 
y(0,x)=y_{0}(x), & x \in (0,L), \\
\dot{u}(t)=\sat(h(t)),    &t>0.
\end{cases}
\end{equation}

Note that in the systems \eqref{eq: KS bdr}-\eqref{eq: LKS bdr} the input control is $h(t)$ but acts on the system by $\dot{u}(t)=\sat(h(t))$ that means that our control acts as an integrator. The main result of this section, namely \Cref{th: main bdr}, states the local exponential stability of \eqref{eq: KS bdr} with saturated boundary control. Many of the computation of this section are quite similar to those presented in \Cref{sec: caso memoria} and for that reason are omitted.
\medskip

Inspired by \cite{Cerpa_2010,guzman2019stabilization} we define $d(x)=\left(\frac{x^{3}}{L^{2}}-\frac{2x^{2}}{L}+x\right)$ and we make the change of variable $w(t,x)=y(t,x)-d(x)u(t)$. From now, we suppose that $u \in H^{1}(0,T)$ for all $T>0$. Easy calculations show that $w$ satisfies the following system

\beq\label{eq: Lwsys}
\begin{cases}
w_{t}=\mathcal{A}w+a(x)u(t)+b(x)\sat(h(t)),& t>0, \ x \in (0,L), \\
w(t,0)=w(t,L)=0,    &t>0, \\
\partial_{x}w(t,0)=\partial_{x}w(t,L)=0,    &t>0, \\
w(0,x)=y_{0}-d(x)u(0), & x \in (0,L), \\
\dot{u}(t)=\sat(h(t)),    &t>0,\\
\end{cases}\eeq
where $a(x)=-\lambda d^{''}(x)$ and $b(x)=-d(x)$.

\begin{remark}
Due to the parabolic behaviour, there is no a big difference if the control acts at left or right boundary, this differs for instance the case of dispersive equations \cite{cerpa2013rapid}. In the case of different boundary conditions, we can consider as lifting function. In general, the lifting function is built as a third-order polynomial that vanish at the not acted boundary conditions, and it is equal to $1$ where the control acts.
\end{remark}

\begin{remark}
We take $\dot{u}(t)=\sat(h(t))$ and not directly $u(t)=\sat(h(t))$. This is motivated because $\sat$ operator is not enough smooth for making the change of variable $w=y-du$.
\end{remark}

Similar to \Cref{sec: caso memoria}, expanding $w(t,\cdot) \in D(\mathcal{A})$ of \eqref{eq: Lwsys} as a series in the eigenfunctions basis $\{e_{j}\}_{j\in \mathbb{N}}$, we observe that \eqref{eq: Lwsys} is equivalent to the next infinity dimensional control system

\beq\label{eq: infinite ode bdr} \begin{cases}
 \dot{u}(t)=\sat(h(t)), \\
\dot{w_{j}}(t)=\sigma_{j}w_{j}(t)+a_{j}u(t)+b_{j}\sat(h(t)), \quad j \in \mathbb{N},
\end{cases}\eeq

where $a_{j}=\langle a,e_{j} \rangle_{L^{2}(0,L)}$ and $b_{j}=\langle b, e_{j} \rangle_{L^{2}(0,L)}$. Let $n \in \mathbb{N}$ the number of unstable eigenvalues of $\mathcal{A}$ and  $\eta>0$ as in \eqref{eq: neg eig}. Defining the matrices:
\beq\begin{gathered}  z(t)=\begin{pmatrix} u(t) \\
w_{1}(t)\\
\vdots \\
w_{n}(t)
\end{pmatrix}, \ \ \mathbf{A}=\begin{pmatrix} 0 & \cdots & \cdots & 0 \\
a_{1} & \sigma_{1} & \cdots & 0 \\
\vdots & \vdots & \ddots & 0 \\
a_{n} & 0 & \cdots & \sigma_{n}\\
\end{pmatrix}, \ \ \mathbf{B}=\begin{pmatrix} 1 \\
b_{1}\\
\vdots \\
b_{n}
\end{pmatrix}. \end{gathered}
\eeq
We can construct the next unstable finite dimensional system:
\beq\label{eq: ode sat bdr} \dot{z}(t)=\mathbf{A}z(t)+\mathbf{B}\sat(h(t)). \eeq
In this case, we have the following proposition about the controllability of the pair $(\mathbf{A},\mathbf{B})$
\begin{proposition}(Proposition 1, \cite{Cerpa_2010})
\label{prop: kalman}
If $\lambda\notin \mathcal{N}$, then the pair $(\mathbf{A},\mathbf{B})$ is stabilizable.
\end{proposition}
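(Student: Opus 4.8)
The plan is to prove that $(\mathbf{A},\mathbf{B})$ is in fact controllable, which here is equivalent to stabilizability: the spectrum of $\mathbf{A}$ is $\{0,\sigma_1,\dots,\sigma_n\}$, and since all $\sigma_j\ge 0$ every eigenvalue lies in the closed right half-plane, so the Popov--Belevitch--Hautus (PBH) rank test must hold at each of them. I would therefore verify $\operatorname{rank}[\mathbf{A}-\mu\mathbf{I}_{n+1}\mid\mathbf{B}]=n+1$ for every eigenvalue $\mu$, exploiting the sparse, nearly diagonal structure of $\mathbf{A}$: a zero top-left corner, the first column $(0,a_1,\dots,a_n)^\top$, and $\sigma_1,\dots,\sigma_n$ on the remaining diagonal.

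For a given eigenvalue, the columns of $\mathbf{A}-\mu\mathbf{I}_{n+1}$ associated with the modes $\sigma_j\ne\mu$ are nonzero multiples of distinct coordinate vectors, and---because $\lambda\notin\mathcal{N}$ makes the $\sigma_j$ simple---they clear out all coordinate directions except the first and the one, say $k+1$, where $\sigma_k=\mu$. Projecting the first column and $\mathbf{B}=(1,b_1,\dots,b_n)^\top$ onto the two surviving directions reduces the full-rank requirement to a single scalar condition. At $\mu=0$ with $0\notin\{\sigma_j\}$ this is automatic, since the leading $1$ of $\mathbf{B}$ actuates the integrator state $u$ directly; at $\mu=\sigma_k$ (and likewise in the coincident case $\mu=0=\sigma_k$) it becomes
\[
\det\begin{pmatrix}-\sigma_k & 1\\ a_k & b_k\end{pmatrix}=-(a_k+\sigma_k b_k)\neq 0.
\]
Thus controllability is equivalent to $a_k+\sigma_k b_k\neq0$ for every $k\in\{1,\dots,n\}$.

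To evaluate $a_k+\sigma_k b_k$ I would apply the Lagrange identity for $\mathcal{A}=-\partial_x^4-\lambda\partial_x^2$ to the pair $(d,e_k)$. Since $a=-\lambda d''$ and $d$ is cubic, $d''''\equiv0$ gives $\langle\mathcal{A}d,e_k\rangle=\langle-\lambda d'',e_k\rangle=a_k$, while $\langle d,\mathcal{A}e_k\rangle=\sigma_k\langle d,e_k\rangle=-\sigma_k b_k$. The clamped boundary conditions annihilate every boundary term carrying $e_k$ or $e_k'$, and the endpoint data $d(0)=d(L)=d'(L)=0$, $d'(0)=1$ collapse what remains to a single term, so
\[
a_k+\sigma_k b_k=\langle\mathcal{A}d,e_k\rangle-\langle d,\mathcal{A}e_k\rangle=e_k''(0).
\]
The PBH condition is therefore equivalent to $e_k''(0)\neq0$ for all $k$.

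The remaining, and genuinely delicate, step is the spectral non-degeneracy: one must show that $\lambda\notin\mathcal{N}$ forces $e_k''(0)\neq0$ for every eigenfunction. This is exactly what is established in \cite{Cerpa_2010}: starting from the explicit form of the clamped eigenfunctions recalled in \Cref{sec: par A-B} and analyzing the associated characteristic equation, the condition $\lambda\notin\mathcal{N}$ is shown to guarantee simultaneously the simplicity of the spectrum and the non-vanishing of the boundary trace $e_k''(0)$. I would invoke that result to finish. The main obstacle is thus not the linear-algebra reduction, which is mechanical, but this arithmetic/resonance fact---that $\lambda\notin\mathcal{N}$ precisely rules out the coincidence of eigenvalues together with the vanishing of $e_k''(0)$---which is the heart of Cerpa's controllability characterization.
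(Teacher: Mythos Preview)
Your argument is correct. The paper does not supply its own proof of this proposition; it simply cites \cite{Cerpa_2010}, so there is nothing to compare against beyond that reference. Your route---PBH at each eigenvalue of the augmented matrix, reduction via the sparse structure to the scalar condition $a_k+\sigma_k b_k\neq0$, and the Lagrange identity computation giving $a_k+\sigma_k b_k=e_k''(0)$---is exactly the natural line, and the boundary-term bookkeeping you sketch is accurate (the clamped conditions on $e_k$ kill the terms with $e_k,e_k'$, and $d(0)=d(L)=d'(L)=0$, $d'(0)=1$ leave only $e_k''(0)$; the second-order contributions cancel). You are also right that, since every eigenvalue of $\mathbf{A}$ lies in the closed right half-plane, stabilizability and controllability coincide here.

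The only place where you lean on an external fact is precisely where you say you do: the implication ``$\lambda\notin\mathcal{N}\Rightarrow$ the $\sigma_j$ are simple and $e_k''(0)\neq0$ for all $k$'' is the spectral content of \cite{Cerpa_2010}, and invoking it is appropriate given that the present paper does the same.
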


Along this section, we work under the assumption
\begin{assumption}
The anti-difussion parameter $\lambda\notin \mathcal{N}$, where, $\mathcal{N}$ is defined in \eqref{eq: critical}.
\end{assumption}

As the pair $(\mathbf{A},\mathbf{B})$ is stabilizable,  there exists $\mathbf{K}=\begin{pmatrix}
k_{0}&k_{1}& \cdots & k_{n}
\end{pmatrix} \in \mathbb{R}^{1\times(n+1)}$, such that $\mathbf{A}+\mathbf{BK}$ is Hurwitz. \Cref{prop: kalman} motivates the choice of the control $h(t)=\mathbf{K}z(t)$.

\beq\label{eq: z sat bdr} \dot{z}(t)=\mathbf{A}z(t)+\mathbf{B}\sat(\mathbf{K}z(t)), \eeq
As in the internal case, we can prove the local exponential stability
\begin{proposition}
There exists $\mathbf{P}\in\mathbb{R}^{n+1\times n+1}$ a symmetric positive definite, such that \eqref{eq: z sat bdr} is locally exponentially stable in $0$ with region of attraction $\mathcal{R}$ given by
\beq\label{eq: R bdr} \mathcal{R}:= \left\{z\in \mathbb{R}^{n+1} \ : \ z^{\top}\mathbf{P}z \leq 1 \right\}. \eeq

Moreover, in $\mathcal{R}$, the function $V_{1}(z):=z^{\top}\mathbf{P}z$, $z \in \mathbb{R}^{n+1}$, decreases exponentially fast to $0$ along the solutions to \eqref{eq: z sat bdr}, i.e., there is a constant $\alpha>0$ so that

\beq\label{eq: V1 bdr} \dot{V_{1}}(z)\leq -\alpha|z|^{2}, \quad z \in \mathcal{R}.\eeq
\end{proposition}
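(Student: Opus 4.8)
The plan is to follow verbatim the strategy of \Cref{prop: region fed}, since \eqref{eq: z sat bdr} shares exactly the algebraic structure of \eqref{eq: z sat}: a linear drift $\mathbf{A}z$ perturbed by a saturated feedback $\mathbf{B}\sat(\mathbf{K}z)$, the only differences being that we now work in $\mathbb{R}^{n+1}$ and that the first coordinate of the drift encodes the integrator dynamics $\dot u=\sat(h)$. Concretely, I would first fix $\mathbf{K}$ as given by \Cref{prop: kalman}, so that $\mathbf{A}+\mathbf{BK}$ is Hurwitz, and then introduce the $(n+1)$-dimensional analogues of the two matrix inequalities \eqref{eq: M1} and \eqref{eq: M2}, with unknowns $\mathbf{P}\in\mathbb{S}^{n+1}$ positive definite, $\mathbf{D}$ diagonal positive definite, and $\mathbf{C}\in\mathbb{R}^{1\times(n+1)}$.

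Feasibility is established exactly as before: since $\mathbf{A}+\mathbf{BK}$ is Hurwitz, the Lyapunov inequality provides some $\mathbf{P}_0\succ0$ with $(\mathbf{A}+\mathbf{BK})^\top\mathbf{P}_0+\mathbf{P}_0(\mathbf{A}+\mathbf{BK})\prec0$; taking $\mathbf{C}=\mathbf{0}$, rescaling $\mathbf{P}$ so that $\mathbf{P}-\ell^{-2}\mathbf{K}^\top\mathbf{K}\succeq0$, and choosing $\|\mathbf{D}\|$ large, one obtains feasibility of both LMIs through \Cref{Lema de Schur}. With such a feasible triple at hand, I would differentiate $V_1(z)=z^\top\mathbf{P}z$ along \eqref{eq: z sat bdr}, write $\sat(\mathbf{K}z)=\phi(\mathbf{K}z)+\mathbf{K}z$ via the deadzone nonlinearity, and add the nonnegative slack term $-2\phi(\mathbf{K}z)^\top\mathbf{D}(\phi(\mathbf{K}z)+\mathbf{C}z)$ supplied by the generalized sector condition of \Cref{lem: gen sector}. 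This produces the augmented quadratic bound
\[\dot V_1(z)\le\begin{pmatrix}z\\ \phi(\mathbf{K}z)\end{pmatrix}^\top M_1\begin{pmatrix}z\\ \phi(\mathbf{K}z)\end{pmatrix}\le-\alpha|z|^2,\]
where $-\alpha$ is the largest eigenvalue of the negative-definite matrix $M_1$, which is precisely \eqref{eq: V1 bdr}.

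To legitimize the use of the sector condition on all of $\mathcal{R}$, I would impose the inclusion $\mathcal{R}\subseteq\{z:|(\mathbf{K}-\mathbf{C})z|\le\ell\}$; as both sides are ellipsoids centered at the origin, this is equivalent to the Schur complement $\mathbf{P}-\ell^{-2}(\mathbf{K}-\mathbf{C})^\top(\mathbf{K}-\mathbf{C})\succeq0$, i.e.\ to the second LMI. The only point requiring genuine care, and the main (mild) obstacle, is that the open-loop drift $\mathbf{A}$ now carries a zero in its $(1,1)$ entry coming from the integrator, so $\mathbf{A}$ alone is \emph{not} Hurwitz; it is exactly the stabilizability guaranteed by \Cref{prop: kalman} (valid under $\lambda\notin\mathcal{N}$) that restores the Hurwitz property of $\mathbf{A}+\mathbf{BK}$ and hence the availability of the Lyapunov matrix $\mathbf{P}_0$. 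Once this is noted, every remaining estimate is identical to \Cref{prop: region fed}, and local exponential stability on $\mathcal{R}$ follows.
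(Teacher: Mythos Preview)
Your proposal is correct and matches the paper's approach exactly: the paper does not give a separate proof of this proposition, relying instead on the observation that \eqref{eq: z sat bdr} has the same structure as \eqref{eq: z sat} so that the argument of \Cref{prop: region fed} carries over verbatim in dimension $n+1$. Your remark that the integrator contributes a zero eigenvalue to $\mathbf{A}$, so that Hurwitzness of $\mathbf{A}+\mathbf{BK}$ genuinely requires the stabilizability from \Cref{prop: kalman}, is a welcome clarification that the paper leaves implicit.
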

\begin{remark}
Note that, the control expression is not too explicit at the first glance. In fact $\dot{u}(t)=\sat\left(\mathbf{K}z(t)\right)=\sat\left(k_{0}u(t)+\dis\sum_{j=1}^{n}k_{j}w_{j}(t) \right)$. But the system \eqref{eq: z sat bdr} is well-defined because $\sat$ is Lipschitz continuous.
\end{remark}

In this direction, we can prove the following result with respect to the exponential stability of \eqref{eq: infinite ode bdr}

\begin{proposition}
\label{prop: sta Lw bdr}
Let $\mathbf{K}$ as in \Cref{prop: kalman} and $S=S_{0}\times S_{n} \subset \mathbb{R}^{n+1}$ the region of attraction of \eqref{eq: z sat bdr} such that the system is locally exponentially stable in $0$, where $S_{0}\subset\mathbb{R}$ and $S_{n}\subset\mathbb{R}^{n}$. Then:
\begin{enumerate}
\item system \eqref{eq: infinite ode bdr} with the feedback law $h(t)=\mathbf{K}z(t)$ is locally exponentially stable in $0$ with region of attraction $S\times \ell_{2, j>n}$.

\item system \eqref{eq: Lwsys} with the feedback law $h(t)=\mathbf{K}z(t)$ is locally exponentially stable in $0$ with region of attraction $S_{0}\times\iota(S_{n})\times X_{n}^{\perp}$.
\end{enumerate}

In addition, for any closed and bounded set $G \subset \inte(S_{0}\times\iota(S_{n})\times X_{n}^{\perp})$, there exists two positive values $M$, $\alpha$ such that for any initial condition $w(0,\cdot) \in G$, the solution $(u,w)$ of \eqref{eq: Lwsys} with controller $h(t)=\mathbf{K}z(t)$ satisfies

\beq\label{eq: decay L2 bdr} |u(t)|+\|w(t,\cdot)\|_{L^{2}(0,L)} \leq Ce^{-at}\|w(0,\cdot)\|_{L^{2}(0,L)}.\eeq
\end{proposition}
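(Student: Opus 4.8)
The plan is to reproduce the cascade argument of \Cref{prop: sta Lw}, the only genuinely new features being that the scalar control state $u$ is now a coordinate of the finite-dimensional block $z=(u,w_1,\dots,w_n)^\top$, and that each stable mode $w_j$, $j\ge n+1$, is forced by the extra source term $a_j u(t)$ in addition to $b_j\sat(\mathbf{K}z)$. The preceding proposition (the boundary analogue of \Cref{prop: region fed}) already supplies, for any closed and bounded $G'\subset\inte(S)$ with $0\in\inte(G')$, constants $M,a>0$ such that $|z(t)|\le Me^{-at}|z_0|$ for all $z_0\in G'$; in particular $|u(t)|\le Me^{-at}|z_0|$ decays exponentially, which directly handles the $|u(t)|$ contribution in \eqref{eq: decay L2 bdr} as well as the finite-dimensional part $S_0$ (resp. $S_n$) of both regions of attraction.

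For the stable modes I would write Duhamel's formula for \eqref{eq: infinite ode bdr},
\[w_j(t)=e^{\sigma_j t}w_j(0)+\int_0^t e^{\sigma_j(t-s)}\bigl(a_j u(s)+b_j\sat(\mathbf{K}z(s))\bigr)\,ds,\qquad j\ge n+1.\]
Using $\sigma_j<-\eta$ from \eqref{eq: neg eig}, the Lipschitz bound $|\sat(\mathbf{K}z)|\le\|\mathbf{K}\|\,|z|$, and $|u(s)|\le|z(s)|\le Me^{-as}|z_0|$, every integrand decays like $e^{-as}$, so that
\[|w_j(t)|\le e^{-\eta t}|w_j(0)|+(|a_j|+\|\mathbf{K}\|\,|b_j|)\,M|z_0|\int_0^t e^{-\eta(t-s)}e^{-as}\,ds.\]
The convolution is controlled exactly as in \Cref{prop: sta Lw}: after squaring it is bounded by $e^{-2at}/(\eta-a)^2$ when $\eta>a$, the case $\eta<a$ is symmetric, and $\eta=a$ is dismissed by slightly decreasing $\eta$, which is admissible since $\sigma(\mathcal{A})$ is discrete. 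Squaring, summing over $j\ge n+1$, and invoking Bessel's inequality in the form $\sum_{j>n}|a_j|^2\le\|a\|_{L^2(0,L)}^2$ and $\sum_{j>n}|b_j|^2\le\|b\|_{L^2(0,L)}^2$ then yields $\sum_{j>n}|w_j(t)|^2\le Ce^{-2at}\bigl(\sum_{j>n}|w_j(0)|^2+|z_0|^2\bigr)$. Combined with the finite-dimensional decay of $z$, this gives exponential decay of the full state in $\ell_2(\mathbb{N},\mathbb{R})$, which is part (1) with region of attraction $S\times\ell_{2,j>n}$.

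Part (2) and the estimate \eqref{eq: decay L2 bdr} then follow as in the internal case: applying the isomorphism $\iota:\mathbb{R}^n\to X_n$, the isometric identification of $\ell_{2,j>n}$ with $X_n^\perp$, and Parseval's identity transports the $\ell_2$ decay of $(w_j)_{j>n}$ into an $L^2(0,L)$ decay of $w(t,\cdot)$, with region of attraction $S_0\times\iota(S_n)\times X_n^\perp$. Adding back $|u(t)|$, which is controlled directly by the finite-dimensional bound, produces the claimed inequality $|u(t)|+\|w(t,\cdot)\|_{L^2(0,L)}\le Ce^{-at}\|w(0,\cdot)\|_{L^2(0,L)}$ on every closed and bounded $G\subset\inte(S_0\times\iota(S_n)\times X_n^\perp)$.

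I expect no serious obstacle, since the proof is structurally identical to \Cref{prop: sta Lw}; the sole bookkeeping novelty is the integrator coupling, namely that $u$ appears both as a driving source $a_j u$ in the stable dynamics and as a coordinate of the stabilized vector $z$. The one point deserving care is that the effective decay rate of $w_j$ is governed by $\min(\eta,a)$ rather than by $\eta$ alone, so the advertised exponent $\alpha$ should be identified with this minimum, and the resonant case $\eta=a$ must be perturbed away as noted above.
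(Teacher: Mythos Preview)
Your proposal is correct and follows exactly the approach the paper intends: the paper's own proof consists of a single line stating that the argument is straightforward and follows \Cref{sec: caso memoria}, i.e., the proof of \Cref{prop: sta Lw}. You have faithfully reproduced that cascade argument, correctly identifying and handling the only new ingredient, namely the extra source $a_j u(t)$ in the stable modes, which is absorbed via $|u(s)|\le|z(s)|$.
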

\begin{proof}
The proof is straightforward, following \Cref{sec: caso memoria}.
\end{proof}
\begin{remark}
We search for region of attraction of the form $S=S_{0}\times S_{n}$ for simply the notation. In \cite{Mironchenko_2020} and, the past section works with internal controller, which implies that the state is only given by the projections $w_{1},\dots,w_{n}$.
\end{remark}

\begin{remark}
Our control is given by $\dot{u}(t)=\mathbf{K}z(t)$, thus we need to prescribe the initial condition $u(0)$. Without loss of generality,  we can take  $u(0)=0$, which implies $w(0,\cdot)=y_{0}$.
\end{remark}
By the previous remark, from \eqref{eq: decay L2 bdr} we obtain the $L^{2}$ decay of $y=w+ud$
\[\begin{aligned}
\|y(t,\cdot)\|_{L^{2}(0,L)}&\leq \|w(t,\cdot)\|_{L^{2}(0,L)}+|u(t)|\|d\|_{L^{2}(0,L)}\\
&\leq \max\left\{1,\|d\|_{L^{2}(0,L)}\right\}\left(w(t,\cdot)\|_{L^{2}(0,L)}+|u(t)|\right)\\
&\leq C\max\left\{1,\|d\|_{L^{2}(0,L)}\right\}e^{-at}\|w(0,\cdot)\|_{L^{2}(0,L)}\\
&=C\max\left\{1,\|d\|_{L^{2}(0,L)}\right\}e^{-at}\|y_{0}\|_{L^{2}(0,L)}.
\end{aligned}\]

Following \Cref{th: sta H2}, the $H^{2}$ stability can also be shown in the boundary case. This allows us to prove the main result of this section.
\begin{theorem}
\label{th: main bdr}
Let $\mathbf{K}$ as in \Cref{prop: kalman}, $\mathbf{P}$ as in \eqref{eq: R bdr} and $h(t)=\mathbf{K}z$, then there exists $\eps>0$ such that for all $y_{0} \in H_{0}^{2}(0,L)$ satisfying $\|y_{0}\|_{H^{2}_{0}(0,L)}\leq \eps$, the unique solution $y \in C([0,T];H^{2}(0,L))\cap L^{2}(0,T,H^{4}(0,L))$ for all $T>0$ satisfy

\beq\label{eq: stb H2 bdr}
\|y(t,\cdot)\|_{H^{2}_{0}(0,L)}\leq Ce^{-\alpha t}\|y_{0}\|_{H^{2}_{0}(0,L)},\eeq

for $C>0$.
\end{theorem}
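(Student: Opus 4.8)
The plan is to mirror the proof of \Cref{th: sta y nl}, transplanted to the lifted variable $w=y-du$. Since we impose $u(0)=0$, the initial datum for $w$ is exactly $y_{0}$, and from $y=w+ud$ together with $\|y\|_{H^{2}_{0}(0,L)}\leq \|w\|_{H^{2}(0,L)}+|u|\,\|d\|_{H^{2}(0,L)}$ it suffices to prove the simultaneous exponential decay of $|u(t)|$ and $\|w(t,\cdot)\|_{H^{2}(0,L)}$. I would settle well-posedness first: system \eqref{eq: Lwsys} with $h=\mathbf{K}z$ and the KS nonlinearity rewritten in the $w$-variable is of the same type as \eqref{eq: LKS o KS closed loop}, the only new ingredient being the bounded linear source $a(x)u(t)$, which is harmless because $u\in H^{1}(0,T)$. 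Hence local-in-time existence and uniqueness in $B_{T}$ follow from \Cref{prop: wp nlgKS}, and the global extension will come from an $H^{2}$ a priori bound of the form \eqref{eq: dH2 full nl}.

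Next I would set up the Lyapunov functional in the spirit of \eqref{eq: Lya V2}, namely
\[
V_{2}(t)=\dfrac{M}{2}\,z^{\top}\mathbf{P}z-\langle w,\mathcal{A}w\rangle_{L^{2}(0,L)},
\]
with $z=(u,w_{1},\dots,w_{n})^{\top}$ and $\mathbf{P}$ as in \eqref{eq: R bdr}. The three auxiliary estimates transfer essentially verbatim from \Cref{sec: hg nl}: the coercivity bound $\tfrac{C_{1}}{2}|z|^{2}+\tfrac{C_{1}}{2C_{2}}\|\partial_{x}^{2}w\|_{L^{2}(0,L)}^{2}\leq V_{2}(t)$ as in \Cref{lem: step 1} (note that $|z|^{2}$ now contains $|u|^{2}$, since the $u$-block affects neither $\langle w,\mathcal{A}w\rangle$ nor the sum over modes, so $V_{2}$ controls the integrator state as well), the finite-dimensional decay $\dot{V}_{1}(z)\leq-\alpha|z|^{2}$ on $\mathcal{R}$ established in the boundary counterpart of \Cref{prop: region fed}, cf.\ \eqref{eq: V1 bdr}, and the upper bound $V_{2}(0)\leq C_{4}\|y_{0}\|_{H^{2}(0,L)}^{2}$ as in \Cref{lem: step 3} (here $w(0,\cdot)=y_{0}$).

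The core computation is the differential inequality for $V_{2}$. Writing the nonlinear forcing as $\mathbf{F}=(0,f_{1},\dots,f_{n})^{\top}$ with $f_{j}=-\langle y\partial_{x}y,e_{j}\rangle_{L^{2}(0,L)}$, the first $n+1$ modes obey $\dot{z}=\mathbf{A}z+\mathbf{B}\sat(\mathbf{K}z)+\mathbf{F}$, exactly the structure of \eqref{eq: node sat}. Expanding $y\partial_{x}y=(w+ud)(\partial_{x}w+ud')$ separates the pure-$w$ term already treated in \Cref{th: sta y nl} from cross terms linear and quadratic in $u$; all of them are dominated through
\[
\|y\partial_{x}y\|_{L^{2}(0,L)}\leq\|y\|_{L^{\infty}(0,L)}\|\partial_{x}y\|_{L^{2}(0,L)}\leq C\|y\|_{H^{1}(0,L)}^{2}\leq C\left(\|w\|_{H^{1}(0,L)}^{2}+|u|^{2}\|d\|_{H^{1}(0,L)}^{2}\right),
\]
which by the coercivity lemma is bounded by $C_{5}V_{2}(t)$, reproducing \eqref{eq: f_j est}. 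The high-frequency term $-\tfrac{d}{dt}\langle w,\mathcal{A}w\rangle$ is handled as in \eqref{eq: dt A}–\eqref{eq: A der}, with the single new contribution $-2\langle\mathcal{A}w,au\rangle_{L^{2}(0,L)}$ coming from the lifting source; this is absorbed by Young's inequality as $\|\mathcal{A}w\|_{L^{2}(0,L)}^{2}+\|a\|_{L^{2}(0,L)}^{2}|u|^{2}$, the first piece into the dissipation $-2\|\mathcal{A}w\|_{L^{2}(0,L)}^{2}$ and the second into the $|z|^{2}$ budget. Combining as in \eqref{eq: Z der}–\eqref{eq: A der} and choosing $M$ large yields $\dot{V}_{2}(t)\leq-AV_{2}(t)+BV_{2}^{2}(t)$ for some $A,B>0$.

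Finally I would close the argument as at the end of \Cref{th: sta y nl}: \Cref{ineq} applied with $p=2$, $b\equiv-A$, $k\equiv B$ shows that $V_{2}(0)\leq A/(2B)$ forces $V_{2}(t)\leq (A/B)e^{-At}$. Choosing $\eps>0$ so that $\|y_{0}\|_{H^{2}_{0}(0,L)}\leq\eps$ guarantees, via \Cref{lem: step 3}, both $z(0)\in\mathcal{R}$ (so the LMI estimate applies along the whole trajectory and $\mathcal{R}$ stays forward invariant) and $V_{2}(0)\leq A/(2B)$. Then coercivity gives $\|\partial_{x}^{2}w\|_{L^{2}(0,L)}^{2}\leq CV_{2}(t)$ and $|u(t)|^{2}\leq CV_{2}(t)$, whence $\|y\|_{H^{2}_{0}(0,L)}\leq\|w\|_{H^{2}(0,L)}+|u|\,\|d\|_{H^{2}(0,L)}\leq Ce^{-\alpha t}\|y_{0}\|_{H^{2}_{0}(0,L)}$ with $\alpha=A/2$. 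The main obstacle, relative to the internal case, is the two-way coupling between the integrator state $u$ and the PDE state $w$ introduced by the lifting: the source $a(x)u$ contaminates the high-frequency energy balance and $u$ enters the nonlinearity quadratically. What rescues the argument is that both effects are controlled by $V_{2}$ through $|z|^{2}\geq|u|^{2}$, so no genuinely new estimate is needed — only the bookkeeping that the coercivity of $V_{2}$ and the forward invariance of $\mathcal{R}\subset\mathbb{R}^{n+1}$ survive the nonlinear perturbation.
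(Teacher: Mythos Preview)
Your proposal is correct and follows exactly the approach the paper intends: the paper itself gives no detailed proof of \Cref{th: main bdr}, stating only that the computations ``are quite similar to those presented in \Cref{sec: caso memoria}'' and that ``following \Cref{th: sta H2}, the $H^{2}$ stability can also be shown in the boundary case''. Your write-up is in fact more explicit than the paper's, correctly identifying the two new bookkeeping items---the extra source $a(x)u$ in the high-frequency balance and the $u$-dependent cross terms in the KS nonlinearity---and absorbing both via the coercivity $V_{2}\gtrsim |z|^{2}\geq |u|^{2}$ that comes for free once the integrator state is included in $z\in\mathbb{R}^{n+1}$.
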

%%%%%%%%%%%%%%%%%

\section{Conclusions and perspectives}
\label{sec: conclusion}
The present work establishes a framework for the stabilization of fourth-order parabolic equations with input saturation for internal and boundary feedback laws. Our approach relied on spectral decomposition and modal stabilization using LMIs. We established local exponential stability in the $L^{2}$ and $H^2$ norms. Using this high regularity decay, we prove that the fully nonlinear systems can be stabilized with saturated actuator.  A natural perspective for future research is to extend these results to boundary actuation without relying on spectral lifting arguments. This could be achieved by exploiting the theory of admissible control operators for analytic semigroups, which provides a direct way to handle unbounded boundary inputs. Such an approach would allow the development of a unified methodology applicable to  saturated stabilization for general parabolic operators, including the multidimensional case.

\appendix
\section{Estimates local in time well posedness}
\label{sec: app ltwp}
Let $w\in B_{\theta}$, then
\beq\label{eq: F est}
\begin{aligned}
\|\mathcal{F}(w)\|^{2}_{L^{2}(0,\theta;L^{2}(0,L))}&=\int_{0}^{\theta}\int_{0}^{L}\left(\sum_{k=1}^{m}b_{k}(x)\sat\left(\left(\mathbf{K}\iota^{-1}(\Pi_{n}w(\cdot,t))\right)\right)_{k}\right)^{2}dxdt \\
&\leq\sum_{k=1}^{m}\|b_{k}\|^{2}_{L^{2}(0,L)}\int_{0}^{\theta}\left(\sat\left(\left(\mathbf{K}\iota^{-1}(\Pi_{n}w(\cdot,t))\right)\right)_{k}\right)^{2}dt\\
&\leq \left(\sum_{k=1}^{m}\|b_{k}\|^{2}_{L^{2}(0,L)}\right)\|\mathbf{K}\|^{2}\int_{0}^{\theta}|\iota^{-1}(\Pi_{n}w(\cdot,t))|^{2}dt\\
&\leq C\theta\|w\|_{L^{2}(0,\theta;L^{2}(0,L))}^{2},\\
&\leq C\theta\|w\|^{2}_{B_{\theta}}.
\end{aligned}\eeq
For the nonlinear term $G$, we have
\[\|\mathcal{G}(w)\|_{L^{2}(0,\theta;L^{2}(0,L))}\leq \lambda\|\partial_{x}^{2}w\|_{L^{2}(0,\theta;L^{2}(0,L))}+\delta\|w\partial_{x}w\|_{L^{2}(0,\theta;L^{2}(0,L))}+\delta\|\partial_{x}^{2}(w^{3})\|_{L^{2}(0,\theta;L^{2}(0,L))},\]
we study these three terms separately
\begin{itemize}
    \item For the first one,
    \[\begin{aligned}
     \|\partial_{x}^{2}w\|_{L^{2}(0,\theta;L^{2}(0,L))}^{2}&=\int_{0}^{\theta}\int_{0}^{L}|\partial_{x}^{2}w|^{2}dxdt\\
    &\leq\theta\|w\|_{C^{2}([0,\theta];H^{2}(0,L))}^{2}\\
    &\leq\theta\|w\|_{B_{\theta}}^{2}.
    \end{aligned}\]

    \item For the second one, using the injection of $H^{1}(0,L)$ into $L^{\infty}(0,L)$
    \[\begin{aligned}
     \|w\partial_{x}w\|_{L^{2}(0,\theta;L^{2}(0,L))}^{2}&=\int_{0}^{\theta}\int_{0}^{L}|w\partial_{x}w|^{2}dxdt\\
    &\leq \int_{0}^{\theta}\|\partial_{x}w(t,\cdot)\|_{L^{\infty}(0,L)}^{2}\|w(t,\cdot)\|_{L^{\infty}(0,L)}^{2}dt\\
    &\leq \int_{0}^{\theta}\|w(t,\cdot)\|_{H^{2}(0,L)}^{4}dt\\
    &\leq C\theta\|w\|_{C^{2}([0,\theta];H^{2}(0,L))}^{4}\\
    &\leq C\theta\|w\|_{B_{\theta}}^{4}.
    \end{aligned}\]
    \item For the last one,
    \[\begin{aligned}
     \|\partial_{x}^{2}(w^{3})\|_{L^{2}(0,\theta;L^{2}(0,L))}^{2}=&\int_{0}^{\theta}\int_{0}^{L}|6w(\partial_{x}w)^{2}+3w^{2}\partial_{x}^{2}w|^{2}dxdt\\
    \leq& C\left(\int_{0}^{\theta}\int_{0}^{L}w^{2}(\partial_{x}w)^{4}dxdt+\int_{0}^{\theta}\int_{0}^{L}w^{4}(\partial_{x}^{2}w)^{2}dxdt\right)\\
    \leq& C\left(\int_{0}^{\theta}\|w(t,\cdot)\|_{L^{\infty}(0,L)}^{2}\|\partial_{x}w(t,\cdot)\|_{L^{\infty}(0,L)}^{4}dt\right.\\
    &\left.+\int_{0}^{\theta}\|w(t,\cdot)\|_{L^{\infty}(0,L)}^{4}\|\partial_{x}^{2}w(t,\cdot)\|_{L^{2}(0,L)}^{2}dt\right)\\
    \leq& C\int_{0}^{\theta}\|w(t,\cdot)\|_{H^{2}(0,L)}^{6}dt\\
    \leq& C\theta\|w\|_{C^{2}([0,\theta];H^{2}(0,L))}^{6}\\
    \leq& C\theta\|w\|_{B_{\theta}}^{6}.
    \end{aligned}\]
\end{itemize}

Joining all the estimates, we get
\beq\label{eq: G est}
\|\mathcal{G}(w)\|_{L^{2}(0,\theta;L^{2}(0,L))}\leq C\sqrt{\theta}\left(\|w\|_{B_{\theta}}+\delta\|w\|_{B_{\theta}}^{2}+\nu\|w\|_{B_{\theta}}^{3}\right).\eeq

\bibliographystyle{plain}
\bibliography{ref}

\begin{thebibliography}{10}

\bibitem{azmi2023saturated}
Behzad Azmi, Karl Kunisch, and S{\'e}rgio~S Rodrigues.
\newblock Saturated feedback stabilizability to trajectories for the
  {S}chl{\"o}gl parabolic equation.
\newblock {\em IEEE Transactions on Automatic Control}, 68(12):7089--7103,
  2023.

\bibitem{bs1992}
D.~Bainov and P.~Simeonov.
\newblock {\em Integral Inequalities and Applications}, volume~57 of {\em
  Mathematics and Its Applications}.
\newblock Springer, 1992.

\bibitem{bcgm2017}
Viorel Barbu, Pierluigi Colli, Gianni Gilardi, and Gabriela Marinoschi.
\newblock Feedback stabilization of the {C}ahn-{H}illiard type system for phase
  separation.
\newblock {\em Journal of Differential Equations}, 262(3):2286--2334, 2017.

\bibitem{boyd-1994}
Stephen Boyd.
\newblock {\em {Linear matrix inequalities in system and control theory}}.
\newblock Society for Industrial \& Applied, 1 1994.

\bibitem{cahn}
John Cahn.
\newblock On spinodal decomposition.
\newblock {\em Acta Metallurgica}, 9(9):795--801, 1961.

\bibitem{cahn-hilliard}
John Cahn and John Hilliard.
\newblock Free energy of a nonuniform system. {I}. {I}nterfacial free energy.
\newblock {\em Journal of Chemical Physics}, 28(2):258--267, 1958.

\bibitem{cahn-hilliard-2}
John Cahn and John Hilliard.
\newblock Spinodal decomposition. a reprise.
\newblock {\em Acta Metallurgica}, 19(2):151--161, 1971.

\bibitem{elipsoide}
Julien Calbert, Lucas Egidio, and Rapha\"el Jungers.
\newblock {An efficient method to verify the inclusion of ellipsoids}.
\newblock {\em IFAC-PapersOnLine}, 56(2):1958--1963, 1 2023.

\bibitem{cazacu2018null}
Cristian~M Cazacu, Liviu~I Ignat, and Ademir~F Pazoto.
\newblock Null-controllability of the linear {K}uramoto--{S}ivashinsky equation
  on star-shaped trees.
\newblock {\em SIAM Journal on Control and Optimization}, 56(4):2921--2958,
  2018.

\bibitem{Cerpa_2010}
Eduardo Cerpa.
\newblock Null controllability and stabilization of the linear
  {K}uramoto-{S}ivashinsky equation.
\newblock {\em Communications on Pure {\&} Applied Analysis}, 9(1):91--102,
  2010.

\bibitem{cerpa2013rapid}
Eduardo Cerpa and Jean-Michel Coron.
\newblock Rapid stabilization for a {K}orteweg-de {V}ries equation from the
  left {D}irichlet boundary condition.
\newblock {\em IEEE Transactions on Automatic Control}, 58(7):1688--1695, 2013.

\bibitem{Cerpa_2016}
Eduardo Cerpa, Patricio Guzm{\'{a}}n, and Alberto Mercado.
\newblock On the control of the linear {K}uramoto-{S}ivashinsky equation.
\newblock {\em {ESAIM}: Control, Optimisation and Calculus of Variations},
  23(1):165--194, nov 2016.

\bibitem{Chen_1986}
Liang-Heng Chen and Hsueh-Chia Chang.
\newblock Nonlinear waves on liquid film surfaces{\textemdash}{II}.
  {B}ifurcation analyses of the long-wave equation.
\newblock {\em Chemical Engineering Science}, 41(10):2477--2486, 1986.

\bibitem{Christofides_2000}
Panagiotis~D. Christofides and Antonios Armaou.
\newblock Global stabilization of the {K}uramoto{\textendash}{S}ivashinsky
  equation via distributed output feedback control.
\newblock {\em Systems {\&} Control Letters}, 39(4):283--294, apr 2000.

\bibitem{Christofides}
P.D. Christofides.
\newblock Feedback control of the {K}uramoto-{S}ivashinsky equation.
\newblock In {\em Proceedings of the 37th {IEEE} Conference on Decision and
  Control (Cat. No.98CH36171)}. {IEEE}, 2000.

\bibitem{cgm2021}
Pierluigi Colli, Gianni Gilardi, and Ionut Munteanu.
\newblock Stabilisation of a linearised {C}ahn-{H}illiard system for phase
  separation by proportional boundary feedbacks.
\newblock {\em International Journal of Control}, 91(2):452--460, 2021.

\bibitem{Colli2023}
Pierluigi Colli, Gianni Gilardi, Andrea Signori, and J{\"u}rgen Sprekels.
\newblock Optimal temperature distribution for a nonisothermal cahn--hilliard
  system with source term.
\newblock {\em Applied Mathematics and Optimization}, 88:Paper No. 68, 31,
  2023.

\bibitem{Colli2018}
Pierluigi Colli, Gianni Gilardi, and J{\"u}rgen Sprekels.
\newblock Optimal velocity control of a viscous cahn--hilliard system with
  convection and dynamic boundary conditions.
\newblock {\em SIAM Journal on Control and Optimization}, 56(3):1665--1691,
  2018.

\bibitem{coron2004global}
Jean-Michel Coron and Emmanuel Tr{\'e}lat.
\newblock Global steady-state controllability of one-dimensional semilinear
  heat equations.
\newblock {\em SIAM journal on control and optimization}, 43(2):549--569, 2004.

\bibitem{coron2006global}
Jean-Michel Coron and Emmanuel Tr{\'e}lat.
\newblock Global steady-state stabilization and controllability of 1d
  semilinear wave equations.
\newblock {\em Communications in Contemporary Mathematics}, 8(04):535--567,
  2006.

\bibitem{elliot}
Charles Elliott.
\newblock {\em The Cahn-Hilliard model for the kinetics of phase separation},
  volume~88 of {\em Mathematical Models for Phase Change Problems,
  International Series on Numerical Mathematics}.
\newblock Birkhauser, 1989.

\bibitem{Frigeri2016}
Sara Frigeri, Elisabetta Rocca, and J{\"u}rgen Sprekels.
\newblock Optimal distributed control of a nonlocal
  {C}ahn--{H}illiard/{N}avier--{S}tokes system in two dimensions.
\newblock {\em SIAM Journal on Control and Optimization}, 54(1):221--250, 2016.

\bibitem{guzman2020local}
Patricio Guzm{\'a}n.
\newblock Local exact controllability to the trajectories of the cahn--hilliard
  equation.
\newblock {\em Applied Mathematics \& Optimization}, 82(1):279--306, 2020.

\bibitem{guzman2019stabilization}
Patricio Guzm{\'a}n, Swann Marx, and Eduardo Cerpa.
\newblock Stabilization of the linear {K}uramoto-{S}ivashinsky equation with a
  delayed boundary control.
\newblock {\em IFAC-PapersOnLine}, 52(2):70--75, 2019.

\bibitem{guzman2020rapid}
Patricio Guzm{\'a}n and Christophe Prieur.
\newblock Rapid stabilization of a reaction-diffusion equation with distributed
  disturbance.
\newblock In {\em 2020 59th IEEE Conference on Decision and Control (CDC)},
  pages 666--671. IEEE, 2020.

\bibitem{Hooper_1996}
A.~P. Hooper and R.~Grimshaw.
\newblock Two-dimensional disturbance growth of linearly stable viscous shear
  flows.
\newblock {\em Physics of Fluids}, 8(6):1424--1432, jun 1996.

\bibitem{katz2020constructive}
Rami Katz and Emilia Fridman.
\newblock Constructive method for finite-dimensional observer-based control of
  1-d parabolic {PDE}s.
\newblock {\em Automatica}, 122:109285, 2020.

\bibitem{katz2021regional}
Rami Katz and Emilia Fridman.
\newblock Regional stabilization of the 1-d {K}uramoto--{S}ivashinsky equation
  via modal decomposition.
\newblock {\em IEEE Control Systems Letters}, 6:1814--1819, 2022.

\bibitem{Kuramoto_1975}
Y.~Kuramoto and T.~Tsuzuki.
\newblock On the formation of dissipative structures in reaction-diffusion
  systems: Reductive perturbation approach.
\newblock {\em Progress of Theoretical Physics}, 54(3):687--699, sep 1975.

\bibitem{Kuramoto_1976}
Y.~Kuramoto and T.~Tsuzuki.
\newblock Persistent propagation of concentration waves in dissipative media
  far from thermal equilibrium.
\newblock {\em Progress of Theoretical Physics}, 55(2):356--369, feb 1976.

\bibitem{lhachemi2023local}
Hugo Lhachemi.
\newblock Local output feedback stabilization of a nonlinear
  {K}uramoto-{S}ivashinsky equation.
\newblock {\em IEEE Transactions on Automatic Control}, 68(12):8165--8170,
  2023.

\bibitem{lhachemi-2023vp}
Hugo Lhachemi, Ionuţ Munteanu, and Christophe Prieur.
\newblock {Boundary output feedback stabilisation for 2-D and 3-D parabolic
  equations}.
\newblock {\em arXiv (Cornell University)}, 2 2023.

\bibitem{lhachemi2020feedback}
Hugo Lhachemi and Christophe Prieur.
\newblock Feedback stabilization of a class of diagonal infinite-dimensional
  systems with delay boundary control.
\newblock {\em IEEE Transactions on Automatic Control}, 66(1):105--120, 2020.

\bibitem{lhachemi2022local}
Hugo Lhachemi and Christophe Prieur.
\newblock Local output feedback stabilization of a reaction--diffusion equation
  with saturated actuation.
\newblock {\em IEEE Transactions on Automatic Control}, 68(1):564--571, 2022.

\bibitem{lhachemi2022nonlinear}
Hugo Lhachemi and Christophe Prieur.
\newblock Nonlinear boundary output feedback stabilization of
  reaction--diffusion equations.
\newblock {\em Systems \& Control Letters}, 166:105301, 2022.

\bibitem{lhachemi2023saturated}
Hugo Lhachemi and Christophe Prieur.
\newblock Saturated boundary stabilization of partial differential equations
  using control-lyapunov functions.
\newblock {\em Control of Partial Differential Equations}, pages 81--164, 2023.

\bibitem{liu2001stability}
Wei-Jiu Liu and Miroslav Krsti{\'c}.
\newblock Stability enhancement by boundary control in the
  {K}uramoto--{S}ivashinsky equation.
\newblock {\em Nonlinear Analysis: Theory, Methods \& Applications},
  43(4):485--507, 2001.

\bibitem{marinoschi2018internal}
G~Marinoschi.
\newblock Internal feedback stabilization of cahn--hilliard system with
  viscosity effect.
\newblock {\em Pure Appl. Funct. Anal}, 3(1):107--135, 2018.

\bibitem{marx2017global}
Swann Marx, Eduardo Cerpa, Christophe Prieur, and Vincent Andrieu.
\newblock Global stabilization of a {K}orteweg--de {V}ries equation with
  saturating distributed control.
\newblock {\em SIAM Journal on Control and Optimization}, 55(3):1452--1480,
  2017.

\bibitem{melendez2013lipschitz}
Patricio~Guzm{\'a}n Mel{\'e}ndez.
\newblock Lipschitz stability in an inverse problem for the main coefficient of
  a {K}uramoto--{S}ivashinsky type equation.
\newblock {\em Journal of Mathematical Analysis and Applications},
  408(1):275--290, 2013.

\bibitem{ms}
Daniel Michelson and Gregory Sivashinsky.
\newblock Nonlinear analysis of hydrodynamic instability in laminar flames ii:
  Numerical experiments.
\newblock {\em Acta Astronautica}, 4(11-12):1207--1221, 1977.

\bibitem{m2019}
Alain Miranville.
\newblock {\em The Cahn-Hilliard Equation: Recent Advances and Applications},
  volume~95 of {\em CBMS-NSF Regional Conference Series in Applied
  Mathematics}.
\newblock Society for Industrial and Applied Mathematics, 2019.

\bibitem{Mironchenko_2020}
Andrii Mironchenko, Christophe Prieur, and Fabian Wirth.
\newblock Local stabilization of an unstable parabolic equation via saturated
  controls.
\newblock {\em {IEEE} Transactions on Automatic Control}, pages 1--1, 2020.

\bibitem{munteanu2019}
Ionut Munteanu.
\newblock {\em Boundary Stabilization of Parabolic Equations}, volume~93 of
  {\em Progress in Nonlinear Differential Equations and Their Applications,
  PNLDE Subseries in Control}.
\newblock Birkhauser, 2019.

\bibitem{novick}
Amy Novick-Cohen.
\newblock {\em The Cahn-Hilliard equation}, volume~4 of {\em Handbook of
  Differential Equations: Evolutionary Equations}.
\newblock Elsevier, 2008.

\bibitem{parada2022stability}
Hugo Parada.
\newblock Stability analysis of a {K}orteweg-de vries equation with saturated
  boundary feedback.
\newblock {\em IFAC-PapersOnLine}, 55(26):1--6, 2022.

\bibitem{paradakdvsat}
Hugo Parada, Emmanuelle Cr{\'e}peau, and Christophe Prieur.
\newblock Global well-posedness of the {K}d{V} equation on a star-shaped
  network and stabilization by saturated controllers.
\newblock {\em SIAM Journal on Control and Optimization}, 60(4):2268--2296,
  2022.

\bibitem{parada2023stabilization}
Hugo Parada and Subrata Majumdar.
\newblock Stabilization of {K}awahara equation with input saturation and
  saturated boundary feedback.
\newblock Accepted in Mathematical Control and Related Fields, 2024.

\bibitem{Prieur_2016}
Christophe Prieur, Sophie Tarbouriech, and Joao M.~Gomes da~Silva.
\newblock Wave equation with cone-bounded control laws.
\newblock {\em {IEEE} Transactions on Automatic Control}, 61(11):3452--3463,
  nov 2016.

\bibitem{pukelsheim-1993}
Friedrich Pukelsheim.
\newblock {Optimal design of experiments}.
\newblock {\em Choice Reviews Online}, 31(02):31--0986, 10 1993.

\bibitem{s}
Gregory Sivashinsky.
\newblock Nonlinear analysis of hydrodynamic instability in laminar flames i:
  Derivation of basic equations.
\newblock {\em Acta Astronautica}, 4(11-12):1177--1206, 1977.

\bibitem{tarbouriech2011stability}
Sophie Tarbouriech, Germain Garcia, Jo{\~a}o Manoel~Gomes da~Silva~Jr, and
  Isabelle Queinnec.
\newblock {\em Stability and stabilization of linear systems with saturating
  actuators}.
\newblock Springer Science \& Business Media, 2011.

\end{thebibliography}

%%%%%%%%%%%%%%%%%%%%%%%%%%%%%%%%%%%%%%%%%%%%%%%%%%%%
\end{document}